\documentclass[12pt,letterpaper]{article}

\usepackage[dvipsnames]{xcolor}
\usepackage[margin=1in,footskip=0.5in]{geometry}
\usepackage[T1]{fontenc}
\usepackage{verbatim}
\usepackage{amsthm}
\usepackage{amsmath}
\usepackage{amssymb}
\usepackage{caption}
\usepackage{natbib}
\usepackage{bbm}
\usepackage{setspace}
\usepackage{subfigure}
\usepackage{xcolor}
\usepackage{graphics}
\usepackage{epsfig}
\usepackage{geometry}
\usepackage{accents}

\usepackage{enumitem}
\usepackage{setspace}
\usepackage{array}
\newcolumntype{x}[1]{>{\centering\arraybackslash\hspace{0pt}}p{#1}}
\usepackage{tikz}
\usetikzlibrary{patterns}
\usetikzlibrary{patterns.meta}
\usepackage{pifont}
\usepackage{avant}

\usepackage[hypertexnames=false]{hyperref}

\newcolumntype{L}[1]{>{\raggedright\let\newline\\\arraybackslash\hspace{0pt}}m{#1}}
\newcolumntype{C}[1]{>{\centering\let\newline\\\arraybackslash\hspace{0pt}}m{#1}}

\makeatletter \renewenvironment{proof}[1][\proofname]
{\par\pushQED{\qed}\normalfont\topsep6\p@\@plus6\p@\relax\trivlist\item[\hskip\labelsep\bfseries#1\@addpunct{.}]\ignorespaces}{\popQED\endtrivlist\@endpefalse} \makeatother

\newtheorem{theorem}{Theorem}

\newtheorem{assumption}{Assumption}

\newtheorem{corollary}{Corollary}

\theoremstyle{definition}

\theoremstyle{plain}

\newtheorem{lemma}{Lemma}

\newtheorem{proposition}{Proposition}
\theoremstyle{remark}

\theoremstyle{plain}

\usepackage{multirow}
\usepackage{chngcntr}
\usepackage{apptools}
\AtAppendix{\counterwithin{lemma}{section}}
\AtAppendix{\counterwithin{proposition}{section}}
\AtAppendix{\counterwithin{theorem}{section}}
\AtAppendix{\counterwithin{corollary}{section}}

\definecolor{uncblue}{RGB}{75, 165, 211}
\definecolor{nberblue}{RGB}{0, 90, 155}
\definecolor{resgreen}{RGB}{34, 84, 67}
\definecolor{pennblue}{RGB}{0, 44, 119}
\definecolor{pennred}{RGB}{152, 30, 50}

\hypersetup{
colorlinks=true,
linkcolor=pennblue,
urlcolor=nberblue,
citecolor=pennblue}
\begin{document}
\newcommand{\reals}{\ensuremath{\mathbb{R}}}
\newcommand{\R}{\ensuremath{\mathbb{R}}}
\newcommand{\Z}{\ensuremath{\mathbb{Z}}}
\newcommand{\E}{\ensuremath{\mathbb{E}}}
\newcommand{\expect}{\ensuremath{\mathbb{E}}}
\newcommand{\dint}{\ensuremath{\,\mathrm{d}}}
\newcommand{\signal}{\ensuremath{s}}
\newcommand{\shirk}{\ensuremath{y}}
\newcommand{\signals}{\ensuremath{\mathbf{s}}}
\newcommand{\shirks}{\ensuremath{\mathbf{y}}}

\newcommand{\capa}{\ensuremath{k}}
\newcommand{\capacost}{\ensuremath{c}}
\newcommand{\outside}{\ensuremath{\mathcal{O}}}

\newcommand{\qmax}{\ensuremath{q_{max}}}
\newcommand{\subsidy}{\ensuremath{s}}

\newcommand{\fee}{\ensuremath{m}}

\newcommand{\type}{\ensuremath{\theta}}

\newcommand{\intfullsupport}{\ensuremath{\int_{\underline{\type}}^{\overline{\type}}}}

\newcommand{\typespace}{\ensuremath{{[\underline{\type}},{\overline{\type}}]}}

\newcommand{\menu}{\ensuremath{\mathcal{K}}}
\newcommand{\kfloor}{\ensuremath{\mathcal{K}^{\text{floor}}}}

\title{Regulating a Monopolist through Capacity Control}

\author{Dihan Zou\thanks{Economics Department, University of North Carolina at Chapel Hill. Email: \href{mailto:dihan@email.unc.edu}{\texttt{dihan@email.unc.edu}}.\newline
I am deeply indebted to Gary Biglaiser, Yunzhi Hu, Peter Norman, Can Urgun, and especially Fei Li, for their unfailing encouragement and guidance. I am grateful to Roberto Corrao,
Paul Heidhues,
Drew van Kuiken,
Qingmin Liu, Jo\~{a}o Thereze, Huseyin Yildirim,
and Nathan Yoder for their helpful comments. All errors are my own.
}}
\date{\today
\\ \href{https://drive.google.com/file/d/1u5heWZuM7NFp0ftcx7xA0FY-2vVUrnyW/view?usp=sharing}{Click here for the latest version}
}

\maketitle

\begin{abstract}
We study monopoly regulation when capacity is contractible but subsequent output is not. A firm privately observes its production cost before installing costly capacity and then choosing output, while the regulator cannot use transfers.
We provide sufficient conditions under which a single capacity floor implements an optimal mechanism with full participation. The floor raises output by changing investment while leaving utilization to the firm after installation costs are sunk. Under additional conditions, the allocation has three regions: laissez-faire, full utilization of the floor, and underutilization. High-cost types knowingly install capacity they will not fully use, yet produce more than under laissez-faire. This physical waste is a necessary evil of optimal regulation that expands service. 
We also show that underutilization can persist in optimal regulation with capacity-contingent subsidies or a common entry fee.
\end{abstract}

\begin{description}
{\small
\item[Keywords:] monopoly regulation, capacity, non-contractible output, delegation, underutilization

\item[JEL Classification Codes:] L43, L5, D82
}

\end{description}

\newpage



\section{Introduction}
Monopoly regulation is challenging when the production cost is unknown. The optimal policy can depend on the form of information asymmetry and
restrictions on policy instruments.
Classical models allow direct regulation of output through price controls, quantity controls, or incentive schemes (see \cite{ArmstrongSappington07HIO}).
In practice, the regulator may instead be able to contract only on installed production capacity, while subsequent output remains outside the scope of the contract. This distinction creates a novel friction in the context of monopoly regulation: the ability to require a facility does not imply the ability to require every unit of service that the facility could provide. How should a privately informed monopolist be regulated when the available instrument governs costly capacity, but the firm's eventual output remains outside the contract?

Several institutional arrangements illustrate this separation. In the Alaska-Hawaiian merger, an agreement governing the U.S. Department of Transportation's approval of transferred international route authorities requires the combined carrier to maintain minimum capacity on specified overlap routes, where it has substantial market power. Domestic airline fares are legally deregulated, and the relevant commitment is expressed in scheduled flights or seats, rather than passengers carried.\footnote{
The commitment lasts for six
years and defines the minimum as at least 90 percent of the number of flights
or 90 percent of the number of seats operated in the year-to-date period
through August 31, 2024, subject to the agreement's exceptions. See
\url{https://www.transportation.gov/sites/dot.gov/files/2025-06/Alaska_Hawaiian_merger_agreement.pdf}.
Domestic airline fares are deregulated; see
\url{https://www.transportation.gov/policy/aviation-policy/airline-rules-fares}.
}
A 2021 amendment to New York City's cable franchise with Verizon required 500,000 additional residential units to reach ``Standard Installation'' status over 2020--2023. This obligation concerned service availability, rather than the number of households that ultimately subscribed.\footnote{
The franchise also contains reporting requirements and remedies. See
\url{https://www.nyc.gov/assets/oti/downloads/pdf/cable-tv-franchises/verizon-cable-franchise-amendment-2021.pdf}.
} 
For a station-front redevelopment near Osaka's Shin-Imamiya Station, the city required the selected developer to build a hotel with at least 400 guest rooms. The relevant project conditions specified room capacity and continued hotel use, but did not condition compliance on room rates or realized occupancy.\footnote{
The conditions also required the guest rooms to average at least 30 square
meters, the hotel to open within six years, and hotel use to continue for ten
years. See
\url{https://www.city.osaka.lg.jp/seisakukikakushitsu/cmsfiles/contents/0000406/406626/280728_keiyakukousyoujian.pdf}
and
\url{https://www.city.osaka.lg.jp/seisakukikakushitsu/page/0000406626.html}.
}
These examples motivate the distinction between capacity and use. In many
regulatory settings, realized output may be observable yet remain outside the
relevant contractual instrument, while capacity is readily specified and
verified.

We study optimal regulation under this instrument restriction in the asymmetric information framework of
\citet{BaronMyerson82}.
A monopolist privately observes its marginal production cost (its ``type''), installs costly capacity, and then chooses output up to that capacity. The regulator values consumer surplus and the firm's profit. In the baseline, transfers are unavailable and regulation consists of a menu of permissible capacities, including the option not to operate. The firm selects its preferred capacity from this menu with full knowledge of its cost and of its subsequent output choice.

Capacity regulation works through a wedge between the firm's investment and
production decisions. When choosing unrestricted (i.e., \textit{laissez-faire}) capacity, the firm accounts
for both production and capacity costs. Once capacity has been installed,
that expenditure is sunk: producing another unit uses available capacity
without incurring another unit of installation cost. A capacity floor can
therefore raise output above laissez-faire by making the firm install more
than it would privately choose. However, this effect has a limit. 
Even when capacity is fully utilized, further expansion can lower welfare for some types if the loss in firm profit outweighs the consumer-surplus gain in the regulator's objective. Once the floor exceeds the firm's preferred continuation output, further investment no longer expands service.
The regulator must therefore weigh the output gains from a common requirement against the costs of both fully utilized expansion and unused
capacity.

Our first result identifies sufficient conditions under which a single capacity floor implements an optimal mechanism  with every type participating
(\autoref{thm_capa_floor}). The regulator thus uses only a simple minimum investment requirement, despite being allowed to offer richer menus.
Types whose preferred capacities exceed the floor remain at laissez-faire;
the floor constrains the remaining types. The sufficient conditions require concave
consumer surplus and prevent the density of high-cost types from declining
too quickly. They order the gains from expanding output toward types that
would otherwise produce less and support intervention through a lower bound.
Under these conditions,
the theorem also shows that the benefits of such intervention can be obtained without excluding any type. 
The optimal capacity floor therefore resolves a trade-off only along the intensive margin: too much capacity for inefficient high-cost types versus too little capacity for efficient low-cost types. Here, excessive capacity can take the form of costly expansion under full
utilization or investment in unused capacity.

Our second result establishes conditions under which an optimal floor is
underutilized by a positive measure of high-cost types
(\autoref{prop_underuse}). An additional condition produces three
positive-length regions. Low-cost types choose their laissez-faire capacity
above the floor. Intermediate-cost types install the floor and fully use it.
High-cost types install the same floor but produce less. The types affected by regulation produce more than under laissez-faire: regulation has relaxed
their investment constraint, even though it cannot push their output beyond
their continuation monopoly output. A low utilization rate therefore need not mean
that the policy has failed to expand service.

The unused capacity is a real cost of implementing the policy. If the
regulator could simply reduce each slack firm's requirement while preserving
everyone else's choices, it would avoid this waste. Private information renders
that operation infeasible. A lower capacity offered to a high-cost type is
also available to other types, and changes their incentives to comply with
the larger requirement. Without transfers, the regulator cannot freely
separate the financial burden of a capacity assignment from the capacity
itself. Under the stated conditions, the best common floor consequently
imposes slack on some types while inducing greater service from others. The
result concerns predictable, optimal underutilization: cost is known when the firm
invests, and no subsequent demand shock is needed to leave capacity idle. Therefore, underutilization is a necessary evil that accompanies optimal regulation that expands service.

The extensions distinguish two forms of fiscal authority. When subsidies
may depend on capacity but taxes are infeasible, we characterize an
optimal allocation that reimburses low-cost firms for capacity
above a common floor (\autoref{prop_subsidy}). These firms install enough to
produce their continuation monopoly outputs. High-cost firms receive no
subsidy and underutilize the floor. Under the proposition's conditions, the
subsidized lower branch has probability equal to the regulator's weight on
firm profit, and the underutilizing types account for the complementary probability. This is a separate optimal mechanism in an enlarged instrument
set: the capacity requirement remains relevant when capacity-contingent subsidies are allowed. Subsidies do not fully replace the capacity floor's screening role.

A common entry fee affects the extensive margin
(\autoref{prop_entry_fee}). The firm learns its cost before deciding whether
to pay for access to the entire capacity menu. For a fixed menu, the fee
does not affect comparisons between capacities but screens participation. An
optimal policy can be represented by an entry cutoff and a floor. Their
joint choice matters because tightening the floor lowers the marginal
entrant's profit and hence the fee that can be collected from every entrant.
Underutilization can persist, but its sufficient condition accounts for this
lost revenue. The comparison distinguishes authority to collect an access
payment from authority to adjust payments across capacity choices.

\paragraph{Related Literature.}
This paper contributes to three strands of literature: monopoly regulation
under asymmetric information, mechanism design with noncontractible actions,
and money burning in delegation.

\smallskip

\noindent{\it Monopoly Regulation under Asymmetric Information.}
Our analysis contributes to the design of simple regulatory policies when
the available instruments are restricted. The canonical screening benchmark
regulates price or output using transfers \citep{BaronMyerson82}.
Simple policy features can arise when transfers remain available.
Using a minimax-regret criterion, \citet{GuoShmaya25RobustRegu} derive
policies that combine average-revenue caps and bounded subsidies.
Taking a prior-free approach, \citet{MishraPatil2025JET} show that
undominated mechanisms have a floor on operating output.
For consumption externalities, \citet{Kang26RobustRegulation} shows that
quantity controls can maximize worst-case welfare even when nonlinear
pricing is available, if the association between preferences and marginal
externalities is unknown.

A complementary line of research examines restrictions on regulatory
instruments, often prescribing simple policies. \citet{BFV17RANDPriceQuantity} compare price and quantity
regulation, with transfers available under either instrument.
\citet{WeiZou26} show that a progressive, soft price cap can be optimal when
taxes are feasible but subsidies are prohibited. Our baseline excludes
transfers and allows the regulator to contract only on capacity.
\citet{AM08RES} characterize optimal delegation and apply their analysis to
monopoly regulation without transfers. Our capacity-menu problem belongs
to this literature because the regulator chooses the firm's permissible
capacities. \citet{AmadorBagwell22Regulation} provide an important benchmark
for contractible-output regulation without transfers, establishing
conditions for optimal price-cap regulation with ex-post participation and
possible exclusion.\footnote{\citet{KolotilinZapechelnyuk25} connect
persuasion and delegation and use this connection to derive further
conditions for optimal price-cap regulation with voluntary participation.}
Their price cap is equivalent to a floor on output
conditional on operation. In our model, a capacity floor leaves the firm free
to choose a smaller continuation output, so pooling in installed capacity
need not imply pooling in service.

Closest to our approach is \citet{Xia25}, who also studies no-transfer
regulation when the regulator controls only part of the firm's decision
problem. His regulator controls allowable qualities and coverage while
leaving the firm to choose its price--quality menu subject to those
restrictions.\footnote{
Related work studies restrictions on downstream contracting. 
In a dynamic matching market, \citet{BhaskarMcClellan26} derive optimal restrictions on bilateral contracts that link permitted termination fees to wages. 
\citet{TherezeVaidya26} characterize the outcomes implementable when a principal restricts the contract menus an informed delegate may offer a privately informed agent.
}
Quality and coverage rules constrain delivered service. Our contractible object is instead costly installed capacity, whose subsequent utilization remains the firm's choice.

\smallskip

\noindent{\it Mechanism Design with Noncontractible Actions.}
Our analysis also relates to mechanism design in which agents'
noncontractible choices interact with private information.
In their seminal contribution, \citet{LaffontTirole86JPE}
study hidden cost-reduction effort with monetary contracts contingent on
output and observed accounting costs. \citet{Krahmer25Flexible} studies
monopoly regulation with covert investment in the distribution of production
costs before the firm privately observes its realized cost.
\citet{CamboniCarnehl25} study certification based on verifiable inputs and
outputs to shape investment incentives through market beliefs.
Our model features ex post moral hazard: the firm knows its production cost
before installing contractible capacity and then chooses noncontractible
output after installation costs are sunk.\footnote{Related work studies
price caps and complementary capacity payments under demand uncertainty
\citep{LemusMoreno17}, optimal network expansion with price and revenue
instruments \citep{WillemsZwart18}, and investment-contingent taxation when
resource extraction is unobservable \citep{Osmundsen02}. In contrast, our
work establishes conditions under which a single capacity floor with unused
capacity arises under optimal regulation. This slack reflects screening
incentives without requiring demand surprises.}
\citet{CorraoFlynnSastry23} analyze nonlinear pricing when buyers can consume
less than they purchase and sellers' payoffs depend on usage. Their
underutilization concerns a contractual allotment: with monotone provision
costs, the seller need not produce unused goods. Our slack instead consists
of installed physical capacity whose resource cost has already been incurred.

\smallskip

\noindent{\it Money Burning in Delegation.}
Methodologically, our analysis builds on the delegation approach initiated
by \citet{Holmstrom77}. \citet{AmadorWerningAngeletos06} and \citet{AB13ECMA}
develop methods for identifying optimal interval restrictions.
\citet{AmbrusEgorov17} and \citet{AmadorBagwell20MoneyBurning} show that costly
nonmonetary actions can be useful incentive instruments in delegation.
\citet{PatelUrgun25} combine verification and money burning in allocation
without transfers. A closely related contribution is
\citet{GonzalezMorinMartimort26}, who study procurement when the principal
privately observes a nonverifiable signal about the agent's cost.
Money burning can complement transfers by supporting credible
signal-dependent payments. In our setting, costly capacity requirements
provide screening incentives when transfers are restricted, linking
productive intervention and resource burning through the same physical
requirement.

\paragraph{Organization.}
Section \ref{sec_model} presents the environment. Section \ref{sec_preliminary} derives preliminary results.
Section \ref{sec:optimal} characterizes the optimal floor and its
underutilization. Section \ref{sec_extension} studies the joint design of capacity control and monetary incentives. Section \ref{sec_conclusion} concludes. All proofs are in the appendix.

\section{Environment}\label{sec_model}

Consider a single-product market with inverse demand $P(q)$, where
$q\in[0,\qmax]$ denotes output.  A monopolist first installs capacity
$\capa\geq0$ and then chooses output
$q\in[0,\min\{\capa,\qmax\}]$, equivalently setting the market price
$P(q)$.  The firm has constant marginal production cost
$\type\in[\underline{\type},\overline{\type}]\subseteq\mathbb R_+$, drawn
from a distribution $F$ with density $f$.  The realization of $\type$ is
privately observed by the firm.  Installing capacity $\capa$ costs
$\capacost(\capa)$. The firm's outside option is zero.

The regulator chooses a menu of permissible capacities but cannot prescribe or
contract on output.  No transfer between the regulator and the firm is allowed
in the baseline model.\footnote{
We relax this assumption in Section~\ref{sec_extension}.
}  
Let $V(q):=\int_0^qP(x)\dint x$ denote gross consumer value. Note that $qP(q)$ is the firm's revenue. So consumer
surplus is $V(q)-qP(q)$.  If a type-$\type$ firm installs capacity $\capa$
and produces output $q$, ex post welfare is
\[
W(q, \capa; \type) := V(q)-qP(q)+\alpha\Pi(q, \capa; \type),
\]
where $\alpha\in[0,1]$ is the welfare weight on firm profit
$\Pi(q, \capa; \type)=qP(q)-\type q-\capacost(\capa)$.  Equivalently, the ex post welfare can be decomposed into a term representing the net social gains from trade and a term accounting for the redistribution motive, i.e., 
\[
W(q, \capa; \type)  =  \underbrace{V(q)-\type q-\capacost(\capa)}_{\text{net social gains from trade}}\quad -\underbrace{(1-\alpha)\Pi(q, \capa; \type)}_{\text{penalty on firm profits}}.
\]

The regulator commits to a closed capacity menu $\menu$ satisfying
$0\in\menu\subseteq\R_+$.  After observing its type $\type$, the firm selects
$\capa_\type\in\menu$ and then chooses
\[
q_\type\in\arg\max_{q\in[0,\min\{\capa_\type,\qmax\}]}
\{qP(q)-\type q\}.
\]
The capacity cost is sunk at the output stage.  Anticipating these choices, the
regulator maximizes $\E_F[W(q_\type,\capa_\type;\type)]$.

The timing separates the decision the regulator can discipline from the decision it ultimately cares about. 
A capacity requirement can rule out small installed capacities, but it cannot compel the firm to use every unit it installs.
Once installation is complete, the
firm compares marginal revenue with production cost alone. Throughout, noncontractibility
refers to this restriction on the regulator’s instruments: output may be observed in practice without being a variable on which the relevant agreement can condition an obligation
or payment. The regulator can enforce the chosen capacity, but cannot supplement that
requirement with an output obligation.

There is no uncertainty after the firm learns its type. In particular, if a firm installs capacity it does not fully use, the firm anticipates that outcome when it invests. Capacity also has
no direct social value in the objective: its benefit comes entirely through production. These
features isolate the incentive role of physical slack from precautionary investment and from
a social preference for reserve capacity. The outside option remains available because $0 \in \menu$.

\medskip

The following assumptions give unique interior operating and investment choices under
laissez faire and permit comparison across cost types.

\begin{assumption}\leavevmode\label{asmp:maintained}
\begin{enumerate}[label=(\roman*)]
\item The inverse demand $P$ is twice continuously differentiable on $(0, \qmax)$, with a finite $\qmax > 0$, and satisfies $P'(q)<0$.
\item The revenue function $qP(q)$ is strictly concave on $[0,\qmax]$,
i.e., $(qP(q))''<0$.  Both $qP(q)$ and
$V(q)=\int_0^qP(x)\dint x$ are finite for every $q>0$ and extend
continuously to zero, with $qP(q)|_{q=0}=V(0)=0$.
\item The capacity cost $\capacost\colon[0,\infty)\to\R_+$ is twice
continuously differentiable, with $\capacost(0)=0$,
$\capacost'(0)\geq0$, $\capacost'(\capa)>0$ for $\capa>0$, and
$\capacost''(\capa)>0$ for $\capa>0$.  Moreover,
\[
\lim_{q\downarrow0}(qP(q))' > 
\overline{\type} + \capacost'(0), \quad 
\lim_{q\uparrow\qmax} (qP(q))'<\underline{\type}.
\]
\item The density $f$ is absolutely continuous and satisfies $f(\type) > 0$ for all $\type \in \typespace$.
\end{enumerate}
\end{assumption}
\autoref{asmp:maintained} has two immediate implications.  First, the firm's
output choice is unique conditional on its capacity choice and its type. Define the \emph{continuation output} by
\[
q^c(\capa,\type) := \arg\max_{q \in [0, \min\{\capa, \qmax\}]}\{qP(q) - \type q\},
\]
where the capacity cost $\capacost(\capa)$ is sunk.  Let
$q^c(\type):=q^c(+\infty,\type)$ denote the unconstrained continuation
output.  Strict concavity of revenue implies this choice uniquely\footnote{
See Appendix \ref{apx:q-m-q-lf} for details.
} satisfies the interior optimal condition
$(qP(q))'=\type$.  Substituting the firm’s output choice gives its market profit and the regulator’s welfare as
\[
\pi(\capa, \type) := \Pi(q^c(\capa, \type), \capa; \type) =  q^c(\capa, \type) P(q^c(\capa, \type)) - \type q^c(\capa, \type)- \capacost(\capa),
\]
and
\[
w(\capa, \type):= W(q^c(\capa, \type), \capa; \type) = V(q^c(\capa, \type)) - \type q^c(\capa, \type) - \capacost(\capa) - (1-\alpha) \pi(\capa, \type).
\]

Second, \autoref{asmp:maintained} (iii) ensures that every type participates under laissez-faire, when there is no restriction on 
permissible capacities ($\menu=\R_+$).  This makes the entire type space relevant
for policy design. Indeed, when the only instrument available is capacity control,  types that do not participate even without regulation will not participate under any regulatory policy.\footnote{
That is, if some types were inactive even under laissez-faire, one could instead truncate
the type space without changing the policy for active types.
}
Under the laissez-faire policy $\menu = \R_+$, 
each type chooses tight capacity $\capa=q$ and therefore internalizes the
capacity cost.  Define its \emph{laissez-faire output} by
\[
q^{LF}(\type):= \arg\max_{q \in [0, \qmax]} \{qP(q) - \type q - \capacost(q)\}.
\]
Under \autoref{asmp:maintained}, $q^{LF}(\type)$ uniquely satisfies
$(qP(q))'-\capacost'(q)=\type$.  The continuation choice
excludes the already sunk capacity cost, so
$q^{LF}(\type)<q^c(\type)$.  Capacity control can therefore raise output by
making the firm's installed capacity exceed its laissez-faire choice.

\paragraph{Regulator's Problem.} By the revelation principle, we can restrict
attention to direct mechanisms $(\capa(\type),\Pi(\type))$, where
$\capa(\type)$ is assigned following a truthful report and $\Pi(\type)$ is
the corresponding profit.\footnote{
We restrict attention to admissible mechanisms whose capacity schedule
\(\capa(\cdot)\) is continuously differentiable on the intervals of a finite
partition of \(\typespace\), with jumps permitted at partition points.
}
A type $\type$ that reports $\hat{\type}$ earns
$\pi(\capa(\hat{\type}),\type)$.  Because transfers are unavailable, truthful
profit must satisfy
\begin{equation}
    \Pi(\type) = \pi(\capa(\type), \type), \label{N}\tag{N}
\end{equation}
so the firm's payoff equals the producer surplus generated in the market.  A
direct mechanism is feasible if, for all $\type,\hat{\type}$, the firm with type $\theta$ has no incentive to misreport as type $\hat{\theta}$, i.e., \[
\Pi(\type) \geq \pi(\capa(\hat{\type}), \type), \label{IC}\tag{IC}
\]
and the firm makes nonnegative profit when truthfully reporting, i.e., 
\[
\Pi(\type) \geq 0. \label{IR}\tag{IR}
\]
The regulator's problem is therefore
\begin{equation}
      \max_{\capa(\cdot), \Pi(\cdot)} \intfullsupport w(\capa(\type); \type) \dint F(\type) \label{direct}\tag{P}
 \end{equation}
subject to \eqref{IC}, \eqref{IR}, and \eqref{N}.

Appendix \ref{apx} gives an equivalent optimal-control formulation of \eqref{direct}. 

\section{Preliminary Analysis} \label{sec_preliminary}
 A mechanism is
\emph{feasible} if it satisfies \eqref{IC}, \eqref{IR}, and \eqref{N}.  These constraints order not only production but also the physical capacity needed to support it.

\begin{lemma}\label{lem:monotone-welfare}
In any feasible mechanism, capacity $\capa(\type)$, induced output
$q^c(\capa(\type),\type)$, truthful profit $\Pi(\type)$, and ex post
welfare $w(\capa(\type),\type)$ are weakly decreasing in type.
\end{lemma}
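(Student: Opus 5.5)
The plan is to get everything from the structure of $\pi(\capa,\type)$, and in particular from how it depends on $\type$. By Assumption~\ref{asmp:maintained}, $q^c(\capa,\type)=\min\{\capa,q^c(\type)\}$, where $q^c(\type)$ is strictly decreasing in $\type$ because $(qP(q))'=\type$ and revenue is strictly concave. Let $R(\capa,\type):=\max_{q\le \min\{\capa,\qmax\}}\{qP(q)-\type q\}$. By the envelope theorem, $\partial R/\partial\type=-q^c(\capa,\type)$. Hence, for any fixed $\capa$,
\[
\pi(\capa,\type)-\pi(\capa,\type')=\int_\type^{\type'} q^c(\capa,t)\dint t .
\]
This identity is the basic tool. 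Because $q^c(\capa,t)$ is weakly increasing in $\capa$, $\pi$ has weakly decreasing differences in $(\capa,\type)$.

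\textbf{Capacity.} Take $\type<\type'$ and add the two \eqref{IC} constraints (each type against the other's report), using \eqref{N}. This gives
\[
\int_\type^{\type'}\bigl[q^c(\capa(\type),t)-q^c(\capa(\type'),t)\bigr]\dint t\ \ge 0 .
\]
Suppose, for contradiction, that $\capa(\type')>\capa(\type)$. Then the integrand is $\le 0$ pointwise, so it must vanish on $[\type,\type']$. By continuity the integrand also vanishes at $t=\type$, which forces $\min\{\capa(\type'),q^c(\type)\}=\min\{\capa(\type),q^c(\type)\}$, so $\capa(\type)\ge q^c(\type)>q^c(\type')$. Both capacities therefore exceed type $\type'$'s unconstrained continuation output, so type $\type'$ earns the same market revenue net of production cost under either capacity. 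But $\capacost$ is strictly increasing, so $\capacost(\capa(\type'))>\capacost(\capa(\type))$, and type $\type'$ would strictly gain by reporting $\type$. This violates \eqref{IC}, so $\capa(\cdot)$ is weakly decreasing. This step is the main obstacle: decreasing differences are only weak (they fail to be strict once capacity is slack), so standard monotone comparative statics give monotonicity only up to ties. The strictly increasing capacity cost is what breaks the tie.

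\textbf{Output and profit.} Output is $q^c(\capa(\type),\type)=\min\{\capa(\type),q^c(\type)\}$. This is a minimum of two weakly decreasing functions, so it is weakly decreasing. For profit, take $\type<\type'$. Then \eqref{IC} and the identity above (applied at the fixed capacity $\capa(\type')$) give
\[
\Pi(\type)\ge\pi(\capa(\type'),\type)\ge\pi(\capa(\type'),\type')=\Pi(\type').
\]

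\textbf{Welfare.} Write $w(\capa(\type),\type)=[V(q)-qP(q)]+\alpha\Pi(\type)$ with $q=q^c(\capa(\type),\type)$. Consumer surplus has derivative $-qP'(q)\ge0$, so it is increasing in $q$, and $q$ is weakly decreasing in $\type$; hence consumer surplus is weakly decreasing in type. The second term $\alpha\Pi(\type)$ is weakly decreasing since $\alpha\ge0$. Their sum, welfare, is therefore weakly decreasing in type, which completes the plan.
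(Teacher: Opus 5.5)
Your proposal is correct and follows essentially the same route as the paper. Capacity monotonicity comes from summing the two pairwise incentive constraints, forcing equal outputs on the intervening interval, and then letting the strictly increasing capacity cost break the tie; this is the argument in the proof of Lemma~\ref{lem:opt_ctrl}. Output, profit and welfare then follow from the min-of-decreasing-functions observation, $S'(q)=-qP'(q)>0$, and $\alpha\geq 0$. The only cosmetic difference is that you get profit monotonicity directly from the incentive constraint at the fixed capacity $\capa(\type')$, whereas the paper uses the integrated envelope formula; your route matches the intuition the paper gives in the main text.
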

The no-transfer restriction extends the usual output ordering to installed
capacity. Pairwise incentive comparisons imply that a higher-cost type cannot
produce more than a lower-cost type. Suppose, nevertheless, that it installed
more capacity. Its output would then fit within the lower-cost type's smaller
capacity.\footnote{
Pairwise incentive
constraints (which require nonincreasing output in type) and the firm's continuation output choice (nondecreasing in capacity) 
imply that the two capacities must
induce the same output throughout the intervening type interval. 
}
By choosing that capacity and producing the same output, it would
save installation costs without sacrificing operating profit. With no transfers
to offset this saving, the proposed assignment violates incentive compatibility.

The same argument rules out assigning different capacities to types that
produce the same output. The feasible separation is indeed limited.
On a smoothly separating part of a feasible allocation, each type must choose its privately optimal capacity. Distorting capacity away from laissez faire therefore requires pooling over type intervals, although output can still vary within a capacity pool. Appendix \ref{apx} establishes this restriction formally.

Profit monotonicity also follows directly from incentive compatibility.
A lower-cost type can choose the capacity assigned to a higher-cost type
and reproduce its output at a weakly lower production cost. Its truthful
profit must therefore be at least as high. Consumer surplus is increasing
in output, so both consumer surplus and the firm's profit weakly decline
with type. Since the regulator places nonnegative weight on both,
welfare is also weakly decreasing in type.

\medskip

We call a capacity menu $\menu$ \emph{optimal} if it implements an optimal
mechanism for Problem \eqref{direct}.  The following lemma establishes that some intervention is always desirable.  Without intervention,
the firm chooses $q^{LF}(\type)$ and tight capacity
$\capa(\type)=q^{LF}(\type)$.
\begin{lemma}\label{lem:full-delegation}
    Laissez faire ($\menu=\R_+$) is not an optimal policy.
\end{lemma}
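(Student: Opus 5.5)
Our plan is to exhibit a simple perturbation of laissez faire that strictly raises expected welfare and preserves feasibility. The natural candidate is a small capacity floor for the highest-cost types: the menu $\menu_\varepsilon=\{0\}\cup[\capa_\varepsilon,\infty)$ with $\capa_\varepsilon = q^{LF}(\overline{\type})+\varepsilon$ for small $\varepsilon>0$. Any closed menu containing $0$ is implementable by construction, since each type simply picks its preferred element. Hence \eqref{IC}, \eqref{IR}, and \eqref{N} hold automatically, and only the welfare comparison needs work.

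\textbf{Step 1: who is affected.} Since $q^{LF}$ is continuous and strictly decreasing (differentiate $(qP(q))'-\capacost'(q)=\type$ using strict concavity of revenue and convexity of $\capacost$), types with $q^{LF}(\type)\ge \capa_\varepsilon$ keep their laissez-faire choice. Only an interval $(\type_\varepsilon,\overline{\type}]$ of length $O(\varepsilon)$ is constrained. For these types, concavity of $\pi(\cdot,\type)$ on the region where capacity binds ($q^c(\capa,\type)=\capa$ for $\capa\le q^c(\type)$, and $q^{LF}(\type)<q^c(\type)$) implies that the best admissible choice is either $\capa_\varepsilon$ or exit. By \autoref{asmp:maintained}(iii), $\pi(q^{LF}(\overline{\type}),\overline{\type})>0$, so for small $\varepsilon$ continuity gives $\pi(\capa_\varepsilon,\type)>0$. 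All constrained types therefore install exactly $\capa_\varepsilon$ and fully use it, because $\capa_\varepsilon<q^c(\type)$ for small $\varepsilon$.

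\textbf{Step 2: sign of the per-type welfare change.} For a constrained type, set $q=\capa$ and write $\tilde w(q,\type)=V(q)-qP(q)+\alpha[qP(q)-\type q-\capacost(q)]$. Then
\[
\partial_q \tilde w = -qP'(q) + \alpha\bigl[(qP(q))'-\type-\capacost'(q)\bigr].
\]
At $q=q^{LF}(\type)$ the bracket vanishes, so the derivative equals $-q^{LF}(\type)P'(q^{LF}(\type))>0$. By continuity, $\partial_q\tilde w>0$ on a neighborhood of $(q^{LF}(\overline{\type}),\overline{\type})$. For small $\varepsilon$, every constrained type moves from $q^{LF}(\type)$ up to $\capa_\varepsilon$ inside this neighborhood, so its welfare strictly increases. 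The set of constrained types has positive measure because $f>0$ by \autoref{asmp:maintained}(iv). Unconstrained types are unchanged. Expected welfare therefore rises strictly, and laissez faire is not optimal.

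The main obstacle is Step 1: confirming that no constrained type prefers exit, and that none prefers a capacity above $\capa_\varepsilon$ which would alter its output in an unintended way. Strict concavity of $\capa\mapsto\pi(\capa,\type)$ on $[0,q^c(\type)]$, together with $\pi$ decreasing beyond $q^c(\type)$ because it only adds capacity cost, settles the second issue. The strict participation margin at $\overline{\type}$ settles the first. The welfare derivative in Step 2 is routine once the firm's first-order condition is used to kill the $\alpha$ term.
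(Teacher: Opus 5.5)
Your proposal is correct and follows essentially the same route as the paper. The paper also uses a floor slightly above $q^{LF}(\overline{\type})$ that binds only on a top interval of types; those types strictly participate, install the floor, and fully use it. The firm's laissez-faire first-order condition removes the $\alpha$-term, so each affected type's welfare derivative is $-qP'(q)>0$ near laissez faire. The only difference is that the paper makes your joint-continuity step quantitative with an explicit uniform lower bound $m_\epsilon>0$ on $\frac{d}{dy}W(y,y;\type)$ over the relevant range; the logic is otherwise identical.
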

The reason is the wedge between the regulator's objective and the firm's
objective.  Without intervention, each type chooses $q^{LF}(\type)$ by equating marginal revenue with marginal production and capacity cost, which leads to standard monopoly distortion because  the marginal social benefit of increasing output from $q^{LF}(\type)$ is strictly above the marginal revenue.
A sufficiently small increase in tight capacity from the laissez-faire choice therefore raises welfare.
To be specific, consider the capacity-floor menu
$\menu^{\mathrm{floor}}(K):=\{0\}\cup[K,\infty)$ with
$K>q^{LF}(\overline{\type})=: \hat q$,\footnote{
No regulation is a trivial capacity floor policy with $K \leq q^{LF}(\overline{\type})$.
} which rules out operation at capacities below $K$.  If $K$ is only slightly
above $\hat q$, it affects a small interval of high-cost types, which choose
tight capacity and output $K$.  Their loss in profit is second order at the
laissez-faire choice, whereas consumers receive a first-order gain from the
increase in output.  Thus a local floor perturbation improves welfare even
without transfers.

The capacity floor illustration sheds light on the idea behind the optimal policy. Because $q^{LF}$ is strictly decreasing, a nontrivial floor
intersects the laissez-faire schedule at most once. Across $\typespace$, if
$K\leq q^{LF}(\type)$, type $\type$ installs and produces
$q^{LF}(\type)$ and is unaffected.
Conditional on participation, a type with \(K>q^{LF}(\type)\) installs \(K\) and produces \(q^c(K,\type)>q^{LF}(\type)\). For the small floor perturbation considered above, every type participates and the resulting output gains yield an ex ante welfare improvement.

\section{Optimal Capacity Control}
\label{sec:optimal}
This section characterizes the optimal mechanism.  \autoref{thm_capa_floor}
gives sufficient conditions under which a nontrivial capacity floor
$\{0\}\cup[K, \infty)$ is optimal.  \autoref{prop_underuse} then provides additional
conditions under which a positive measure of types underutilize that floor.

\subsection{Why a single floor is optimal}

\begin{theorem}\label{thm_capa_floor}
    There exists a capacity floor $\{0\}\cup[K^*, \infty)$ with $K^* \in (q^{LF}(\overline{\type}), q^c(\underline{\type})]$ that implements an optimal mechanism if
    \begin{equation}
       \frac{-f'}{f} \leq \frac{\alpha}{-\hat{q}P'(\hat{q})} \text{ a.e. on } \typespace,\; \text{ and } P'(q) + qP''(q) \geq 0 \text{ on }(0, \qmax).\label{FL} \tag{FL}
    \end{equation}
Moreover, $\pi(K^*, \overline{\type})\geq 0$, so every type participates.
\end{theorem}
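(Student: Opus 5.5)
We follow the delegation approach of \citet{AM08RES} and \citet{AB13ECMA}: reduce Problem \eqref{direct} to a one-dimensional delegation problem in capacity, and verify optimality of an interval menu with a Lagrangian whose multiplier is nonnegative and nondecreasing. The reduction uses \autoref{lem:monotone-welfare} and the restriction noted after it (established in Appendix~\ref{apx}). On any smoothly separating interval a feasible allocation must give $\capa(\type)=q^{LF}(\type)$. Elsewhere capacity is pooled, and a pooled type uses the pool capacity fully if $\capa\le q^c(\type)$ and produces $q^c(\type)$ otherwise. Feasible mechanisms therefore correspond to closed menus $\menu\ni 0$, each type choosing its best element.

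The argument has four steps.

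\emph{Participation.} We first show that excluding types is never useful under \eqref{FL}, so we may impose $\pi(\capa(\type),\type)\ge 0$ without exclusion. Given a candidate menu in which the highest types exit, raising their capacity to the smallest operating capacity still yields nonnegative profit for some interval of them. By \autoref{lem:full-delegation}-style reasoning, this adds a first-order consumer-surplus gain and only a second-order profit loss near the exit margin.

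\emph{Lagrangian.} With all types operating, welfare as a function of reported capacity is $w(\capa,\type)$. We write the Lagrangian with a multiplier $\Lambda(\type)$ on the envelope/IC representation, as in Amador--Bagwell. The sufficient condition takes the standard form: the function
\[
G(\type,\capa)=\int_\type^{\overline\type}\partial_\capa w(\capa,s)\,f(s)\dint s \;-\;\Lambda(\type)\,\partial_\capa \pi(\capa,\type)
\]
must have the appropriate sign pattern. Concretely, we need $\partial_\capa w + \kappa\,\partial_\capa\pi$ scaled by $f$ to be monotone in $\type$, which is ``$\partial_\capa w\, f$ plus a multiple of $\partial_\capa\pi\, f$ is nondecreasing.'' The candidate multiplier places weight $\alpha$-related mass on $\partial_\capa \pi$. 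On the fully utilized region $\capa\le q^c(\type)$,
\[
\partial_\capa w = P(\capa)-\type-\capacost'(\capa)-(1-\alpha)\bigl[(\capa P(\capa))'-\type-\capacost'(\capa)\bigr],
\]
and $\partial_\capa \pi=(\capa P)'-\type-\capacost'$. On the underutilized region, $\partial_\capa w=-\alpha \capacost'(\capa)$ and $\partial_\capa\pi=-\capacost'(\capa)$.

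\emph{Using \eqref{FL}.} The condition $P'+qP''\ge 0$ makes $-qP'(q)$ nondecreasing, so the consumer-surplus wedge $P(\capa)-(\capa P)'=-\capa P'(\capa)$ is increasing in capacity and bounded below by $-\hat q P'(\hat q)$ on the relevant range $\capa\ge\hat q$. Combining this with $-f'/f\le \alpha/(-\hat qP'(\hat q))$ lets the derivative in $\type$ of $f\cdot[\,\cdot\,]$ be signed. The wedge term contributes $-\capa P'(\capa)f'$, which is dominated by the $\alpha f$ term coming from $\partial_\type$ of the profit part. We then check that the kink at $\capa=q^c(\type)$ preserves the required monotonicity; both one-sided expressions are handled separately and glued. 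This delivers optimality of a menu $\{0\}\cup[K^*,\infty)$ for the $K^*$ solving the first-order condition
\[
\int_{\type_K}^{\overline\type}\partial_\capa w(K,\type)\dint F=0,\qquad q^{LF}(\type_K)=K.
\]

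\emph{Location of $K^*$ and participation of $\overline\type$.} The bound $K^*>\hat q$ follows from \autoref{lem:full-delegation}. For $K^*\le q^c(\underline\type)$, note that any $K>q^c(\underline\type)$ leaves every type underutilizing, so $\partial_\capa w=-\alpha\capacost'<0$ and lowering $K$ is beneficial. Finally, $\pi(K^*,\overline\type)\ge0$ holds because if it failed, the first-order condition would be violated by the participation argument of the first step.

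The main obstacle is the third step. The kink of $q^c(\capa,\type)$ at $\capa=q^c(\type)$ makes $w$ non-smooth in capacity, so the standard single-crossing verification must be done piecewise and the multiplier must be chosen so that \eqref{FL} suffices on both pieces simultaneously. If that fails, I would instead compare any menu directly to its convexified floor, using perturbation arguments type by type.
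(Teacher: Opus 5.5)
\paragraph{Verdict: genuine gaps.} The most serious gap is in the step you flag as the main obstacle. The verification there is never actually specified, and the one place you invoke \eqref{FL} uses the wrong monotonicity.

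\paragraph{The use of \eqref{FL} has the wrong sign.} Since $S''(q)=-[P'(q)+qP''(q)]$, the curvature part of \eqref{FL} makes $S'(q)=-qP'(q)$ \emph{nonincreasing}. On $q\ge\hat q$ it therefore gives the \emph{upper} bound $S'(q)\le S'(\hat q)$. That is exactly what is needed: where $f'<0$,
\[
f'S'(q)+\alpha f\;\ge\; f'S'(\hat q)+\alpha f\;\ge\; 0 .
\]
Your lower bound $S'(q)\ge S'(\hat q)$ makes this inequality run the wrong way, so the density part of \eqref{FL} cannot dominate the $f'$ term. It also flips the sign of the term $fS''(Q)Q'$, which appears because induced output varies with type in the laissez-faire and slack regions. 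The paper's key lemma is that
\[
\Phi_K(\theta)=f(\theta)S'(Q_K(\theta))+\alpha F(\theta),\qquad Q_K=\min\{\max\{K,q^{LF}\},q^c\},
\]
is nondecreasing.

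\paragraph{The sufficient condition is not checkable as stated.} Your condition (an ``appropriate sign pattern'' for $G$, with an unspecified $\kappa$) is not something one can verify. Nor does the Amador--Bagwell Lagrangian theorem apply off the shelf. Here $\pi(k,\theta)$ is linear in $\theta$ only on the tight branch $k\le q^c(\theta)$; on the slack branch the cross-partial vanishes. The paper resolves this as follows.
\begin{enumerate}
\item It fixes the terminal rent $r=\Pi(\overline\theta)$ and drops monotonicity.
\item It attaches a costate $\Lambda$ to the envelope equation and a multiplier $\lambda$ to the no-transfer identity $\Pi=\pi(k,\theta)$, with $\Lambda'=\lambda-\alpha f$. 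This gives the exact identity
\[
J(k,\Pi)-J(k^*,\Pi^*)=\int[H(\theta,k,\Pi^*)-H(\theta,k^*,\Pi^*)]\,d\theta+\Lambda(\underline\theta)[\Pi-\Pi^*](\underline\theta).
\]
\item It builds $\Lambda$ and $\lambda$ explicitly on the three regions, and monotonicity of $\Phi_K$ gives $\lambda\ge0$.
\item With $\lambda\ge 0$, $H$ is concave on the tight branch and $\partial_kH=-\lambda c'\le0$ on the slack branch.
\item The kink at $q^c(\theta)=K$ is handled through one-sided derivatives, and $fS'(q^c(\theta))\ge\Lambda(\theta)$ holds for strictly slack types.
\end{enumerate}

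\paragraph{The participation step is incorrect as stated.} A type that exits has negative profit at every positive capacity in the menu, including the smallest, so it cannot be ``raised'' to that capacity. Readmitting it requires lowering the floor or adding lower capacities. Every type may select these, which cuts output of the pooled types. The trade-off is first order against first order: the readmitted types' consumer surplus $S(q)$ against lost output elsewhere. It is not first order against second order, and no local perturbation establishes global dominance.

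In the paper, exclusion needs no separate argument. Shutdown paths ($k=\Pi=0$ on a top interval) are feasible for the fixed-terminal problem with $r=0$. The pointwise Hamiltonian inequality over all $k\ge0$, including $k=0$, then shows that the floor at the top type's zero-profit root dominates them.

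\paragraph{$K^*$ cannot be pinned down by an interior first-order condition.} The optimum may sit at the participation corner $\pi(K,\overline\theta)=0$ with welfare still increasing in $K$. Your derivation of $\pi(K^*,\overline\theta)\ge0$ from that condition also rests on the flawed participation step. The paper instead obtains existence from continuity of welfare in $r$ on the compact interval $[0,\pi(q^{LF}(\overline\theta),\overline\theta)]$. Full participation is then built into that parametrization.
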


The optimal policy is a simple capacity floor even though the regulator
can choose a much richer set of capacity restrictions. In particular,
it could exclude intermediate capacities or offer several separated
permissible ranges. Under \eqref{FL}, these alternatives do not improve
on a single floor within the admissible class. The regulator therefore
removes small positive capacities while preserving the firm's discretion
above the minimum.

The bounds on $K^*$ describe the scope of intervention. The lower bound
$K^*>q^{LF}(\overline{\type})$ makes the policy nontrivial: the floor binds
for a high-cost segment. Types with $q^{LF}(\type)<K^*$ install the floor
and produce $q^c(K^*,\type)>q^{LF}(\type)$, while types whose preferred
capacities satisfy $q^{LF}(\type)\geq K^*$ retain their laissez-faire
capacity and output. The upper bound $K^*\leq q^c(\underline{\type})$
rules out a floor that exceeds every type's continuation output.

Condition \eqref{FL} combines restrictions on demand and the cost density.
The demand restriction orders the marginal consumer-surplus gains from
intervention across types. Let  $S(q):=V(q)-qP(q)$ denote consumer
surplus. Since
\[
P'(q)+qP''(q)=-S''(q),
\]
the inequality $P'+qP''\geq0$ is equivalent to concavity of consumer surplus.
Together with $P'<0$, it also implies $P''>0$, so inverse demand must be
convex.\footnote{E.g., constant-elasticity demand functions satisfy this requirement.}
Because $q^{LF}(\type)$ falls with cost, concavity makes
$S'(q^{LF}(\type))$ weakly increasing in $\type$. At laissez faire, this
is also the \textit{ex post} marginal welfare gain from expanding capacity, since the firm's
profit is locally maximized. The immediate benefit from intervention is
therefore ordered toward the high-cost types that produce the least.

The density restriction complements this ordering. The inequality
$-f'/f\leq\alpha/[-\hat qP'(\hat q)]$ holds automatically wherever $f$ is
nondecreasing. Where $f$ is strictly decreasing, $-f'/f$ is its proportional
rate of decline, and the inequality prevents the likelihood of encountering
less efficient firms from falling too sharply. Together, the demand and
density restrictions provide the \emph{ex ante} ordering needed for a single
capacity floor to be optimal, taking account of both the distribution of
types and the regulator's valuation of firm profits.

The density requirement is easier to satisfy when the regulator places
greater weight on the firm's profit (larger $\alpha$) or when marginal
consumer surplus is smaller at the highest-cost type's laissez-faire output
(smaller $-\hat qP'(\hat q)$). Concavity makes this latter quantity an upper
bound on marginal consumer surplus along the entire laissez-faire schedule.
A smaller value therefore lowers the bound on the marginal consumer gains
from expanding output, while a larger $\alpha$ places greater weight on the
profit losses caused by intervention. In this sense, when potential consumer
gains are smaller relative to the regulator's concern for firm profits, the
condition permits a faster decline in the density while still ensuring
that no richer intervention within the admissible class improves on a single
capacity floor.

\paragraph{The anatomy of a capacity floor.}
To clarify the forces behind the characterization, fix a nontrivial floor
$K>q^{LF}(\overline{\type})$. Conditional on participation, a type's
capacity and output take one of three forms:
\[
(\capa(\type),q(\type))
=
\begin{cases}
\bigl(q^{LF}(\type),q^{LF}(\type)\bigr),
& q^{LF}(\type)\geq K,\\[3pt]
(K,K),
& q^{LF}(\type)<K\leq q^c(\type),\\[3pt]
\bigl(K,q^c(\type)\bigr),
& q^c(\type)<K.
\end{cases}
\]
In the first case, the firm's preferred capacity remains available, so the
floor has no effect. In the second, the floor exceeds the firm's
laissez-faire capacity but does not exceed its unconstrained continuation
output. The firm therefore installs $K$ and uses it fully. As long as the
floor remains below $q^c(\type)$, a marginal increase raises capacity and
output one for one. In the third case, the floor exceeds the firm's
unconstrained continuation output. The firm still installs the minimum
permissible capacity but, because installation costs are sunk when production
is chosen, produces only $q^c(\type)<K$.

Because both $q^{LF}(\type)$ and $q^c(\type)$ decrease with type and
$q^{LF}(\type)<q^c(\type)$, these responses follow a natural order. As
production cost rises, the allocation can move from laissez-faire to full
utilization of the floor and then to underutilization, although one or more
regions may be empty for a given $K$. Underutilization does not mean that
regulation has no effect on output: a participating type with slack capacity
still produces $q^c(\type)>q^{LF}(\type)$. Rather, once the floor exceeds
$q^c(\type)$, further increases no longer raise that type's output.

Raising the floor entails two distinct intensive-margin costs: it can
push fully utilized capacity beyond the level justified by the output gains,
and it can require additional capacity that remains unused. Both costs arise
conditional on continued operation.

The first cost can arise even when the firm uses all of its capacity.
While a type remains in the interior of the tight-floor region,
\[
\frac{d}{dK}W(K,K;\type)
=
\underbrace{-KP'(K)}_{\text{marginal consumer-surplus gain}}
+
\alpha
\underbrace{\left[
(KP(K))'-\type-\capacost'(K)
\right]}_{\text{marginal firm profit}}.
\]
At $K=q^{LF}(\type)$, the marginal-profit term is zero by the firm's
laissez-faire first-order condition. A small increase in tight capacity
therefore raises welfare because $-KP'(K)>0$, providing the rationale for
\autoref{lem:full-delegation}. Once $K>q^{LF}(\type)$, however, the floor
pushes the firm beyond its privately optimal capacity, so further expansion
reduces its profit. This loss, weighted by $\alpha$, can exceed the
consumer-surplus gain even though the additional capacity is fully utilized.
At a common floor, the marginal consumer-surplus gain is the same for all
types in the tight region, while the marginal profit loss is larger for
higher-cost types. Therefore, at the optimal floor, a further increase can
reduce welfare for some higher-cost types that fully utilize capacity while
raising welfare for more efficient types in the same region.

The second cost arises when the firm leaves capacity unused. For a type
already underutilizing the floor, output remains at $q^c(\type)$. While the
type remains in this region,
\[
\frac{d}{dK}W(q^c(\type),K;\type)
=
-\alpha\capacost'(K)
\leq0,
\]
with strict inequality whenever $\alpha>0$. Additional capacity now raises
installation costs without increasing output, so there is no marginal
consumer-surplus gain to offset the profit loss.
Therefore, underutilization can occur in the optimal mechanism
only when the gains from raising output for some types outweigh the losses from creating slack capacity for more inefficient types.

To sum up, the downside of raising the
floor takes two forms: additional production whose consumer benefits
may be outweighed by its cost to the firm, and additional investment that
produces no further service. The optimal floor balances these effects across
types, taking account of their probabilities and the regulator's weight on
firm profits. 

An excessively high floor can also impose an extensive-margin cost by
inducing exit. For a type whose laissez-faire capacity lies below $K$, the
best positive permissible capacity is $K$. If $\pi(K,\type)<0$, the type
exits because $0$ also belongs to the menu. Under \eqref{FL}, however,
\autoref{thm_capa_floor} shows that the regulator can attain the optimum
without excluding any type. The remaining trade-off at the selected optimum
therefore concerns the intensive margin, including both costly expansion of
fully utilized capacity and investment in unused capacity.

\paragraph{Why a single floor?}
The preceding trade-off suggests a possible role for more selective
regulation. Could the regulator encourage greater investment from types
whose additional output is valuable while allowing types for whom the
requirement is costly to choose smaller capacities? A menu with several
separated permissible ranges might appear to achieve this. However,
every permissible capacity is available to every type, so the regulator
must account for how firms select among these options. Excluding
intermediate capacities can encourage some firms to invest and produce
more, but can also lead others to choose smaller capacities and produce
less. Whether such a policy improves welfare depends on the resulting
output changes, profit losses, and the distribution of types. \autoref{thm_capa_floor} establishes that, under \eqref{FL}, none improves on a single floor.
Roughly speaking, condition \eqref{FL} favors intervention toward the high-cost end of the type distribution. Evaluated at their respective laissez-faire outputs, less efficient firms generate a weakly larger marginal consumer-surplus gain from additional output and face a lower marginal installation cost because they operate at a smaller scale. The density restriction complements this ordering by preventing the likelihood of encountering less efficient firms from falling too sharply. These comparisons help explain the form of the optimal policy: a single floor raises investment by firms whose preferred capacities are small while preserving discretion above the minimum.

A further possibility is to raise the floor enough to induce shutdown. Such a policy can increase output for some operating types while excluding others.
Under \eqref{FL}, however, any
admissible menu that induces shutdown can be replaced by a floor that
retains all types, preserves the highest-cost type's zero profit, and
yields weakly higher expected welfare. Full participation can therefore
be maintained at an optimum. Outside \eqref{FL}, more selective restrictions or a floor that induces
shutdown may become attractive.

\paragraph{Comparative statics.}
The optimal capacity floor responds systematically to the regulator's
preferences and the cost of investment.\footnote{
Each comparison holds the
other primitives fixed and maintains condition \eqref{FL}. 
If the optimum is not unique, 
the smallest optimal full-participation floor is selected.
} Increasing $\alpha$, the weight on firm profit, weakly lowers the
floor: a higher requirement reduces expected profit, so giving that loss
more weight weakens the case for additional investment and output. Likewise, replacing
$\capacost(k)$ with $\lambda\capacost(k)$ makes the optimal floor weakly
decrease with $\lambda$. Costlier capacity increases the burden of a
requirement and tightens participation. On a fixed cost support, a shift toward higher marginal costs in hazard-rate
order also weakly lowers the floor.\footnote{
The hazard-rate comparison is
$f_2(\theta)/[1-F_2(\theta)]\leq f_1(\theta)/[1-F_1(\theta)]$
for $\theta<\bar\theta$. It holds if $f_2/f_1$ is nondecreasing.
}
At a given floor, conditional on being affected, this shift reduces the expected
output gain from additional capacity and raises the expected profit loss.
Demand growth works in the opposite direction. A parallel upward shift
$P_A(q)=P_0(q)+A$, or a proportional expansion of market size
$P_z(q)=P_0(q/z)$, weakly raises the optimal floor as $A$ or $z$ increases.
Both changes support higher production and allow greater investment while
preserving participation. Appendix~\ref{ext:comparative-statics} proves these
comparisons.

\subsection{When the optimal floor is underutilized}

\autoref{thm_capa_floor} establishes a floor and full participation, but does not require unused capacity.
The next result separates two questions: whether the floor is high enough to leave some
capacity idle, and whether it is low enough to preserve an unaffected segment of efficient
firms. 
 
\begin{proposition}\label{prop_underuse}
    Suppose condition \eqref{FL} holds, and let $\overline{q}:=q^c(\overline{\type})$. If
    \begin{equation}
-\overline{q} P'(\overline{q}) >
\max\left\{
\frac{\capacost(\overline{q})}{\overline{q}},
\alpha\capacost'(\overline{q})
\right\},
\label{U}\tag{U}
\end{equation}
    then there is an optimal capacity floor policy $\{0\}\cup[K^*, \infty)$ that induces a positive measure of types to install capacity $K^*$ and choose output $q(\type) < K^*$ (i.e., underutilization). 
    If, in addition,
    \begin{equation}
         -\overline{q} P'(\overline{q}) < \frac{ \capacost(q^{LF}(\underline{\type}))}{\overline{q}}, \label{LF}\tag{LF}
    \end{equation}
    then there exist cutoffs $\underline{\type}< \type_{FL} < \type_{U} < \overline{\type}$ such that the type space is partitioned into three positive-length regions: the \textit{laissez-faire region} $[\underline{\type}, \type_{FL}]$, the \textit{floor region} $(\type_{FL}, \type_{U}]$, and the \textit{underutilization region} $(\type_U, \overline{\type}]$. Output and capacity are given by
    \[
q(\type) = \begin{cases}
    q^{LF}(\type) & \text{on } [\underline{\type}, \type_{FL}]\\
    K^* & \text{on } (\type_{FL}, \type_U]\\
     q^c(\type) & \text{on } (\type_U, \overline{\type}]
\end{cases}, \quad 
\capa(\type) = \begin{cases}
    q^{LF}(\type) & \text{on } [\underline{\type}, \type_{FL}]\\
    K^* & \text{on } (\type_{FL}, \overline{\type}]
\end{cases}.
\]
\end{proposition}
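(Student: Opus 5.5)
The plan is to take the optimal full-participation floor $\{0\}\cup[K^*,\infty)$ supplied by \autoref{thm_capa_floor} and show, by contradiction, that under \eqref{U} it must satisfy $K^*>\overline{q}=q^c(\overline{\type})$. Since $q^c$ is continuous and strictly decreasing, every $\type$ in a nondegenerate interval $(\type_U,\overline{\type}]$ then has $q^c(\type)<K^*$. By the anatomy of a capacity floor, each such type installs $K^*$ and produces $q^c(\type)<K^*$. Those types have $q^{LF}(\type)<q^c(\type)<K^*$ and, by the Theorem, participate, so they form a positive measure of underutilizing types.

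The first step is to write expected welfare under a floor $K$ with full participation as
\[
G(K)=\int_{\{\type: q^{LF}(\type)\geq K\}} w(q^{LF}(\type),\type)\dint F(\type)+\int_{\{\type: q^{LF}(\type)<K\}} W(q^c(K,\type),K;\type)\dint F(\type).
\]
The second step is to differentiate $G$ for $K\in(q^{LF}(\overline{\type}),\overline{q}]$. The boundary term vanishes because welfare is continuous at the cutoff type, where $q^{LF}(\type)=K$. Every affected type is in the tight region, since $K\leq\overline{q}\leq q^c(\type)$. The integrand is therefore
\[
-KP'(K)+\alpha\bigl[(KP(K))'-\type-\capacost'(K)\bigr]\geq -KP'(K)-\alpha\capacost'(K).
\]
The inequality uses $(KP(K))'\geq\type$ whenever $K\leq q^c(\type)$, by concavity of revenue. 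Next, the second part of \eqref{FL} says consumer surplus is concave, so $S'(q)=-qP'(q)$ is nonincreasing and $-KP'(K)\geq-\overline{q}P'(\overline{q})$. Convexity of $\capacost$ gives $\capacost'(K)\leq\capacost'(\overline{q})$. Hence, by the second term of \eqref{U}, the integrand is strictly positive, and $G$ is strictly increasing on $(q^{LF}(\overline{\type}),\overline{q}]$. The same bound holds at $K=\overline{q}$ from the right for types with $q^c(\type)$ slightly above $K$. For the few types near $\overline{\type}$ that switch to underutilization, the marginal loss is only $-\alpha\capacost'(K)$ on a set of vanishing measure, so the right derivative at $\overline{q}$ is also positive.

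The third step is to check that floors slightly above $\overline{q}$ keep full participation. Using $(qP)'(\overline{q})=\overline{\type}$, the highest type's operating profit is
\[
\overline{q}P(\overline{q})-\overline{\type}\,\overline{q}=-\overline{q}^{\,2}P'(\overline{q}),
\]
which strictly exceeds $\capacost(\overline{q})$ by the first term of \eqref{U}. By continuity, $\pi(K,\overline{\type})>0$ for $K\in[\overline{q},\overline{q}+\varepsilon)$. Profits decline in type, so every type participates there. Consequently, if $K^*\leq\overline{q}$, some floor $K\in(\overline{q},\overline{q}+\varepsilon)$ attains $G(K)>G(K^*)$, contradicting optimality. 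This delivers the first claim.

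For the three-region claim, participation of $\overline{\type}$ at $K^*$ (from the Theorem) gives $\capacost(K^*)\leq-\overline{q}^{\,2}P'(\overline{q})$. By \eqref{LF}, this is less than $\capacost(q^{LF}(\underline{\type}))$, so $K^*<q^{LF}(\underline{\type})$ because $\capacost$ is increasing. Define $\type_{FL}$ by $q^{LF}(\type_{FL})=K^*$ and $\type_U$ by $q^c(\type_U)=K^*$. These exist and are unique because both schedules are continuous and strictly decreasing, with $q^{LF}(\overline{\type})<K^*<q^{LF}(\underline{\type})$ and $\overline{q}<K^*\leq q^c(\underline{\type})$. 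Then $q^{LF}<q^c$ gives $\type_{FL}<\type_U$, and $K^*>\overline{q}$ gives $\type_U<\overline{\type}$. The stated allocation is then read off from the anatomy of a capacity floor.

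The main obstacle is the second step: handling the derivative of $G$ cleanly, including the kink at $K=\overline{q}$ and the endogenous boundary $\type_{FL}(K)$, and making sure the comparison is among full-participation floors. The latter is what the Theorem's guarantee $\pi(K^*,\overline{\type})\geq0$ together with the third step secures.
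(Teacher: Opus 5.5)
Your proposal is correct and follows essentially the same route as the paper. You show that floor welfare is strictly increasing on $(q^{LF}(\overline{\type}),\overline{q}]$ using concavity of $S$ and convexity of $\capacost$, get a positive right derivative at $\overline{q}$ because the underutilizing mass vanishes, use strict top-type viability from the first part of \eqref{U}, and obtain $\capacost(K^*)\leq-\overline{q}^{\,2}P'(\overline{q})<\capacost(q^{LF}(\underline{\type}))$ for the three regions. The only differences are cosmetic: you bound $K^*$ directly from $\pi(K^*,\overline{\type})\geq0$ and define the cutoffs by inverting $q^{LF}$ and $q^c$, whereas the paper routes the bound through the upper zero-profit root $\overline{K}$ and sets $\type_{FL}=a(K^*)$, $\type_U=b(K^*)$.
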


Condition \eqref{U} combines financial room for expansion with a sufficiently large consumer benefit from additional output. It ensures that the highest-cost type can remain active as the floor crosses its continuation output, and that crossing this threshold raises expected welfare.
For example, a market may contain a substantial group of consumers willing to pay high prices. Even when high production costs limit the firm’s output, serving these consumers can sustain a sizable operating surplus. If installation costs are moderate, this surplus gives the firm room to absorb a binding capacity requirement without exiting. When the consumer benefits from expanding service are sufficiently large, the regulator may find it optimal to raise the floor beyond the highest-cost type’s continuation output, even though that type will leave some installed capacity unused.

 The first inequality in \eqref{U}, $-\overline qP'(\overline q)>
\capacost(\overline q)/\overline q$, says that the highest-cost type can
cover the cost of installing its continuation output. Its operating margin at
that output is $P(\overline q)-\overline\type=-\overline qP'(\overline q)$,
so this condition is equivalent to $\pi(\overline q,\overline\type)>0$.
There is therefore room to increase the floor beyond $\overline q$ without
immediately losing the highest-cost type.

The second inequality, $S'(\overline q)>\alpha\capacost'(\overline q)$,
makes further intervention worthwhile at that threshold. When the floor first
passes $\overline q$, a positive mass of affected types still uses it fully.
Their additional output benefits consumers, while capacity cost enters the
regulator's objective through profit.  The inequality thus ensures that it is profitable to cross the point at which the highest-cost type becomes unconstrained, creating a positive mass
of firms with unused capacity.

Condition \eqref{LF} has a separate role. It places the optimal floor below
the lowest-cost type's preferred capacity, leaving a positive mass at
laissez faire. 
The highest-cost type cannot finance the lowest-cost type’s laissez-faire capacity from its operating profit (e.g., when the types are diverse). A floor preserving full participation must therefore leave some efficient types unaffected.
Together, conditions \eqref{LF} and \eqref{U} ensure that the optimal capacity floor induces a three-region allocation.

\paragraph{Slack Capacity as a Necessary Evil.}
Direct output regulation would translate a required unit into service. A capacity requirement does so only while the production constraint binds.
For a known slack type, reducing installed capacity toward its output saves resources without reducing service. The difficulty is that the regulator does not know that type. 
Adding a lower capacity to the menu makes it available to every firm, including types whose output the original floor was intended to raise.  
The regulator cannot evaluate the saving for the slack type while holding other types' choices fixed by assumption.

Unused capacity constitutes a form of money burning: at a realized allocation
$\capa>q$, the firm incurs the additional resource cost
$\capacost(\capa)-\capacost(q)>0$ without providing additional service.
 Because the firm knows its production cost before investing, it anticipates this unused investment.
The regulator leverages this burden through capacity rather than choosing it
as a separate instrument. For a given floor $K$, the least costly permissible
capacity supporting any positive output $q$ is $\max\{K,q\}$. Its cost can
therefore be written as
\[
\capacost(\max\{K,q\})
=\capacost(q)+\max\{\capacost(K)-\capacost(q),0\}.
\]
The second term is a nonnegative money-burning schedule generated by the
floor: it is positive below $K$ and zero at or above $K$.  
Types need not select different capacities to incur
different amounts of waste: relatively efficient firms can fully use the same floor that less efficient firms underutilize. The incentive role of
unused investment thus arises from the capacity menu and the firm's
subsequent production choice. Under the conditions of
\autoref{prop_underuse}, an optimal floor includes this resource loss while
expanding service: for some firms it compels additional output; 
for others it imposes investment beyond the point where output responds. An additional unit of already idle capacity is wasteful. The menu containing that capacity can nevertheless dominate menus that avoid the waste but weaken output incentives
elsewhere.

\medskip

\begin{figure}[t]
    \centering
   \begin{tikzpicture}[scale=9]
        \footnotesize
    
            \draw [->, thick,gray] (0,0) -- (0,1);
            \draw [->, thick,gray] (0,0) -- (1.2,0);
           
            \node at (-0.08, 1.05) {$q, \capa$};
            \node at (1.22, -0.05) {$\type$};

            \draw[dotted, pennred, thick] (0,0.4) [out = -30, in = 175] to (1, 0.08);

            \node at (1.05, 0.04) {${\color{pennred}q^{LF}}$};

            \draw[dashed, black!50] (0,0.6) [out = -30, in = 175] to (1, 0.2);
            \node at (1.05, 0.2) {${\color{black!50}q^{FB}}$};

\draw[dashed] (0,0.9) [out=-60, in=170] to (1, 0.1);
\node at (0.15, 0.8) {${q^{c}(\cdot)}$};

\draw[line width = 3 pt, nberblue, opacity = 0.4](0.575, 0.15) -- (1, 0.15);

\node at (-0.05, 0.15) {${\color{nberblue!70}K^*}$};

\draw[line width = 3 pt, nberblue, opacity = 0.4](0, 0.4)[out = -30, in = 165] to (0.575, 0.15) -- (1, 0.15);
\node at (0.1, 0.4) {${\color{nberblue}\capa(\cdot)}$};
\draw[line width = 3 pt, nberblue, dashdotted, opacity = 0.4](0, 0.15) -- (0.575, 0.15);

\fill[pennred, opacity=0.3]
(0.795, 0.15) [out = -15, in = 170] to (1, 0.1) -- (1, 0.15);

\draw[very thick](0, 0.4)[out = -30, in = 165] to (0.575, 0.15) -- (0.795, 0.15) [out = -15, in = 170] to (1, 0.1);

\node at (1.05, 0.11) {$q(\cdot)$};
\node at (0.575, -0.05) {$\type_{FL}$};
\node at (0.795, -0.05) {$\type_{U}$};

\node at (0.35, -0.1) {laissez-faire};
\node at (0.67, -0.1) {floor};
\node at (0.98, -0.1) {underutilization};

        \end{tikzpicture}
    \caption{Illustration of an optimal underutilized floor. {\footnotesize The illustration is based on a numerical example with $P(q) = 1.2 - 0.4\log(q)$, $F(\type) = \frac{e^{\type - 1} - 1}{e - 1}$ on $[1,2]$, $\capacost(\capa) = \capa^2$, and $\alpha = 0.6$. The example satisfies \eqref{FL}, \eqref{U}, and \eqref{LF}, so \autoref{thm_capa_floor} and \autoref{prop_underuse} yield the three regions shown. The gray dashed line is the complete-information, tight-capacity benchmark $q^{FB}(\type) := \arg\max_{q\in [0, \qmax]} W(q,q; \type)$. Appendix \ref{apx:q-m-q-lf} provides details of $q^{FB}$.}}
    \label{fig_underuse}
\end{figure}

\autoref{fig_underuse} illustrates how an optimal single capacity floor produces three different responses.  For $\type\leq\type_{FL}$, the firm chooses
$q(\type)=\capa(\type)=q^{LF}(\type)$ weakly above the floor.  For
$\type\in(\type_{FL},\type_U]$, both capacity and output equal $K^*$.
For $\type>\type_U$, capacity remains at $K^*$ but output follows the lower
continuation schedule $q^c(\type)$.  The shaded wedge between capacity and output
is unused capacity.  Although the first-best output schedule $q^{FB}$ lies above the implemented output, a
higher capacity floor cannot close that output gap once the firm's continuation
choice is already unconstrained.

The figure also clarifies the marginal tradeoff behind the optimal floor.
Increasing $K$ raises output for affected types that still use capacity tightly,
but it raises only capacity cost for types that are already underutilizing.
Condition \eqref{U} ensures that the output gain dominates when the floor first
crosses $\overline q$, so some slack is optimal.  As the floor rises further,
the set of slack types expands and the wasteful component becomes more important.
The optimal $K^*$ balances the output gain on the tight segment against the
capacity cost imposed on both tight and slack segments.

\section{Extensions}
\label{sec_extension}
The baseline isolates capacity as the regulator’s only instrument.
In practice,
capacity can also attract financial support: a 2025 Hunan policy provides a construction subsidy to qualifying smart-computing facilities that meet capacity and energy-efficiency thresholds; meanwhile, underutilization of smart-computing capacity has been reported across China.\footnote{
For new or converted
projects with at least 50 PFlops of computing capacity and power usage
effectiveness no greater than 1.3, the cited provision subsidizes 10 percent
of eligible equipment and software construction expenditure, up to RMB
20 million. See
\url{https://www.hunan.gov.cn/topic/yhyshj/zcfb/sjzc/202502/t20250220_33628757.html}. For evidence of underutilization, see \url{https://www.digitalchina.gov.cn/2025/xwzx/qwfb/202507/t20250717_5048811.htm}.
}
This motivates asking how optimal capacity requirements change when monetary instruments become available.

 We now consider two ways of adding a monetary instrument: 
(i) paying subsidies for capacity and 
(ii) charging a common fee for entry. 
Under sufficient conditions, an optimal mechanism can combine an underutilized floor with either instrument. 
A capacity subsidy changes the firm’s investment incentives; a common fee changes its willingness to operate.

\subsection{Capacity control with subsidies}
Suppose the regulator can offer a nonnegative capacity-contingent subsidy $\subsidy(\capa)\geq0$ together with a menu $\menu$.  A type-$\type$ firm
chooses $\capa\in\menu$, receives $\subsidy(\capa)$, and then produces
$q^c(\capa,\type)$. 
The subsidy is sunk at the output stage, so it changes the cost of installing capacity without changing the continuation choice at a given capacity. 
The firm's payoff is $\pi(\capa,\type)+\subsidy(\capa)$.  With subsidies financed one-for-one by the regulator, the ex post welfare is
\[
w^S(\capa, \type) = S(q^c(\capa, \type)) + \alpha \pi(\capa, \type) - (1-\alpha) \subsidy(\capa).
\]
The expenditure buys an incentive to install capacity. Its fiscal cost is partly offset by the value the regulator places on the firm’s receipt, which explains the coefficient $1-\alpha$. 
The physical investment cost remains a resource cost regardless of who finances it. 
Paying a subsidy therefore does not make otherwise unused capacity socially free. 
The regulator solves
\begin{equation}
    \max_{\subsidy(\cdot) \geq 0
    ,\; \menu} \E_F[w^S(\capa, \type)] \label{subsidyDesign}\tag{S}
\end{equation}
subject to
\begin{align}
    &\capa \in \arg\max_{x\in \menu} \{\pi(x, \type) + \subsidy(x)\}. \label{IC-S}\tag{IC-S}
\end{align}
Zero capacity remains available, which ensures individual rationality. 
Subsidies offer a direct way to encourage investment above the laissez-faire level. 
They do not, however, provide the regulator with a way to collect the firm’s information rent. 
The following result shows how this asymmetry can support subsidies for efficient types alongside an underutilized floor for less efficient types.

\begin{proposition}\label{prop_subsidy} Suppose $\alpha \in (0,1)$ and \eqref{FL} holds. Let $\type_\alpha$ be the unique $\alpha$-quantile defined by $F(\type_\alpha) = \alpha$, and define $K_\alpha := q^c(\type_\alpha)$ and $\overline q:=q^c(\overline\type)$. Furthermore, if
\begin{align}
\overline{q}P(\overline{q})-\overline{\type} \cdot \overline{q}
&\geq \capacost(K_\alpha),
\label{R3IR}\\
-q^c(\type)P'(q^c(\type))-\capacost'(q^c(\type))
&\geq
(1-\alpha)\frac{F(\type)}{f(\type)}
\qquad
\text{for }\type\leq\type_\alpha,
\label{R3L}
\end{align}
then there exists an optimal mechanism whose induced output,
capacity, and subsidy are
\begin{align*}
q^*(\type)
=q^c(\type),\;
\capa^*(\type)
=\max\{q^c(\type),K_\alpha\},\;
\subsidy^*(\type) =
\left[
\capacost(q^c(\type))-\capacost(K_\alpha)
\right]_+.
\end{align*}
Types below $\type_\alpha$ receive positive subsidies and install tight capacity; types above $\type_\alpha$ receive zero subsidies and strictly underutilize the capacity floor $K_\alpha$.
\end{proposition}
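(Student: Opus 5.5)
The plan is to rewrite Problem \eqref{subsidyDesign} as a screening problem in the output schedule and the firm's rent, obtain an upper bound on welfare by a relaxed pointwise maximization, and then show that the proposed mechanism attains this bound. Write $R(q,\type):=qP(q)-\type q$ for operating profit and $U(\type):=\pi(\capa(\type),\type)+\subsidy(\capa(\type))$ for the firm's equilibrium payoff. Since $U=\pi+\subsidy$, we have $w^S=S(q)+\alpha\pi-(1-\alpha)\subsidy=S(q)+\pi-(1-\alpha)U$. Welfare is therefore total surplus $S(q)+R(q,\type)-\capacost(\capa)$ minus $(1-\alpha)$ times the rent.

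Under \eqref{IC-S}, the subsidy is sunk at the output stage, so the argument of \autoref{lem:monotone-welfare} still gives that $q(\type)=q^c(\capa(\type),\type)\le q^c(\type)$ is nonincreasing. The envelope theorem gives $U(\type)=u+\int_\type^{\overline\type}q(x)\dint x$ with $u:=U(\overline\type)\ge 0$, where $u\ge0$ holds because $0\in\menu$. The constraint $\subsidy\ge0$ reads $U(\type)\ge R(q(\type),\type)-\capacost(\capa(\type))$. Together with $\capa\ge q$, this gives $\capacost(\capa(\type))\ge\max\{\capacost(q(\type)),\,R(q(\type),\type)-U(\type)\}$. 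Substituting and integrating the rent by parts yields the upper bound
\[
\int\Bigl[S(q)+R(q,\type)-\max\{\capacost(q),R(q,\type)-U(\type)\}-(1-\alpha)q\tfrac{F}{f}\Bigr]\dint F-(1-\alpha)u ,
\]
to be maximized over nonincreasing $q\le q^c$ and $u\ge0$, with $U$ determined by $(q,u)$.

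Next, I would verify that $q=q^c$ together with $u^*=R(\overline q,\overline\type)-\capacost(K_\alpha)$ maximizes this bound; \eqref{R3IR} is exactly $u^*\ge0$. With $q=q^c$, the envelope formula gives $U(\type)=R(q^c(\type),\type)-\capacost(K_\alpha)$ for every $\type$, because $\partial_\type R(q^c(\type),\type)=-q^c(\type)$. The max term therefore equals $\max\{\capacost(q^c),\capacost(K_\alpha)\}$, which switches exactly at $\type_\alpha$.

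Optimality in $u$ comes from a perturbation. Lowering $u$ by $\delta$ lowers $U(\type)$ by $\delta$ everywhere. This saves $(1-\alpha)\delta$ in the rent term but raises capacity cost by $\delta$ on the slack region $\{\type>\type_\alpha\}$, which has mass $1-F(\type_\alpha)$. The two effects balance exactly when $F(\type_\alpha)=\alpha$, and this pins down both $K_\alpha$ and the quantile.

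Optimality in $q$ is checked pointwise from the inside. On $\type\le\type_\alpha$ the tight branch is active, and the derivative of the integrand with respect to $q$ at $q^c$ is $S'(q^c)-\capacost'(q^c)-(1-\alpha)F/f$, since $R_q=0$ there. This derivative is nonnegative by \eqref{R3L}, so the upper bound $q\le q^c$ binds. On $\type>\type_\alpha$, raising $q(\type)$ would also raise $U$ on lower types and thereby tighten the max term for them. I would handle this by treating the problem as optimal control with $U$ as the state, and using \eqref{FL} (concavity of $S$ and the density bound) to sign the co-state, so that $q=q^c$ remains optimal there as well.

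Finally, I would check implementability. The menu $\menu=\{0\}\cup[K_\alpha,\infty)$ with subsidy $\subsidy(\capa)=[\capacost(\capa)-\capacost(K_\alpha)]_+$ makes the net cost of any capacity $\capa\ge K_\alpha$ constant at $\capacost(K_\alpha)$. Each type therefore strictly prefers a capacity of at least $q^c(\type)$, and the cheapest such subsidy choice is $\capa^*(\type)=\max\{q^c(\type),K_\alpha\}$; it earns $R(q^c(\type),\type)-\capacost(K_\alpha)\ge u^*\ge0$, so no type exits. This reproduces the stated $q^*$, $\capa^*$, and $\subsidy^*$. Types below $\type_\alpha$ satisfy $q^c(\type)>K_\alpha$ and receive positive subsidies, while types above $\type_\alpha$ satisfy $q^c(\type)<K_\alpha$, receive nothing, and leave part of the floor unused.

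The main obstacle is the second half of the relaxation argument: showing that the coupling between $q(\cdot)$ and the $\max$ term through $U(\cdot)$ cannot be exploited on the slack region. This is where I expect \eqref{FL} to be needed, and where a Lagrangian or optimal-control argument, rather than a purely pointwise one, is required.
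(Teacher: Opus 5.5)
\textbf{Gap: the bound is never shown to be maximized at the candidate.} Your relaxation is valid. The constraint $k\ge q$ gives $c(k)\ge c(q)$, the constraint $s\ge 0$ gives $c(k)\ge R(q,\theta)-U(\theta)$, and the candidate attains your bound with equality. The implementation step and the use of \eqref{R3IR} for participation are also fine and match the paper.

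What is missing is a proof that $(q^c,u^*)$ maximizes the bound. Your argument in $u$ is a first-order balance computed with $q$ held fixed. It is not a global argument. On the region above $\theta_\alpha$ you defer to an optimal-control argument whose co-state you never construct. The term $U(\theta)=u+\int_\theta^{\overline\theta}q$ enters the $\max$, and which branch of the $\max$ is active depends endogenously on $(q,u)$. That is exactly the step that carries the proof, so as written optimality is not established. (A side point: your appeal to the no-transfer lemma is loose, since its capacity-monotonicity step fails with subsidies; output monotonicity survives, but the bound does not need it.)

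\textbf{Missing idea: drop the $\max$ and split at $\theta_\alpha$.} An upper bound only has to dominate welfare and be attained by the candidate. So use $c(k)\ge c(q)$ only on $[\underline\theta,\theta_\alpha]$, and $c(k)\ge R-U$ only on $[\theta_\alpha,\overline\theta]$. These are the branches the candidate makes active. Welfare is then bounded by
\[
\int_{\underline\theta}^{\theta_\alpha}\bigl[V(q)-\theta q-c(q)-(1-\alpha)U\bigr]\,dF+\int_{\theta_\alpha}^{\overline\theta}\bigl[S(q)+\alpha U\bigr]\,dF .
\]
Now integrate $U$ by parts from $\underline\theta$ on the first piece and from $\overline\theta$ on the second. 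The boundary terms are $-(1-\alpha)F(\theta_\alpha)U(\theta_\alpha)$ and $+\alpha[1-F(\theta_\alpha)]U(\theta_\alpha)$. They cancel identically because $F(\theta_\alpha)=\alpha$. This is the global version of your local ``balance'' observation. It removes $u$ and every rent level from the problem, leaving a pointwise bound with integrand
\[
\Phi_L(\theta,q)=V(q)-\theta q-c(q)-(1-\alpha)\tfrac{F}{f}q \quad\text{below } \theta_\alpha,
\qquad
\Phi_H(\theta,q)=S(q)-\alpha\tfrac{1-F}{f}q \quad\text{above } \theta_\alpha .
\]

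\textbf{Finishing the pointwise problem.} On the lower piece, $\Phi_L$ is strictly concave and \eqref{R3L} gives $\partial_q\Phi_L\ge 0$ at $q^c$. On the upper piece, $S''\le 0$ gives concavity. The needed inequality is $fS'(q^c)\ge\alpha(1-F)$ on $[\theta_\alpha,\overline\theta]$. It holds because $G:=fS'(q^c)-\alpha(1-F)$ is nondecreasing under \eqref{FL}, using $q^c\ge q^{LF}(\overline\theta)$ and $S''\le 0$. In addition, $G(\theta_\alpha)>0$ by \eqref{R3L} evaluated at $\theta_\alpha$.

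No co-state is needed. The candidate attains this bound because $k^*=q^*$ below $\theta_\alpha$ and $s^*=0$ above it.
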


In the optimal mechanism constructed in \autoref{prop_subsidy}, the
implementing policy combines a capacity floor $K_\alpha$ with reimbursement
of capacity cost above it:
$\subsidy(\capa)=\capacost(\capa)-\capacost(K_\alpha)$ for
$K_\alpha\leq\capa\leq q^c(\underline\type)$, with zero subsidy
otherwise. Every reimbursed capacity therefore leaves the firm paying the
same net installation cost, $\capacost(K_\alpha)$. The firm maximizes its
remaining operating profit by attaining $q^c(\type)$. Conditional on
participation, its smallest optimal capacity is precisely
$\max\{q^c(\type),K_\alpha\}$.

The conditions in \autoref{prop_subsidy} identify a deliberately strong
regime. Condition \eqref{R3IR} ensures that even the highest-cost type can
cover the common net installation cost from operating profit. Condition
\eqref{R3L} makes the gain from expanding efficient firms' output large
enough, relative to capacity cost and the information-rent term, to bring
them all the way to the continuation ceiling. Together with \eqref{FL},
\eqref{R3L} also supports the zero-subsidy branch for higher-cost types.
The quantile $F(\type_\alpha)=\alpha$ determines the switch: in the
designated allocation, a fraction $\alpha$ receives positive subsidies and
a fraction $1-\alpha$ underutilizes the floor. The construction thus shows
how subsidies and capacity control can complement each other in an optimal
mechanism.

For efficient firms, reimbursement finances the investment needed to reach
their large continuation outputs. Once incremental capacity cost is
reimbursed, the firm can attain the output it would choose if installation
costs were already sunk. It still equates marginal revenue with marginal
production cost. Thus $q^c(\type)$ is the continuation ceiling imposed by
noncontractible output and need not equal the regulator's
complete-information benchmark $q^{FB}(\type)$. Subsidizing unused capacity
cannot induce output above this ceiling.

Less efficient firms already reach their preferred continuation output at
the floor, so the designated allocation pays them no subsidy. The no-tax
restriction explains why their physical slack need not disappear. Consider
a high-cost type assigned zero subsidy and capacity above its output.
Reducing capacity to the amount used would save resources without changing
production, but it would also raise the firm's profit. This rent increase
can change the incentives of more efficient types to choose the options intended for less efficient firms.
One way that preserves the promised payoff and associated incentives would be to collect the capacity-cost saving as a tax, which is infeasible
in this setting. Alternatively, unused capacity
absorbs the rent and keeps the exact incentive. Allowing
subsidies alone therefore does not justify simply reducing every slack
allocation to the capacity actually used while keeping the incentive scheme
intact.

\subsection{Capacity control with an entry fee}
Next suppose the regulator can charge a common fee for access to the entire capacity menu.
It announces $(\menu,\fee)$ with $\fee\geq0$.   The firm learns its type before accepting or rejecting the offer. 
Upon acceptance, it pays the nonrefundable fee and then chooses capacity and output.
Rejection yields zero, and fee revenue enters the regulator’s objective one-for-one. 
The fee cannot depend on the report, installed capacity, or realized output. The rationale here is that the fee collection in practice may require a separate authority from that of capacity control. 
Conditional on entry, the fee cancels from every comparison between capacities. It therefore preserves the firm’s investment and production choices for a given menu. On the other hand, entry is screened by the fee. 
Higher operating costs reduce the value of access, so a common fee selects a lower
interval of types. 
The marginal entrant’s gross profit determines the largest fee consistent with that cutoff. The firm accepts this fee after learning its cost.\footnote{
 The pre-type fee studied in Appendix \ref{apx:pretype_fee} is already sunk when operating information arrives, so it leaves subsequent participation
comparisons unchanged.
}

\begin{proposition}
\label{prop_entry_fee}
Suppose condition \eqref{FL} holds.  Then there exists an optimal mechanism implemented by a capacity floor with a common entry fee.

In particular, there exist an entry cutoff $\type_E^*\in(\underline{\type},\overline{\type}]$
and a floor $K^*\geq q^{LF}(\type_E^*)$ such that
\[
    \menu^*=\{0\}\cup[K^*, \infty),\;
    \fee^*=\pi(K^*,\type_E^*)\geq0.
\]
Types $\type\leq\type_E^*$ enter, and types
$\type>\type_E^*$ reject.

\end{proposition}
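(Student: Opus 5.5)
We reduce the fee problem to a family of baseline problems indexed by the entry cutoff and then invoke \autoref{thm_capa_floor} on each truncated type space.

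\textbf{Step 1: entry is a lower interval.} Fix any menu $\menu$ with $0\in\menu$ and fee $\fee\geq0$. Conditional on entry, type $\type$ earns $U(\type):=\max_{\capa\in\menu}\pi(\capa,\type)-\fee$. Since $\pi(\capa,\type)$ is nonincreasing in $\type$ for every $\capa$ (by the envelope argument behind \autoref{lem:monotone-welfare}), $U$ is nonincreasing. Hence the entering set is an interval $[\underline{\type},\type_E]$ when $\fee>0$; with $\fee=0$ every type weakly enters, and we set $\type_E=\overline{\type}$. Given the cutoff, the largest fee consistent with entry by all $\type\leq\type_E$ is $\fee=\max_{\capa\in\menu\setminus\{0\}}\pi(\capa,\type_E)$. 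Because the fee enters welfare one-for-one with weight $1$ and the firm's profit with weight $\alpha\leq1$, raising $\fee$ up to that level while keeping the cutoff fixed weakly raises welfare. Objective welfare for a cutoff $\type_E$ is therefore
\[
\int_{\underline{\type}}^{\type_E}\bigl[w(\capa(\type),\type)\bigr]\dint F+F(\type_E)\,\fee(1-\alpha),
\]
where $\fee=\pi(\capa(\type_E),\type_E)$ and the term $-\alpha\fee$ from the firm's payoff nets against the revenue $\fee$.

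\textbf{Step 2: fix $\type_E$ and optimize the menu.} For fixed $\type_E$, the problem is the baseline \eqref{direct} on $[\underline{\type},\type_E]$, with density $f$ renormalized, plus the bonus term $F(\type_E)(1-\alpha)\pi(\capa(\type_E),\type_E)$. That term depends only on the highest entrant's payoff, and the IR constraint must now hold with $\Pi(\type_E)\geq0$. \eqref{FL} is inherited on the subinterval: $-f'/f$ is unchanged, and $\hat q$ is replaced by $q^{LF}(\type_E)\geq q^{LF}(\overline{\type})$. Since $-qP'(q)$ is increasing under concave consumer surplus, the bound $\alpha/(-qP'(q))$ only loosens. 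We then rerun the optimal-control argument of \autoref{thm_capa_floor} (Appendix~\ref{apx}) with the extra endpoint payoff. The added term $F(\type_E)(1-\alpha)\Pi(\type_E)$ rewards a higher endpoint profit. This enters as a transversality or boundary term, which pushes toward \emph{lower} capacity for the top types. It therefore reinforces, rather than breaks, the pooling-at-the-top structure, and a single floor $\{0\}\cup[K,\infty)$ remains optimal among menus on $[\underline{\type},\type_E]$. Nonnegativity of the fee holds because entry by $\type_E$ requires $\pi(K,\type_E)\geq0$.

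\textbf{Step 3: optimize over the cutoff.} Welfare as a function of $(\type_E,K)$ is continuous on the compact set $\type_E\in[\underline{\type},\overline{\type}]$, $K\in[0,q^c(\underline{\type})]$, so a maximizer exists. To get $\type_E^*>\underline{\type}$, note that $\type_E=\underline{\type}$ yields zero welfare, whereas $\type_E=\overline{\type}$ with a small floor gives positive welfare. If the maximizing $K$ is below $q^{LF}(\type_E^*)$, the floor does not bind on entrants, so we may raise it to $q^{LF}(\type_E^*)$ without changing any entrant's choice or the fee. This gives $K^*\geq q^{LF}(\type_E^*)$. Finally, any general mechanism with a fee induces some cutoff, so by Steps 1–2 it is weakly dominated.

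\textbf{Main obstacle.} The hard step is Step 2. One must verify that the endpoint term $(1-\alpha)F(\type_E)\pi(\capa(\type_E),\type_E)$ does not make a richer menu optimal, for instance one that separates the top type to extract a larger fee. The first attempt would be to absorb this term into the Lagrangian multiplier at $\type_E$ in the appendix's control formulation, and to check that the sign conditions used to verify the floor, namely the monotone ordering from \eqref{FL}, still hold with a modified boundary value of the costate.
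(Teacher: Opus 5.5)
Your outer structure matches the paper's. Entry is a lower interval because gross value $\max_{\capa\in\menu}\pi(\capa,\type)$ is nonincreasing in $\type$. The fee can be set equal to the marginal entrant's gross profit, so welfare reduces to the entrants' integral plus $(1-\alpha)F(\type_E)\Pi(\type_E)$. Compactness in $(\type_E,K)$, together with the zero-welfare comparison at $\type_E=\underline\type$, then finishes. The gap is where you place it, in Step 2. The assertion that the endpoint term ``pushes toward lower capacity for the top types'' and therefore ``reinforces'' pooling is not an argument. It does not exclude richer menus (gaps in the menu, different pooling intervals) that trade the entrants' allocation against the marginal entrant's profit differently. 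The free-endpoint route you sketch would require exhibiting, for each $\type_E$, a floor whose costate satisfies $\Lambda(\type_E)=(1-\alpha)F(\type_E)$. At the corners you would instead need $\Lambda(\type_E)\geq(1-\alpha)F(\type_E)$ at $r=0$ and $\Lambda(\type_E)\leq(1-\alpha)F(\type_E)$ at $r=\pi(q^{LF}(\type_E),\type_E)$, so that the terminal boundary term from the integration by parts has the right sign. The costate built in the proof of \autoref{thm_capa_floor} carries no such terminal calibration, because that proof only compares paths with a common terminal value. You do not supply one either.

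The missing idea is to condition on the marginal entrant's profit as well as on the cutoff. Fix $(\type_E,r)$ with $r=\Pi(\type_E)=\fee$. Then $(1-\alpha)F(\type_E)r$ is a constant, and what remains on $[\underline\type,\type_E]$ is exactly the fixed-terminal problem \eqref{T} from Step 2 of the proof of \autoref{thm_capa_floor}. That problem is solved, for every admissible $r$, by the upper-root floor $K\geq q^{LF}(\type_E)$ with $\pi(K,\type_E)=r$. The verification only needs $f'S'(Q_K)+\alpha f\geq0$ with $Q_K\geq q^{LF}(\overline\type)$, which \eqref{FL} gives directly on the truncated support. No modification of the costate is needed. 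Optimizing over $(\type_E,r)$, equivalently over $(\type_E,K)$ in the closed bounded set with $K\geq q^{LF}(\type_E)$ and $\pi(K,\type_E)\geq0$, is then precisely your Step 3; include this participation constraint in your compact set. This is the paper's proof. One slip to fix along the way: under \eqref{FL}, $S'(q)=-qP'(q)$ is nonincreasing, not increasing. That monotonicity is why $q^{LF}(\type_E)\geq\hat q$ loosens the density bound; your stated reason would tighten it.
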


To see how the instruments interact, define
$\capa_K(\type):=\max\{K,q^{LF}(\type)\}$ and gross profit
$v_K(\type):=\pi(\capa_K(\type),\type)$. The regulator chooses the
cutoff and floor jointly by solving
\begin{equation}
\max_{\type_E,K}\quad\mathcal J(\type_E,K)
:=\int_{\underline\type}^{\type_E} \left[S(q^c(\capa_K(\type),\type))+\alpha v_K(\type)\right]\dint F(\type)+(1-\alpha)F(\type_E)v_K(\type_E),
\label{F}\tag{F}
\end{equation}
subject to $\underline\type\leq\type_E\leq\overline\type$,
$K\geq q^{LF}(\type_E)$, and
$v_K(\type_E)=\pi(K,\type_E)\geq0$.
The integral measures consumer surplus and weighted gross firm profit among
entrants. The final term accounts for collecting the common fee: each dollar
received by the regulator reduces the entrant's profit by one dollar, giving
a net welfare gain of $1-\alpha$.

Holding the cutoff fixed, the fee is pinned down by the marginal entrant's
profit. A tighter floor reduces that profit and hence the fee paid by every
entrant. Holding the floor fixed, admitting a higher-cost marginal type
similarly requires a lower common fee.  
The floor structure follows from the same fixed-terminal comparison used in
the proof of \autoref{thm_capa_floor}. 
For a given cutoff and marginal entrant's gross profit, fee revenue is fixed. 
A floor can then improve the allocation among entrants while preserving that profit. 
The regulator still must choose the cutoff and
floor that give the best overall combination. In particular, the baseline
condition for underutilization must now account for forgone common fees.

\begin{corollary}[Underutilized capacity floor with an entry fee]\label{cor_entry_fee_underuse}
Suppose \eqref{FL} holds. Let $(\type_E^*,K^*)$ solve \eqref{F}, and define
$q_E:=q^c(\type_E^*)$ and
$a_E:=\type_E^*-\capacost'(q_E)$. If $a_E\geq\underline\type$ and
\begin{equation}
-q_EP'(q_E)>
\max\left\{
\frac{\capacost(q_E)}{q_E},
\frac{F(\type_E^*)-\alpha F(a_E)}{f(a_E)}
\right\},
\label{U-E}\tag{U-E}
\end{equation}
then $K^*>q_E$, and a positive measure of entrants underutilizes the floor.
If, in addition,
\begin{equation}
-q_EP'(q_E)<\frac{\capacost(q^{LF}(\underline\type))}{q_E},
\label{LF-E}\tag{LF-E}
\end{equation}
there are cutoffs
$\underline\type<\type_{FL}<\type_U<\type_E^*$ separating three
positive-length entrant regions: laissez faire, full utilization of the floor,
and underutilization. When $\type_E^*<\overline\type$, a fourth region
consists of nonentrants, with zero capacity and output.
\end{corollary}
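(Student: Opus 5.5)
We take Proposition~\ref{prop_entry_fee} as given: at the optimum $(\type_E^*,K^*)$ of \eqref{F}, entrants face the floor $K^*$ and pay $\fee^*=\pi(K^*,\type_E^*)$. The plan is to treat $\type_E^*$ as fixed and vary only $K$. To show $K^*>q_E$, we suppose instead that $K^*\le q_E$ and construct a local increase of $K$ that raises $\mathcal J(\type_E^*,\cdot)$ while staying feasible. The three-region claim under \eqref{LF-E} then follows by comparing $K^*$ with $q^{LF}(\underline\type)$ and with $q^c(\type)$ along $[\underline\type,\type_E^*]$. This mirrors the argument for \autoref{prop_underuse}; the new ingredient is the fee term $(1-\alpha)F(\type_E^*)v_K(\type_E^*)$.

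\emph{Step 1 (feasibility).} The first inequality in \eqref{U-E} is $\pi(q_E,\type_E^*)>0$, since the margin of $\type_E^*$ at $q_E$ is $P(q_E)-\type_E^*=-q_EP'(q_E)$. By continuity, $v_K(\type_E^*)\ge 0$ continues to hold for $K$ slightly above $q_E$. The constraint $K\ge q^{LF}(\type_E^*)$ holds automatically because $q^{LF}<q^c$. Raising $K$ slightly above any $K\le q_E$ is therefore admissible, and the derivative of $\mathcal J$ in $K$ can be used.

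\emph{Step 2 (the derivative).} For $K\le q_E$, every entrant with $q^{LF}(\type)<K$ is tight, and these are the types $\type\in(a(K),\type_E^*]$ where $q^{LF}(a(K))=K$. By the laissez-faire first-order condition, $a(K)=(KP(K))'-\capacost'(K)$; at $K=q_E$ this gives $a(q_E)=\type_E^*-\capacost'(q_E)=a_E$, because $(q_EP(q_E))'=\type_E^*$. The boundary terms at $a(K)$ vanish by continuity of $\capa_K$ there. We then get
\[
\partial_K\mathcal J=\int_{a(K)}^{\type_E^*}\Bigl[-KP'(K)+\alpha\bigl((KP(K))'-\type-\capacost'(K)\bigr)\Bigr]\dint F+(1-\alpha)F(\type_E^*)\bigl((KP(K))'-\type_E^*-\capacost'(K)\bigr).
\]
At $K=q_E$ the fee bracket equals $-\capacost'(q_E)$. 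The hypothesis $a_E\ge\underline\type$ keeps the tight region inside the support.

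\emph{Step 3 (sign and extension to all $K\le q_E$).} Integrate the $\alpha$-term by parts, using $(q_EP(q_E))'-\capacost'(q_E)=a_E$. This gives $\alpha\int_{a_E}^{\type_E^*}(a_E-\type)\dint F=-\alpha\int_{a_E}^{\type_E^*}(F(\type)-F(a_E))\dint\type$. Bounding the ratio $(F(\type)-F(a_E))/(\type-a_E)$ via \eqref{FL} should reduce positivity of the derivative to
\[
-q_EP'(q_E)\bigl(F(\type_E^*)-F(a_E)\bigr)>\capacost'(q_E)\bigl(F(\type_E^*)-\alpha F(a_E)\bigr).
\]
Since $\type_E^*-a_E=\capacost'(q_E)$, this is implied by the second bound in \eqref{U-E} after writing $F(\type_E^*)-F(a_E)\approx f(a_E)\capacost'(q_E)$ and using the density restriction to control the error.

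\textbf{Main obstacle.} The hardest part is making this last step exact rather than first-order. I would first try the convexity-type bound $F(\type_E^*)-F(a_E)\ge f(a_E)(\type_E^*-a_E)$ up to an $-f'/f$ correction, which \eqref{FL} controls. Next, I would show that $\partial_K\mathcal J>0$ on the whole interval $K\le q_E$, not only at $q_E$. For this I would use the concavity of $S$ from \eqref{FL}, as in \autoref{thm_capa_floor}, to show the derivative is decreasing in $K$ on the tight range. Then $K^*\le q_E$ cannot be optimal, so $K^*>q_E$ and types $\type\in(\type_U,\type_E^*]$ with $q^c(\type)<K^*$ underutilize.

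\emph{Step 4 (three regions).} Under \eqref{LF-E}, $\pi(q^{LF}(\underline\type),\type_E^*)<0$, which violates the fee constraint. Hence $K^*<q^{LF}(\underline\type)$, and the laissez-faire region has positive length. The set $\{\type: q^{LF}(\type)<K^*\le q^c(\type)\}$ has positive length because $q^{LF}<q^c$ with both continuous. Monotonicity of both schedules in $\type$ orders the cutoffs $\type_{FL}<\type_U<\type_E^*$. Nonentrants above $\type_E^*$ reject the fee and choose capacity and output zero.
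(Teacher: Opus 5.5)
Your Steps 1, 2 and 4 are sound and agree with the paper; your formula for $\partial_K\mathcal J$ is exactly the paper's. The gap is in Step 3, which carries the corollary.

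First, the integration by parts is wrong. The correct identity is $\int_{a_E}^{\type_E^*}(a_E-\type)\dint F(\type)=-\int_{a_E}^{\type_E^*}[F(\type_E^*)-F(\type)]\dint\type$, which turns the derivative at $K=q_E$ into
\[
\mathcal J_K(\type_E^*,q_E)=\int_{a_E}^{\type_E^*}\bigl\{S'(q_E)f(\type)+\alpha F(\type)-F(\type_E^*)\bigr\}\dint\type .
\]

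Second, and more seriously, your target inequality $S'(q_E)[F(\type_E^*)-F(a_E)]>\capacost'(q_E)[F(\type_E^*)-\alpha F(a_E)]$ is what remains after replacing $\alpha F(\type)$ by $\alpha F(a_E)$. That is, you discard $\alpha\int_{a_E}^{\type_E^*}[F(\type)-F(a_E)]\dint\type$, which is exactly the term \eqref{FL} needs to offset a declining density. Integrating $S'(q_E)f'\geq-\alpha f$ twice gives only $S'(q_E)[F(\type_E^*)-F(a_E)]\geq S'(q_E)f(a_E)\capacost'(q_E)-\alpha\int_{a_E}^{\type_E^*}[F(\type)-F(a_E)]\dint\type$. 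This is tight, so the ``error'' you hope to control never vanishes. Concretely, take $\alpha=1$, $S'$ constant, and $-f'/f=1/S'(\hat q)$ on $[a_E,\type_E^*]$ with $\hat q=q^{LF}(\overline\type)$. Then the second half of \eqref{U-E} holds automatically, while your inequality reduces to $S'(q_E)>\capacost'(q_E)$, which need not hold.

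Third, the extension to $K<q_E$ via ``the derivative is decreasing in $K$'' is false in general. One has $\frac{\dint}{\dint K}\mathcal J_K=-a'(K)[S'(K)f(a(K))+\alpha F(a(K))-F(\type_E^*)]+S''(K)[F(\type_E^*)-F(a(K))]$. When $S''\equiv0$ (e.g., $P(q)=A-B\log q$) and the bracket is positive, this is positive. So positivity at $q_E$ does not propagate downward.

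The missing idea is to keep the $S'f$ and $\alpha F$ terms together. Set $\Phi_K(\type):=S'(K)f(\type)+\alpha F(\type)$. The same integration by parts gives $\mathcal J_K=\int_{a(K)}^{\type_E^*}\{\Phi_K(\type)-F(\type_E^*)\}\dint\type$ for every $K\in(q^{LF}(\type_E^*),q_E]$.
\begin{itemize}
\item Condition \eqref{FL} makes $\Phi_K$ nondecreasing in $\type$. This is the computation from the proof of \autoref{thm_capa_floor}, using $S'(K)\leq S'(\hat q)$.
\item \eqref{FL} also makes $K\mapsto\Phi_K(a(K))$ nonincreasing, since its derivative $S''(K)f(a(K))+a'(K)[S'(K)f'(a(K))+\alpha f(a(K))]$ is nonpositive.
\item The second half of \eqref{U-E} is precisely $\Phi_{q_E}(a_E)>F(\type_E^*)$.
\end{itemize}
Hence $\Phi_K(\type)\geq\Phi_K(a(K))\geq\Phi_{q_E}(a_E)>F(\type_E^*)$ pointwise on the integration range, for all such $K$ at once. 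You then need neither an approximation of $F(\type_E^*)-F(a_E)$ nor monotonicity of $\mathcal J_K$ in $K$. Your Step 1 gives strict participation at $q_E$, so the right derivative there is also positive; together with your Step 4, this completes the proof.
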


\autoref{cor_entry_fee_underuse} is analogous to  \autoref{prop_underuse}. The first term in \eqref{U-E} ensures viability as the
floor passes the marginal type's continuation output. The second incorporates
the revenue lost when the floor lowers the fee paid by all entrants.
Condition \eqref{LF-E} preserves a nonempty laissez-faire region. Crucially,
these conditions are evaluated at the \emph{endogenous optimal cutoff} from \autoref{prop_entry_fee}.

\medskip

To sum up, the additional instruments considered in this section relax different constraints. Subsidies encourage investment
but cannot collect rents through taxes; a common fee collects cutoff rent
without changing capacity comparisons within the menu. In both cases, capacity
expands output only up to the firm's continuation ceiling.

\section{Conclusion}
\label{sec_conclusion}

Capacity regulation can expand service even when the regulator cannot contract on the firm’s
output. Its effect comes from the difference between an investment decision that accounts for
capacity cost and a production decision made after that cost is sunk. We provide conditions
under which a single capacity floor is optimal and every type continues to operate. Under
further conditions, the same floor is fully used by some types and knowingly underutilized
by others.
This outcome brings together an output benefit and a resource cost. Raising the floor increases service when capacity constrains production; beyond a firm’s continuation output,
it adds only installation cost. Private information prevents the regulator from treating these
consequences as independent choices for each type. An exemption that saves a slack firm’s
capacity cost also changes the menu available to other types. Optimal capacity control
must account for those incentive effects across types, and can entail physical waste despite improving on laissez-faire.
The extensions show that access to money does not settle this tradeoff by itself. Nonnegative capacity subsidies can support greater investment by efficient firms while leaving a
zero-subsidy slack-capacity tail.
A common entry fee screens participation but does not alter capacity
comparisons within a given menu. The relevant question is
which payments the regulator can condition on which choices.
Capacity requirements should therefore be evaluated jointly with output, costs, and participation. Observed underutilization alone does not imply excessive regulation; it can be a necessary evil of optimal capacity control.

\bibliography{references}
\bibliographystyle{ecta}

\pagebreak

\appendix

\section{Reformulation of the Mechanism Design}
\label{apx}
\begin{lemma}\label{lem:opt_ctrl}
    The direct mechanism design problem \eqref{direct} is equivalent to the following optimal control problem with piecewise smooth
    state variables $(\Pi, \capa)$ and piecewise continuous control $z$ with finitely many jumps: \begin{equation}
        \max_{\Pi, \capa, z} \intfullsupport[V(q^c(\capa(\type), \type)) -q^c(\capa(\type), \type) P(q^c(\capa(\type), \type)) + \alpha \Pi(\type)] f(\type) \dint \type \label{controlProblem}
    \end{equation}
subject to, for almost all $\type \in \typespace$, \begin{align}
    &\Pi'(\type) = -q^c(\capa(\type), \type) \label{IC-Env} \tag{IC-Env}\\
    &\capa'(\type) = z(\type) \leq 0 \label{IC-Mon} \tag{IC-Mon}\\
    &\Pi(\type) = \pi(\capa(\type), \type) \label{N-OCT} \tag{N'}
\end{align}
with boundary conditions \[
\Pi(\underline{\type}) \leq \pi(q^{LF}(\underline{\type}),\underline{\type}),\; \Pi(\overline{\type}) \geq 0, \; \capa(\underline{\type}) \text{ free}, \; \capa(\overline{\type}) \geq 0.
\]
At every interior jump $\sigma \in (\underline{\type}, \overline{\type})$, the states satisfy\[
\capa(\sigma_-) \geq \capa(\sigma) \geq \capa(\sigma_+) \text{ and } \Pi(\sigma) = \pi(\capa(\sigma_-), \sigma) = \pi(\capa(\sigma), \sigma) = \pi(\capa(\sigma_+), \sigma).
\]
\end{lemma}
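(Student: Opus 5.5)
The plan is to show the two directions of equivalence: every feasible direct mechanism for \eqref{direct} yields an admissible triple $(\Pi,\capa,z)$ satisfying the constraints of the control problem with the same objective value, and conversely every such triple defines a feasible mechanism.

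\textbf{Objective.} Substitute \eqref{N} into $w$. Since $w(\capa,\type)=S(q^c(\capa,\type))+\alpha\pi(\capa,\type)$ with $S(q)=V(q)-qP(q)$, the integrand of \eqref{direct} becomes exactly the integrand of \eqref{controlProblem} once $\pi(\capa(\type),\type)$ is replaced by $\Pi(\type)$. This settles the objective, so the work is entirely in the constraints.

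\textbf{Necessity.} Take a feasible mechanism with piecewise $C^1$ capacity.
\begin{enumerate}
\item \emph{Monotonicity.} Lemma~\ref{lem:monotone-welfare} gives that $\capa$ is weakly decreasing. Hence $z=\capa'\leq 0$ on smooth pieces, and at jumps $\capa(\sigma_-)\geq\capa(\sigma)\geq\capa(\sigma_+)$.
\item \emph{Envelope condition.} $\pi(\capa,\type)$ is the value of a maximization over $q\in[0,\min\{\capa,\qmax\}]$ in which $\type$ enters linearly. Thus $\partial_\type\pi(\capa,\type)=-q^c(\capa,\type)$, and by strict concavity this derivative is continuous in $(\capa,\type)$. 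Since $\Pi(\type)=\max_{\hat\type}\pi(\capa(\hat\type),\type)$ by \eqref{IC}, the envelope theorem (Milgrom--Segal) gives that $\Pi$ is absolutely continuous with $\Pi'=-q^c(\capa(\type),\type)$ a.e., which is \eqref{IC-Env}.
\item \emph{Jump conditions.} Continuity of $\Pi$ combined with \eqref{N} at both one-sided limits yields $\pi(\capa(\sigma_\pm),\sigma)=\Pi(\sigma)$.
\item \emph{Boundary conditions.} $\Pi(\overline\type)\geq0$ is \eqref{IR}. $\Pi(\underline\type)\leq\pi(q^{LF}(\underline\type),\underline\type)$ holds because laissez-faire capacity is the unconstrained maximizer of $\pi(\cdot,\underline\type)$.
\end{enumerate}

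\textbf{Sufficiency.} Given $(\Pi,\capa,z)$ satisfying the constraints, define the mechanism $(\capa,\Pi)$; \eqref{N} is immediate from \eqref{N-OCT}.
\begin{enumerate}
\item \emph{Participation.} $\Pi$ is decreasing by \eqref{IC-Env}, so $\Pi(\overline\type)\geq0$ gives \eqref{IR} for all types.
\item \emph{Incentive compatibility.} For $\hat\type>\type$, use \eqref{IC-Env}, \eqref{N-OCT}, and the envelope identity:
\[
\Pi(\type)-\pi(\capa(\hat\type),\type)=\int_\type^{\hat\type}\bigl[q^c(\capa(s),s)-q^c(\capa(\hat\type),s)\bigr]\dint s\geq0,
\]
because $\capa(s)\geq\capa(\hat\type)$ for $s\leq\hat\type$ and $q^c$ is nondecreasing in capacity. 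The case $\hat\type<\type$ is symmetric. The single-crossing structure holds because $\partial_\type\pi=-q^c$ is monotone in $\capa$. Jumps are harmless since $\Pi$ is continuous, which is guaranteed by the indifference conditions at each jump.
\end{enumerate}

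\textbf{Main obstacle.} The delicate step is justifying the envelope representation with a piecewise-smooth capacity schedule and a value function whose $\capa$-derivative is kinked where $q^c(\capa,\type)$ switches from $\capa$ to $q^c(\type)$. I would first try handling this through the Milgrom--Segal theorem, which only needs the $\type$-derivative bounded by $\qmax$. I would then verify separately that \eqref{N-OCT} forces the joint jump indifference rather than merely continuity of $\Pi$.
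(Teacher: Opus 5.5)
\textbf{Gap: the capacity-monotonicity step is circular.} Your handling of the objective, the Milgrom--Segal envelope step, the jump and boundary conditions, and the sufficiency direction matches the paper's proof essentially step for step. The problem is how you get capacity monotonicity in the necessity direction. You cite Lemma~\ref{lem:monotone-welfare}, but the paper proves that lemma \emph{from} the present result; its appendix proof opens with ``By Lemma~\ref{lem:opt_ctrl}, capacity is weakly decreasing.'' So the citation is circular.

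Nor can this step be filled by the routine single-crossing argument. Single crossing here is not strict, because $q^c(\capa,\type)=\min\{\capa,q^c(\type)\}$ is flat in $\capa$ once capacity is slack. Pairwise incentive compatibility with $\partial_\type\pi=-q^c(\capa,\type)$ therefore does not by itself order capacities. Monotonicity of capacity, rather than of output, is the substantive content of \eqref{IC-Mon}. The delicate step is therefore here, not in the envelope representation you flagged.

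\textbf{The missing argument.} The paper's argument uses the no-transfer constraint \eqref{N} and strictly increasing capacity cost. Suppose $\type_1<\type_2$ but $\capa_1:=\capa(\type_1)<\capa(\type_2)=:\capa_2$.
\begin{itemize}
\item Adding the two pairwise constraints and using the fixed-capacity envelope identity gives $\int_{\type_1}^{\type_2}[q^c(\capa_2,t)-q^c(\capa_1,t)]\dint t\leq 0$.
\item Monotonicity of $q^c(\cdot,t)$ in capacity makes this integrand nonnegative.
\item Since the integrand is continuous in $t$, it vanishes on $[\type_1,\type_2]$. Evaluating at $\type_2$ gives $q^c(\type_2)\leq\capa_1$.
\item Type $\type_2$ then earns the same operating profit under $\capa_1$ as under $\capa_2$ but saves $\capacost(\capa_2)-\capacost(\capa_1)>0$. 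Hence $\pi(\capa_1,\type_2)>\pi(\capa_2,\type_2)=\Pi(\type_2)$, which violates \eqref{IC}.
\end{itemize}
Because \eqref{N} rules out transfers, nothing can offset this saving, and that is why the ordering holds for capacity itself. Replace the citation with this argument and the rest of your proof goes through as written.
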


\begin{proof}[Proof of \autoref{lem:opt_ctrl}] For every fixed capacity $\capa$, uniqueness of continuation output and the envelope theorem give, for $\type_1 < \type_2$, \begin{equation}
    \pi(\capa, \type_2) - \pi(\capa, \type_1) = - \int_{\type_1}^{\type_2} q^c(\capa, t) \dint t. \label{env_lem_oc}
\end{equation}
Thus, $\frac{\partial \pi(\capa, \type)}{\partial \type} = - q^c(\capa, \type)$ throughout the interior of the type space.

Consider first a feasible direct mechanism. Under \eqref{N}, we show that \eqref{IC} is equivalent to the monotone condition \eqref{IC-Mon} and the envelope condition \eqref{IC-Env}. We first establish monotonicity. Suppose that $\type_1 < \type_2$ but $\capa_1:= \capa(\type_1) < \capa(\type_2) =: \capa_2$. The two incentive constraints involving $\type_1$ and $\type_2$ imply \[
\pi(\capa_2, \type_1) - \pi(\capa_1, \type_1) \leq 0 \leq \pi(\capa_2, \type_2) - \pi(\capa_1, \type_2).
\]
 The difference on the right minus the difference on the left is \[
- \int_{\type_1}^{\type_2} [q^c(\capa_2, t) - q^c(\capa_1, t)] \dint t \geq 0,
\] 
using \eqref{env_lem_oc}.
Recall that $q^c(\capa, t)  = \min \{\capa, q^c(t)\}$, which is nondecreasing in $\capa$. Since $\capa_2 > \capa_1$, $q^c(\capa_2,t) \geq q^c(\capa_1,t)$ for all $t \in [\type_1, \type_2]$. This implies the left-hand side of the above inequality should be nonpositive, i.e., \[
-\int_{\type_1}^{\type_2} [q^c(\capa_2, t) - q^c(\capa_1, t)] \dint t = 0.
\]
Again, since $q^c(\capa_2,t) \geq q^c(\capa_1,t)$ for all $t \in [\type_1, \type_2]$, it follows that \[
q^c(\capa_2, t) \equiv q^c(\capa_1, t), \forall t \in [\type_1, \type_2],
\]
which implies $q^c(t) \leq \capa_1$ throughout $[\type_1, \type_2]$. In particular, at $\type_2$, both capacities $\capa_1$ and $\capa_2$ induce the same output $q^c(\type_2) \leq \capa_1$. Therefore,\begin{align*}
    \pi(\capa_2, \type_2) - \pi(\capa_1, \type_2) &= (q^cP(q^c) - \type_2\cdot q^c - \capacost(\capa_2))- (q^cP(q^c) - \type_2\cdot q^c - \capacost(\capa_1))\\
    &= - [\capacost(\capa_2) - \capacost(\capa_1)] < 0, 
\end{align*}
contradicting \eqref{IC}. Thus $\capa$ is nonincreasing. 

Next, we turn to \eqref{IC-Env}. Combining \eqref{IC} and \eqref{N}, we have the value-function representation \[
\Pi(\type) = \max_{\hat{\type}} \pi(\capa(\hat{\type}), \type).
\]
Since $\pi(\capa, \type)$ is continuously differentiable in $\type$ for all $\capa \geq 0$, \cite{MilgromSegal02Env}'s envelope theorem implies \[
\Pi(\type) = \Pi(\overline{\type}) + \int_\type^{\overline{\type}} q^c(\capa(t),t)\dint t,
\]
and $\Pi'(\type) = -q^c(\capa(\type),\type)$ wherever $\Pi(\type)$ is differentiable.

Conversely, we show that \eqref{IC-Env} and \eqref{IC-Mon} combined are sufficient for \eqref{IC}. Suppose an admissible mechanism satisfies the control constraints. Fix $\type_1< \type_2$ and write $\capa_i = \capa(\type_i)$. Evaluate
\begin{align*}
    \pi(\capa_1, \type_1) - \pi(\capa_2, \type_1) &= \Pi(\type_1) - \pi(\capa_2, \type_1) \\
    &=\Pi(\type_1) - \Pi(\type_2) + \pi(\capa_2, \type_2) - \pi(\capa_2, \type_1) \\
    &= \int_{\type_1}^{\type_2}[q^c(\capa(t),t) - q^c(\capa_2,t)] \dint t \geq 0,
\end{align*}
where the last equality uses \eqref{IC-Env} and \eqref{env_lem_oc}, and the last inequality holds because $\capa(t) \geq \capa_2$ by \eqref{IC-Mon}. A similar argument shows \[
 \pi(\capa_2, \type_2) - \pi(\capa_1, \type_2) \geq 0.
\]
These are the two pairwise incentive constraints. Since the pair of types is arbitrary, incentive compatibility is global, i.e., \eqref{IC} holds.

Finally, \eqref{IC-Env} and $q^c \geq 0$ make $\Pi$ nonincreasing. Consequently, \eqref{IR} is equivalent to $\Pi(\overline{\type}) \geq 0$. Together with monotonicity, the terminal capacity restriction ensures $\capa(\type) \geq 0$ for every type. 
However, $\Pi(\underline{\type})$ is bounded above by type-$\underline{\type}$'s laissez-faire profit $\pi(q^{LF}(\underline{\type}),\underline{\type})>0$ because no transfer is allowed.
The no-transfer identity makes the objectives in \eqref{direct} and \eqref{controlProblem} identical. \eqref{N-OCT} restates the no-transfer constraint \eqref{N}. This proves the claimed equivalence.
\end{proof}

\section{Proof of \autoref{thm_capa_floor}}
Write the consumer surplus as $S(q) := V(q) - qP(q)$. We solve the mechanism design problem by first fixing the terminal rent $\pi(\capa, \overline{\type})$ and establishing the optimality of a capacity floor, and then optimizing over the set of feasible terminal rents.
We proceed in three steps.  \textbf{Step 1} defines the fixed-terminal-rent problem and derives conditions to verify the optimality of a candidate mechanism. \textbf{Step 2} shows that under condition \eqref{FL}, a capacity floor implements an optimal mechanism that solves the fixed-terminal-rent problem. \textbf{Step 3} optimizes over the terminal rents and establishes the existence of the optimal capacity floor. 

\medskip
\noindent{\bf Step 1: Fixed terminal optimum verification.} 
In this step, we identify a set of sufficient conditions (i.e., \eqref{c1},\eqref{c2},\eqref{c3}, and \eqref{IC-Mon}) for a candidate mechanism to be optimal under the mechanism design problem with a fixed terminal value. Specifically,  let the terminal profit $r$ be such that $0 \leq r\leq \pi(q^{LF}(\overline{\type}), \overline{\type})$. Consider the following fixed-terminal problem:
\begin{equation}
    \max_{k, \Pi} J(k,\Pi)
=\int_{\underline\type}^{\overline{\type}}
[S(q^c(\capa(\type),\type))+\alpha\Pi(\type)]f(\type) \dint \type, \label{T}\tag{T}
\end{equation}
with nonnegative, piecewise-continuous $\capa$ and an absolutely continuous profit $\Pi$ on $[\underline{\type}, \overline{\type}]$ satisfying 
\[
\Pi(\overline{\type}) = r, \; \Pi'(\type) = -q^c(\capa(\type), \type) \text{ a.e., }\; \Pi(\type) = \pi(\capa(\type),\type)
\]
and \eqref{IC-Mon}. Consider a relaxation of Problem \eqref{T} by dropping \eqref{IC-Mon}, and call this relaxed fixed-terminal problem (R).
Then, every admissible IC mechanism with terminal profit $r$ is feasible under the relaxed problem (R). 

Let the \textit{costate of} $\Pi$,  $\Lambda\colon [\underline{\type}, \overline{\type}] \to \R$, be an absolutely continuous function. Let the \textit{multiplier of \eqref{N-OCT}}, $\lambda\colon [\underline{\type}, \overline{\type}] \to \R$, be an integrable function. 
Pick $\Lambda, \lambda$ such that \begin{equation}
    \Lambda' = \lambda - \alpha f. \label{c1}\tag{c1}
\end{equation}
Define the Hamiltonian as
\[
H(\type, \capa, \Pi) = f(\type)[S(q^c(\capa, \type)) + \alpha \Pi] - \Lambda (\type) q^c(\capa, \type) + \lambda(\type)[\pi(\capa, \type) - \Pi].
\]
Suppose there is a feasible pair $(\capa^*, \Pi^*)$ such that 
\begin{equation}
    \capa^* \text{ globally maximizes } H(\type, \cdot , \Pi^*(\type)) \text{ almost everywhere}, \label{c2}\tag{c2}
\end{equation}
and 
\begin{equation}
    \Lambda(\underline{\type})\cdot [\Pi(\underline\type)-\Pi^*(\underline\type)]\leq 0 \text{ for every feasible candidate } \Pi. \label{c3}\tag{c3}
\end{equation}
Along every feasible path, $\pi(\capa, \type) = \Pi$ and $\Pi' = -q^c$ imply a reformulation of the objective: \[
J(\capa,\Pi)
=\int_{\underline\type}^{\overline{\type}}
[H(\type,\capa,\Pi)-\Lambda(\type)\Pi'(\type)]\dint \type.
\]
Moreover, $H$ is affine in $\Pi$ with $\frac{\partial H}{\partial\Pi}  = \alpha f - \lambda = -\Lambda'$. Using integration by parts and the common terminal value, we write the difference between the objective value under another feasible mechanism\footnote{
This consideration includes the mechanisms with shutdown, i.e., $\capa(\type) = \Pi(\type) = 0$ for some $\type$.
} $(\capa, \Pi)$ and the current candidate $(\capa^*,\Pi^*)$ as 
\[
J(\capa,\Pi)-J(\capa^*,\Pi^*)
=\int_{\underline\type}^{\overline{\type}}
[H(\type,\capa,\Pi^*)-H(\type,\capa^*,\Pi^*)]\dint\type+\Lambda(\underline\type)
[\Pi(\underline\type)-\Pi^*(\underline\type)].
\]
Both terms on the right-hand side are nonpositive, so the candidate solves Problem (R). If $\capa^*$ is nonincreasing, then \autoref{lem:opt_ctrl} supplies global IC, and thus it solves Problem \eqref{T}.

\medskip

\noindent{\bf Step 2: Under \eqref{FL}, a capacity floor solves every fixed-terminal problem.}
The type-$\overline{\type}$ firm's profit is continuous, has its unique maximizer at $q^{LF}(\overline{\type})$, is strictly decreasing above that maximizer, and tends to negative infinity. 
Hence, there is a unique upper root $K = K(\overline{\type}, r) \geq q^{LF}(\overline{\type})$ that solves \[
\pi(K, \overline{\type}) = r.
\]
According to Appendix \ref{apx:q-m-q-lf}, $q^{LF}(\overline{\type}) > 0$, so $K > 0$. Consider a candidate mechanism $(\capa^*, \Pi^*)$ given by \[
\capa^*(\type) = \max\{K, q^{LF}(\type)\},\; \Pi^*(\type) = \pi(\capa^*(\type), \type).
\]
The capacity is nonincreasing, satisfying \eqref{IC-Mon}. The profit is continuous and piecewise smooth with $(\Pi^*)' = -q^c$, and has terminal value $r$. It is globally IC by \autoref{lem:opt_ctrl}. Note that $\capa^*$ is outcome equivalent to a capacity floor $\{0\} \cup [K, \infty)$. Next, we show that under \eqref{FL}, $k^*$ pointwise maximizes the Hamiltonian $H(\type, \cdot, \Pi^*(\type))$ for each $\type$.

There are two cases. 
Suppose, first, that $K \geq q^c(\underline{\type})$. Then every type $\type \in \typespace$ weakly underutilizes capacity ($q \leq \capa$). Set \[
\lambda = 0, \; \Lambda(\type) = -\alpha [F(\type) - F(\underline{\type})].
\]
Then $\Lambda(\underline{\type}) = 0$, satisfying \eqref{c1} and \eqref{c3}. For each $\type$, substituting $\Lambda$ and $\lambda$ gives the Hamiltonian \[
H_{\type}(\capa) = f(\type) S(q^c(\capa, \type)) +\alpha[F(\type) - F(\underline{\type})]q^c(\capa, \type) + \alpha f(\type)\Pi^*(\type).
\]
For the tight branch ($0 \leq \capa \leq q^c(\type)$), capacity binds, so $q^c(\capa, \type) = \capa$. Hence, for $\capa_2 > \capa_1$ on the tight branch, \[
H_{\type}(\capa_2)  - H_{\type}(\capa_1)  = f(\type)[S(\capa_2) - S(\capa_1)] + \alpha [F(\type) - F(\underline{\type})](\capa_2 - \capa_1) \geq 0,
\]
because both terms are nonnegative. In particular, $S'(k) = -kP'(k) > 0$ for $k > 0$. Therefore, the Hamiltonian increases in $\capa$ as $\capa$ approaches $q^c(\type)$. For the slack branch ($\capa \geq q^c(\type)$), $q^c(\capa, \type) = q^c(\type)$. Increasing capacity no longer changes output. Moreover, the only other place where $\capa$ enters the Hamiltonian is through $\lambda(\type)\pi(\capa, \type)$, which vanishes because $\lambda = 0$. Therefore, the Hamiltonian is constant in $\capa$ throughout the slack branch. Consequently, every $\capa \geq q^c(\type)$ is a global pointwise maximizer. Finally, $q^c(\type)$ decreases with $\type$, so $K \geq q^c(\underline{\type}) \geq q^c(\type)$ for every type. Therefore, the proposed floor $K$ pointwise maximizes each type's Hamiltonian, \textit{given that} the profit $\Pi^*(\type)$ satisfies terminal value $\Pi^*(\overline{\type}) = r$. Condition \eqref{c2} is satisfied.

Now suppose that $0 < K < q^c(\underline{\type})$, i.e., at least some types will fully utilize the capacity ($q = \capa$). Define cutoffs \[
a := (KP(K))' - \capacost'(K), \; b := (KP(K))', \; d := \max \{a, \underline{\type}\}.
\]
Note that $a \leq \overline{\type}$, $a < b$, and $b > \underline{\type}$. 
Call the interval $[\underline{\type}, d]$ the \emph{laissez-faire region}, the interval $(d,b]$ the \emph{fully-used-floor region}, and the interval $(b,\overline{\type}]$ the \emph{underutilization region}. 
Note that the laissez-faire region and the underutilization region may be empty. Nevertheless, they are covered by the following analysis. 

Define output induced by a candidate floor $K$ as \[
Q_K(\type):= \min \{\max\{K, q^{LF}(\type)\}, q^c(\type)\},
\]
and for the same $K$,\[
\Phi_K(\type):= f(\type)S'(Q_K(\type)) + \alpha F(\type).
\]
We claim that $\Phi_K$ is nondecreasing. To see this, note that $Q_K(\type) \geq q^{LF}(\overline{\type}) =: \hat{q}$. Since $S''(q) \leq 0$, we have \[
S'(Q_K(\type)) \leq S'(\hat{q}).
\]
Moreover, $Q_K$ is nonincreasing. We have\[
\Phi_K' = f'S'(Q_K) +  \underbrace{ fS''(Q_K) Q_K'}_{\geq 0} + \alpha f, a.e.
\]
If $f' \geq 0$, then $f'S'(Q_K) + \alpha f \geq 0$. If $f'< 0$, then $S'(Q_K(\type)) \leq S'(\hat{q})$ implies \[
f'S'(Q_K) + \alpha f \geq f'S'(\hat{q}) + \alpha f \geq 0
\]
by \eqref{FL}. Hence $\Phi_K'(\type) \geq 0$.

We propose the following costate $\Lambda$ and multiplier $\lambda$ consistent with \eqref{c1}: if $d = a$, \[
\Lambda(\type) = \begin{cases}
    f(\type) S'(q^{LF}(\type)), & \type \in [\underline{\type}, d]\\
    \frac{\int_d^\type f(t)[S'(K)-\alpha(t-a)]dt}{\type-a}, & \type \in (d, b]\\
    \Lambda (b) - \alpha[F(\type) - F(b)], &\type \in (b, \overline{\type}]
\end{cases}, \quad \lambda(\type) = \begin{cases}
    \Lambda'(\type) + \alpha f(\type), & \type \in [\underline{\type}, d]\\
    \frac{f(\type)S'(K)-\Lambda(\type)}{\type-a}, & \type \in (d, b]\\
    0,  &\type \in (b, \overline{\type}]
\end{cases}
\]
with $\Lambda(a) = f(a)S'(K)$. If $d > a$, then $d= \underline{\type}$ and the laissez-faire ``region'' is a singleton. In this case, set $\Lambda(\underline{\type}) = 0$, and use the same formula as in the case $d = a$ for other regions. 

Note that the underutilization region, if it exists, shares \(\lambda=0\) and \(\Lambda'=-\alpha f\) with the previous case $K \geq q^c(\underline\type)$.

We first verify that $\lambda(\type) \geq 0$, for almost every $\type \in [\underline\type, \overline{\type}]$. Note that $S' > 0$ and $(q^{LF})' < 0$. By \eqref{FL}, we have $S'' \leq 0$.
On the laissez-faire region $[\underline{\type}, d]$, $Q_K = q^{LF}$. Therefore, 
 \[
\lambda(\type) = \frac{d}{d\type} [f(\type)S'(q^{LF}(\type)) + \alpha F(\type)] = \Phi_K'(\type) \geq 0.
\]

On the fully-used-floor region,  $Q_K(\type) \equiv K$. Define $h_K(\type):= S'(K)f(\type)  + \alpha[F(\type) - F(d)]$. Then, \[
h_K(\type) = \Phi_K(\type) - \alpha F(d).
\]
Therefore, on the floor region, 
\[
\lambda(\type)
= \frac{h_K(\type)}{\type - a} - \frac{1}{(\type - a)^2} \int_d^\type h_K(t) \dint t.
\]
When \(d=a\) and the floor region is nonempty, fix $0 < \epsilon< \min\{b, \overline{\type}\} - a$. We show that $\lambda(\type)$ is integrable near $\type = a$.
Absolute continuity of $h_K$ and integration by parts give, for $x > a$, \[
\lambda(x) = \frac{1}{(x-a)^2}\int_a^x (t-a) \dint h_K(t)  \geq 0.
\]
Since $h_K'\geq 0$ almost everywhere, Tonelli's theorem\footnote{
See \citet[Theorem~9.2]{Iyer2022MeasureTheory}.
} implies \[
\int_a^{a+\epsilon} \lambda(x)\dint x = \int_a^{a+\epsilon} \left(1 - \frac{t - a}{\epsilon}\right)\dint h_K(t) 
\leq h_K(a+\epsilon) - h_K(a) < \infty.
\]
Thus $\lambda(\type)$ is locally integrable in the right neighborhood of $\type = a$.  The absolute continuity of $\Lambda$ is thus also established.
Because $h_K(\type)$ is nondecreasing,
\[
\int_d^\type h_K(t) \dint t \leq (\type - d) h_K(\type).
\]
Therefore, \[
\lambda(\type) \geq \frac{d-a}{(\type - a)^2} h_K(\type) \geq 0.
\]
The last inequality is implied by $h_K(\type) \geq 0$. To see this, note that \[
h_K(d) = S'(K)f(d) + \alpha [F(d) - F(d)] = S'(K)f(d) > 0,
\]
and that $h_K$ is nondecreasing.

Finally, on the underutilization region, $\lambda(\type) = 0$.

We are now ready to show that the capacity schedule $\capa^*$ pointwise maximizes the Hamiltonian. On a tight branch ($0 \leq \capa \leq q^c(\type)$), \[
\frac{\partial H}{\partial \capa} = fS'(\capa) - \Lambda + \lambda[(\capa P(\capa))' - \type - \capacost'(\capa)],
\]
and 
\[
\frac{\partial^2 H}{\partial \capa^2} = fS''(\capa) +\lambda [(\capa P(\capa))'' - \capacost''(\capa)] \leq 0.
\]
Therefore, the Hamiltonian is concave in $\capa$. 
The proposed costate and multiplier make $\frac{\partial H}{\partial \capa} = 0$ at $\capa = q^{LF}(\type)$ in the laissez-faire region and at $\capa = K$ in the fully-used-floor region. To see this, note that in the laissez-faire region, $q^{LF}$ solves $(kP(k))' - \type - \capacost'(\capa) = 0$. In the fully-used-floor region, we use the definition $a = (KP(K))' - \capacost'(K)$. Then, it is straightforward to verify that the choice of $\Lambda$ and $\lambda$ deliberately makes the first-order condition ($\frac{\partial H}{\partial \capa} = 0$) hold. Thus, $\capa^*$ pointwise maximizes $H(\type, \cdot, \Pi^*(\type))$ on the tight branch.

On a slack branch ($\capa \geq q^c(\type)$), we have \[
\frac{\partial H}{\partial \capa} = -\lambda \capacost'(\capa) \leq 0.
\]
Among slack capacities, \(k=q^c(\theta)\) maximizes the Hamiltonian.
For a type whose candidate allocation is tight, that allocation already maximizes the Hamiltonian over the tight branch. Its value is therefore at least the value at \(q^c(\type)\), and no slack capacity can improve on it.

If $b \geq \overline{\type}$, there are no strictly underutilizing types and the argument is complete. Suppose that $b < \overline{\type}$, i.e., there is a positive measure of underutilizing types.
At $\type=b$, we have $q^c(b)=K$ and $b-a=\capacost'(K)$.
The tight-side derivative satisfies
\[
H_{\capa}(b,K^-)
=f(b)S'(K)-\Lambda(b)-\lambda(b)\capacost'(K)=0,
\]
because the multiplier construction gives
\[
f(b)S'(K)-\Lambda(b)
=(b-a)\lambda(b)
=\capacost'(K)\lambda(b).
\]
On the slack side,
\[
H_{\capa}(b,K^+)
=-\lambda(b)\capacost'(K)\leq0.
\]
Thus $K=q^c(b)$ is a kink maximizer. Moreover,
$\Lambda(b)\leq f(b)S'(K)$, an inequality used below for
the strictly underutilized types.
For $\type > b$, the firm is strictly underutilizing the capacity, and we have $\lambda(\type) \equiv 0$. Then, \[
\Lambda(\type) = \Lambda(b) - \alpha[F(\type) - F(b)],
\]
which implies \[
\Lambda(\type) + \alpha F(\type) = \Lambda(b) + \alpha F(b).
\]
Meanwhile, monotonicity of $\Phi_K$ gives \begin{align*}
    f(\type) S'(q^c(\type)) + \alpha F(\type) &= \Phi_K(\type) \\
    &\geq \Phi_K(b) \\
    &=f(b)S'(K) + \alpha F(b)\\
    &\geq  \Lambda(b) + \alpha F(b) = \Lambda(\type) + \alpha F(\type).
\end{align*}
That is, $f(\type) S'(q^c(\type)) \geq \Lambda(\type)$.
 Hence the Hamiltonian increases in $\capa$ until $\capa = q^c(\type)$, and remains constant afterward. Therefore, $K$, which keeps the types choosing $q(\type)\leq K$, remains a global maximizer. 
Thus the candidate floor $K$ pointwise maximizes the Hamiltonian. To sum up, we have established \eqref{c2}, i.e., that $\capa^*$ pointwise maximizes $H(\type, \cdot, \Pi^*)$ for each $\type \in [\underline{\type}, \overline{\type}]$, with given $\Lambda$ and $\lambda$.

Finally, we verify that the transversality condition \eqref{c3} holds. If $a < \underline{\type}$, then $\Lambda (\underline{\type}) = 0$ by construction. If $a \geq \underline{\type}$, the lowest type receives its profit-maximizing capacity, so any other candidate mechanism has $\Pi(\underline{\type}) \leq \Pi^*(\underline{\type})$, while $\Lambda(\underline{\type}) \geq 0$. Condition \eqref{c3} thus follows. 

In conclusion, we have established the optimality of $(\capa^*, \Pi^*)$ for Problem \eqref{T} with terminal condition $\Pi^*(\overline{\type}) = r$.

\medskip

\noindent{\bf Step 3: Optimization of terminal profit.}
Now we consider the original mechanism design problem \eqref{controlProblem}. 
For any feasible mechanism, let $r = \Pi(\overline{\type})$. The no-transfer constraint, IR, and pointwise optimality of $q^{LF}(\overline{\type})$ imply $0 \leq r \leq \pi(q^{LF}(\overline{\type}), \overline{\type})=: \overline{r}$.
\textbf{Step 2} replaces the mechanism by a weakly better floor with the same terminal profit.
 Define $g(K):= \pi(K, \overline{\type})$. The function $g$ is continuous and strictly decreasing on the upper segment $K \geq q^{LF}(\overline{\type})$. 
Moreover, $g(q^{LF}(\overline{\type})) = \pi(q^{LF}(\overline{\type}), \overline{\type}) > 0$, while the unbounded capacity cost implies $g(K) \rightarrow -\infty$ as $K \rightarrow \infty$. Hence there is a unique upper zero profit root $\overline{K}$ satisfying \[
\overline{K} > q^{LF}(\overline{\type}), \; g(\overline{K}) = 0.
\]
Thus $g$ maps $[q^{LF}(\overline{\type}), \overline{K}]$ continuously and strictly decreasingly onto $[0, \overline{r}]$. Its inverse $K(r)$ is therefore continuous on the entire closed interval $[0, \overline{r}]$. For each $r$, define the induced floor allocation by \[
\capa_r(\type) = \max\{K(r), q^{LF}(\type)\}, \; \Pi_r(\type) = \pi(\capa_r(\type), \type).
\]
Its welfare is \[
J(r) = \intfullsupport[S(q^c(\capa_r(\type),\type)) + \alpha \Pi_r(\type)] \dint F(\type).
\]
For each $\type$, the maps $r \mapsto \capa_r(\type),q^c(\capa_r(\type),\type)$, and $\Pi_r(\type)$ are continuous because $K(r)$ is continuous. They are also uniformly bounded. Dominated convergence implies that $J(r)$ is continuous. Since $[0, \overline{r}]$ is compact, the Weierstrass extreme value theorem gives a maximizer $\hat{r}$. Let $\widehat{K}:= K(\hat{r})$. This establishes the existence of a globally optimal floor. 

Finally, we discuss the bounds of $\widehat{K}$. Suppose that $\widehat{K} > q^c(\underline\type)$. Then \[
\widehat{K} > q^c(\underline\type) \geq q^c(\type) > q^{LF}(\type)
\]
for every $\type$. Therefore, every type produces their continuation monopoly output $q^c(\type)$.
Consider lowering the floor from $\widehat{K}$ to $K_0 := q^c(\underline\type)$, which changes no output or consumer surplus. However, it reduces capacity cost by $\capacost(\widehat{K}) - \capacost(K_0) > 0$ pointwise. Hence an optimal capacity floor satisfies $K^* \leq q^c(\underline{\type})$. On the other hand, the lower bound $q^{LF}(\overline{\type})$ is implied by \autoref{lem:full-delegation}. This finishes the proof.
\hfill\qedsymbol

\section{Proof of \autoref{prop_underuse}}\label{apx:proof_underuse}
Write the consumer surplus as $S(q):= V(q) - qP(q)$, so $S'(q) = -qP'(q)$. The viability part of \eqref{U} implies\[
\pi(\overline{q}, \overline{\type}) = \overline{q}P(\overline{q}) - \overline{\type}\cdot \overline{q} - \capacost(\overline{q}) = -\overline{q}^2P'(\overline{q}) - \capacost(\overline{q}) > 0,
\]
where the second equality is obtained by substituting in the monopoly first-order condition for continuation output. Thus the top type $\overline{\type}$ participates strictly when the capacity floor equals $\overline{q}$. By continuity, type $\overline{\type}$ also participates when the capacity floor is in a small right neighborhood of $\overline{q}$.

Let $J(K)$ be welfare under a full-participation floor. For $K \in [q^{LF}(\overline{\type}), \overline{q}]$, all types have $q^c(\type) \geq \overline{q} \geq K$. Hence every type affected by the floor (i.e., their output differs from the laissez-faire one) fully uses it. For such a type, \[
\frac{\partial W(K, K; \type)}{\partial K} = S'(K) + \alpha[(KP(K))' - \type - \capacost'(K)].
 \]
Since the revenue $qP(q)$ is strictly concave, we have \[
(KP(K))' \geq (\overline{q}P(\overline{q}))' = \overline{\type} \geq \type.
\] 
Condition \eqref{FL} makes $S'$ nonincreasing in capacity, while convexity makes $\capacost'$ nondecreasing. Therefore, \[
\frac{\partial W(K, K; \type)}{\partial K} \geq S'(K) - \alpha \capacost'(K) \geq S'(\overline{q}) - \alpha \capacost'(\overline{q}) > 0.
\]
In particular, the welfare is continuous when the floor first reaches a type's laissez-faire capacity. Thus $J(K)$ is strictly increasing from $q^{LF}(\overline{\type})$ through $\overline{q}$.
Therefore, the full-participation capacity floor policies with $K< \overline{q}$ are welfare dominated.
Lower floors than $q^{LF}(\overline{\type})$ implement full delegation and are dominated as is shown in \autoref{lem:full-delegation}.

It remains to exclude $K = \overline{q}$. Using notation similar to that in the proof of \autoref{thm_capa_floor}, 
define the type cutoffs across laissez-faire region, fully-used-floor region, and underutilization region as functions of $K$: \[
a(K) := (KP(K))' - \capacost'(K),\; b(K) := (KP(K))',\; d(K) = \max\{\underline{\type}, a(K)\}.
\]
Then, $[\underline{\type}, d(K)]$ is the laissez-faire region, $(d(K), b(K)]$ is the fully-used-floor region, and $(b(K), \overline{\type}]$ is the underutilization  region.

Consider a floor $K$ in a sufficiently small right neighborhood of $\overline{q}$. The floor welfare is \[
J(K) = \int_{\underline{\type}}^{d(K)}W_{LF}(\type)\dint F(\type) + \int_{d(K)}^{b(K)}W_{F}(K,\type)\dint F(\type) + \int_{b(K)}^{\overline{\type}}W_{O}(K,\type)\dint F(\type),
\]
where we denote $W_{LF}(\type) := W(q^{LF},q^{LF};\type)$ if $\type$ is in the laissez-faire region; $W_{F}(K,\type) := W(K,K; \type)$ if $\type$ is in the fully-used-floor region; and $W_O(K,\type) = W(q^c, K; \type)$ if $\type$ is in the underutilization region. 
Note that the ex post welfare is continuous.
 $W_{LF}$ does not depend on $K$. The direct derivatives of the other integrands are, respectively, \[
\frac{\partial W_{F}}{\partial K} = S'(K) + \alpha [(KP(K))' - \type - \capacost'(K)] = S'(K) + \alpha [a(K) - \type], 
\]
and \[
\frac{\partial W_{O}}{\partial K} = -\alpha \capacost'(K),
\]
because the output $q^c(\type)$ is unaffected by $K$ in the underutilization region. Since $W(q(\type),\capa(\type), \type)$ is continuous across $\type$, the right-derivative of $J(K)$ at $\overline{q}$ is then \begin{align*}
    J'(\overline{q}_+) &= \lim_{K \downarrow \overline{q}}\left\{\int_{d(K)}^{b(K)} [S'(K) + \alpha(a(K) - \type)] \dint F(\type) - \alpha \capacost'(K)[1-F(b(K))]\right\} \\
    &=  \int_{d_m}^{\overline{\type}} [S'(\overline{q}) + \alpha (a_m - \type)]f(\type)\dint \type,
\end{align*}
where $a_m := a(\overline{q}) = \overline{\type} - \capacost'(\overline{q})$ and $d_m = \max \{a_m, \underline{\type}\}$. Note that $b(\overline{q}) = \overline{\type}$, and thus $\lim_{K \downarrow \overline{q}} \{1 - F(b(K))\} = 0$.
The interval $[d_m, \overline{\type}]$ has positive length because $a_m < \overline{\type}$. For every type in it, \[
\type - a_m \leq \overline{\type} - a_m = \capacost'(\overline{q}),
\]
so its integrand is bounded below by \[
S'(\overline{q}) - \alpha \capacost'(\overline{q}) > 0.
\]
We conclude that $J'(\overline{q}_+) >0$. Strict top-type participation makes the rightward perturbation feasible. Therefore, no welfare-maximizing full-participation floor can satisfy $K \leq \overline{q}$. 

By \autoref{thm_capa_floor}, there is a globally optimal full-participation floor. Select one with $K^* > \overline{q}$. Strict concavity of revenue gives \[
[(K^*)P(K^*)]' < [\overline{q}P(\overline{q})]' = \overline{\type}.
\]
Hence every type $\type$ in\[
(\max\{\underline{\type}, [(K^*)P(K^*)]'\}, \overline{\type}]
\]
has $q^c(\type) < K^*$, installs the smallest approved positive capacity $K^*$, and produces $q^c(\type)$. Because $f > 0$, the interval has positive probability. This proves the emergence of underutilization in the optimal mechanism. 

Suppose now that \eqref{LF} holds. Let $\overline{K}$ be type-$\overline{\type}$'s upper zero-profit floor. Then \[
\capacost(\overline{K}) = -\overline{q}^{2} P'(\overline{q}).
\]
Combining \eqref{LF} and \eqref{U}, we have \[
\overline{q} < \overline{K} < q^{LF}(\underline{\type}),
\]
where the second inequality is implied by $\capacost(\overline{K}) < \capacost(q^{LF}(\underline{\type}))$, a result of \eqref{LF}. The proof of \autoref{thm_capa_floor} has established an optimal floor $K^*$ satisfying $\overline{q} < K^* \leq \overline{K}$. Therefore, \[
K^* < q^{LF}(\underline{\type}),
\]
i.e., the lowest type $\underline{\type}$ operates strictly above the capacity floor, in a laissez-faire region. Therefore, we can define two cutoffs as \[
\type_{FL} := a(K^*) =  [K^*P(K^*)]' - \capacost'(K^*),\; \type_{U} := b(K^*) = [K^*P(K^*)]'.
\]
Since $a(\cdot)$ is strictly decreasing, we guarantee that $\type_{FL} > \underline{\type}$. Meanwhile, 
\[
\type_U - \type_{FL} = \capacost'(K^*) > 0 \text{ and } \overline{\type} - \type_U = [\overline{q}P(\overline{q})]' - [K^*P(K^*)]' >0.
\]
The second inequality is a result of $K^* > \overline{q}$. Hence we have three positive-length regions: \[
q(\type) = \begin{cases}
    q^{LF}(\type) & \text{on } [\underline{\type}, \type_{FL}]\\
    K^* & \text{on } (\type_{FL}, \type_U]\\
     q^c(\type) & \text{on } (\type_U, \overline{\type}]
\end{cases}, \quad 
\capa(\type) = \begin{cases}
    q^{LF}(\type) & \text{on } [\underline{\type}, \type_{FL}]\\
    K^* & \text{on } (\type_{FL}, \overline{\type}]
\end{cases}
\]
The profit $\Pi$ satisfies the envelope condition \[
\Pi(\type) = \Pi(\overline{\type}) +  \int_\type^{\overline{\type}}q(t)\dint t,
\]
where $\Pi(\overline{\type}) = \pi(K^*, \overline{\type})\geq 0$.
This finishes the proof of \autoref{prop_underuse}. 
\hfill\qedsymbol

\section{Proof of \autoref{prop_subsidy}}
\label{apx:proof_subsidy}
We show that the ex post welfare can be written in terms of $\capa$ and firm profit $\Pi(\capa, \type) := \pi(\capa, \type) + \subsidy(\capa)$. Substituting \[
\subsidy(\capa) = \Pi - [q^c(\capa, \type)P(q^c(\capa, \type)) - \type q^c(\capa, \type) - \capacost(\capa)]
\]
into $w^S(k, \type)$ yields \[
V(q^c(\capa, \type)) - \type q^c(\capa, \type) - \capacost(\capa) - (1-\alpha)\Pi.
\]

Using the revelation principle, we can restrict attention to direct mechanisms $(\capa(\type), \Pi(\type))$, where $\capa(\type)$ is the recommended capacity and $\Pi(\type)$ is the truthful profit for type $\type$. 
We first reformulate problem \eqref{subsidyDesign} as the following direct mechanism design problem:\begin{equation}
    \max_{\capa(\cdot), \Pi(\cdot)} J^S(\capa, \Pi) := \intfullsupport [V(q^c(\capa(\type), \type)) - \type q^c(\capa(\type), \type) - \capacost(\capa(\type)) - (1-\alpha)\Pi(\type)] \dint F(\type)\label{subsidyDirectProblem}
\end{equation}
subject to the following constraints for all $\type \in \typespace$: \begin{align}
    &\Pi(\type)\geq
 \Pi(\hat\type)
 +\pi(\capa(\hat\type),\type)
 -\pi(\capa(\hat\type),\hat\type), \forall \hat{\type} \in \typespace \tag{IC-S}\\
    &\Pi(\type) \geq 0 \tag{IR-S}\\
    &\Pi(\type) \geq \pi(\capa(\type), \type), \tag{NT}
\end{align}
where (NT) is the no-tax constraint.

We proceed in three steps. In \textbf{Step 1}, we verify that the proposed mechanism in \autoref{prop_subsidy} is feasible and implements the claimed allocation. \textbf{Step 2} finds an upper bound of the objective value of problem \eqref{subsidyDirectProblem} by replacing the actual capacity cost with its lower bound and by ignoring the pairwise IC constraints except for the local ones.  
\textbf{Step 3} shows that the proposed mechanism in \autoref{prop_subsidy} attains the upper bound under the sufficient conditions \eqref{FL}, \eqref{R3IR}, and \eqref{R3L}, and thus is optimal. 

\medskip

\noindent{\bf Step 1: Implementation of the proposed allocation.}
Consider the capacity floor $\{0\}\cup[K_\alpha, \infty)$ with subsidy scheme\[
\subsidy(\capa) = \begin{cases}
    \capacost(\capa)-\capacost(K_\alpha), & \text{if } K_\alpha \leq \capa \leq q^c(\underline{\type})\\
    0 &\text{ otherwise}
\end{cases}.
\]
Because $\capacost$ is increasing, every subsidy offered by this menu is
nonnegative. If type $\type$ selects a reimbursed capacity $\capa \in [K_\alpha, q^c(\underline{\type})]$, its payoff
is
\begin{align*}
& q^c(\capa, \type)P(q^c(\capa, \type))
-\type q^c(\capa, \type)
-\capacost(\capa)
+[\capacost(\capa)-\capacost(K_\alpha)]\\
&\qquad
=
q^c(\capa, \type)P(q^c(\capa, \type))
-\type q^c(\capa, \type)
-\capacost(K_\alpha).
\end{align*}
Thus the subsidy holds the firm's net capacity expenditure fixed at
$\capacost(K_\alpha)$. The remaining expression is maximized by attaining unconstrained continuation 
output $q^c(\type)$. Any capacity above $q^c(\underline\type)$ receives no subsidy and
induces the same output as $q^c(\underline\type)$ at a strictly higher net
installation cost. Such capacities are strictly dominated. Among operating
best responses, we select the smallest capacity, i.e., 
\[
\capa^*(\type)=\max\{q^c(\type),K_\alpha\}.
\]
It follows that
\[
q^*(\type)=q^c(\type),
\]
and the monopoly profit is $\Pi^*(\type) = q^c(\type)P(q^c(\type)) - \type q^c(\type) - \capacost(K_\alpha)$.

The monopoly-profit envelope and 
condition \eqref{R3IR} imply
\[
\Pi^*(\type)
\geq
\overline{q}P(\overline{q}) - \overline{\type} \overline{q} -\capacost(K_\alpha)
\geq 0
\]
for every type. Hence participation is optimal. 
Finally,
\[
\subsidy^*(\type)
=
\capacost(\capa^*(\type))-\capacost(K_\alpha)
=
\left[
\capacost(q^c(\type))-\capacost(K_\alpha)
\right]_+.
\]
This proves feasibility and  global incentive compatibility of the proposed mechanism.

\medskip

\noindent{\bf Step 2: An upper bound for every feasible mechanism.}
Consider any globally incentive-compatible mechanism
$(\capa(\cdot),\Pi(\cdot))$, and write
\[
x(\type):=q^c(\capa(\type),\type).
\]
The fixed-capacity envelope theorem and global incentive compatibility imply
\[
\Pi'(\type)=-x(\type)
\qquad\text{almost everywhere}.
\]
Moreover,
\[
0\leq x(\type)\leq q^c(\type),\; 
\Pi(\type)\geq\pi(\capa(\type),\type),
\]
where the second inequality is the no-tax constraint. We use these only as
necessary conditions and hence as a relaxation of the original mechanism
problem.

Split welfare at $\type_\alpha$:
\[
J^S(\capa,\Pi)=J_L+J_H.
\]
On the lower interval $[\underline{\type}, \type_\alpha]$, $\capa(\type)\geq x(\type)$ and $\capacost$ is
increasing, so
\[
\capacost(x(\type)) \leq \capacost(\capa(\type)).
\]
Meanwhile, integration by parts using the envelope condition gives
\begin{align*}
\int_{\underline{\type}}^{\type_\alpha}
\Pi(\type)f(\type)\dint\type
&=
F(\type_\alpha)\Pi(\type_\alpha)
+\int_{\underline{\type}}^{\type_\alpha}
F(\type)x(\type)\dint\type \notag\\
&=
\alpha\Pi(\type_\alpha)
+\int_{\underline{\type}}^{\type_\alpha}
F(\type)x(\type)\dint\type.
\end{align*}
It follows that
\begin{align}
J_L
\leq{}
\int_{\underline{\type}}^{\type_\alpha}
f(\type)
\left[
V(x(\type))
-\type x(\type)
-\capacost(x(\type))
-(1-\alpha)\frac{F(\type)}{f(\type)}x(\type)
\right]\dint\type
-\alpha(1-\alpha)\Pi(\type_\alpha).
\label{eq:r3-lower-bound}
\end{align}
On the upper interval $[\type_\alpha, \overline{\type}]$, 
we use the following equivalent expression for ex post welfare: \[
S(x(\type)) + \pi(\capa(\type),\type)-(1-\alpha)\Pi(\type),
\]
because the total gains from trade are $S(x(\type)) + \pi(\capa(\type),\type) = V(x(\type)) - \type x(\type) - \capacost(\capa(\type))$.
The no-tax constraint (NT) implies an upper bound of the terms
\[
\pi(\capa(\type),\type)-(1-\alpha)\Pi(\type)
=
\alpha\Pi(\type)
-\left[
\Pi(\type)-\pi(\capa(\type),\type)
\right]
\leq\alpha\Pi(\type).
\]
Also, integration by parts gives
\begin{align*}
\int_{\type_\alpha}^{\overline{\type}}
\Pi(\type)f(\type)\dint\type
&=
[1-F(\type_\alpha)]\Pi(\type_\alpha)
-\int_{\type_\alpha}^{\overline{\type}}
[1-F(\type)]x(\type)\dint\type
\notag\\
&=
(1-\alpha)\Pi(\type_\alpha)
-\int_{\type_\alpha}^{\overline{\type}}
[1-F(\type)]x(\type)\dint\type.
\end{align*}
Therefore,
\begin{align}
J_H
\leq
\int_{\type_\alpha}^{\overline{\type}}
f(\type)
\left[
S(x(\type))
-\alpha\frac{1-F(\type)}{f(\type)}x(\type)
\right]\dint\type
+\alpha(1-\alpha)\Pi(\type_\alpha).
\label{eq:r3-upper-bound}
\end{align}

Adding \eqref{eq:r3-lower-bound} and
\eqref{eq:r3-upper-bound} cancels the two terms involving
$\Pi(\type_\alpha)$. This is exactly where
$F(\type_\alpha)=\alpha$ is used. Thus every feasible no-tax mechanism
satisfies
\begin{align}
J^S(\capa,\Pi)
\leq{}&
\int_{\underline{\type}}^{\type_\alpha}
f(\type)\Phi_L(\type,x(\type))\dint\type
+
\int_{\type_\alpha}^{\overline{\type}}
f(\type)\Phi_H(\type,x(\type))\dint\type,
\label{eq:r3-global-bound}
\end{align}
where
\begin{align*}
\Phi_L(\type,q)
&:=
V(q)-\type q-\capacost(q)
-(1-\alpha)\frac{F(\type)}{f(\type)}q,\\
\Phi_H(\type,q)
&:=
S(q)-\alpha\frac{1-F(\type)}{f(\type)}q.
\end{align*}

\medskip

\noindent{\bf Step 3: Pointwise maximization of the upper bound.}
For every type, feasible induced output satisfies
\[
0\leq q\leq q^c(\type).
\]

For any fixed $\type\leq\type_\alpha$, the function
$\Phi_L(\type,\cdot)$ is strictly concave because
\[
\frac{\partial^2\Phi_L(\type,q)}{\partial q^2}
=
P'(q)-\capacost''(q)<0.
\]
Using the monopoly first-order condition
$\frac{d[qP(q)]}{dq}|_{q = q^c(\type)}=\type$ and the identity
\[
P(q^c(\type))-\type
=
-q^c(\type)P'(q^c(\type))
=
S'(q^c(\type)),
\]
its derivative at the upper endpoint $q = q^c(\type)$ is
\[
\frac{\partial\Phi_L}{\partial q}
\bigl(\type,q^c(\type)\bigr)
=
S'(q^c(\type))
-\capacost'(q^c(\type))
-(1-\alpha)\frac{F(\type)}{f(\type)}
\geq0
\]
by \eqref{R3L}. Hence $q^c(\type)$ maximizes
$\Phi_L(\type,q)$ over $[0,q^c(\type)]$.

For any fixed $\type\geq\type_\alpha$,
\[
\frac{\partial^2\Phi_H(\type,q)}{\partial q^2}
=
S''(q)\leq0.
\]
Moreover, we claim that
\[
\frac{\partial\Phi_H}{\partial q}
\bigl(\type,q^c(\type)\bigr)
=
S'(q^c(\type))
-\alpha\frac{1-F(\type)}{f(\type)}
\geq0, 
\]
 and thus $q^c(\type)$ also maximizes
$\Phi_H(\type,q)$ over $[0,q^c(\type)]$. To see this, define \(G(\type)
:=
f(\type)S'(q^c(\type))
-\alpha[1-F(\type)].\) We only need to show that $G(\type) \geq 0$. Differentiating, \[
G'(\type)
=
f'(\type)S'(q^c(\type))
+
\underbrace{f(\type)S''(q^c(\type))(q^c)'(\type)}_{\geq 0}
+
\alpha f(\type).
\]
We have \(q^c(\type)\geq q^c(\overline{\type})
\geq q^{LF}(\overline{\type})=\hat q.\) By $S'' \leq 0$, we also have $S'(q^c(\type)) \leq S'(\hat{q})$.
Now consider the sign of $f'$. If $f' \geq 0$, then $f'S'(q^c) \geq 0$, implying $G' \geq 0$. On the other hand, if $f' < 0$, $f'S'(q^c) \leq 0$, but \(f'S'(q^c)\geq f'S'(\hat q)\) and \eqref{FL} imply \[
f'S'(q^c)+\alpha f\geq0.
\]
Hence, $G' \geq 0$, which implies \[
G(\type) \geq G(\type_\alpha) > 0, \forall \type \geq \type_\alpha,
\]
where the second inequality is implied by \eqref{R3L} evaluated at $\type_\alpha$.

The proposed mechanism induces $x^*(\type)=q^c(\type)$ everywhere. It
attains equality in the lower bound because
$\capa^*(\type)=x^*(\type)$ below $\type_\alpha$. It attains equality in
the upper bound because its subsidy is zero above $\type_\alpha$.
Consequently, it attains the right-hand side of
\eqref{eq:r3-global-bound} at its pointwise maximum. It therefore weakly
dominates every globally incentive-compatible no-tax mechanism.

Finally, strict decrease of $q^c$ implies
\[
q^c(\type)<q^c(\type_\alpha)=K_\alpha,
\text{ for every }\type>\type_\alpha.
\]
Because $F(\type_\alpha)=\alpha$, this strict-underutilization interval has
the measure $1-\alpha$.
\hfill \qedsymbol

\section{Proof of \autoref{prop_entry_fee} and \autoref{cor_entry_fee_underuse}}
We use the definitions and reduced program in \eqref{F}.
At a differentiable interior optimum at which these constraints are locally
slack, the entry cutoff satisfies
\begin{equation}
f(\type_E)
 \left[S(q^c(K, \type_E))+v_K(\type_E)\right]
=(1-\alpha)F(\type_E)q^c(K, \type_E).
\label{R4-E}\tag{E}
\end{equation}
The left-hand side is the density-weighted surplus from admitting the marginal type. The right-hand side is the net welfare cost of lowering the common fee for all inframarginal entrants.  Thus the fee screens entry but,
because it is the same for every entrant, does not screen capacity conditional
on entry.

\begin{proof}[Proof of \autoref{prop_entry_fee}]
 Given a menu $\menu$ containing zero, let
\[
    v_{\menu}(\type):=\max_{\capa\in\menu}\pi(\capa,\type)
\]
denote gross operating profit before the fee. For every fixed capacity, the envelope theorem gives
\[
    \frac{\partial\pi(\capa,\type)}{\partial\type}
    =-q^c(\capa,\type)\leq0.
\]
The pointwise maximum $v_{\menu}(\type)$ is therefore weakly decreasing in $\type$.
Apart from payoff-irrelevant zero-output ties, the set of types satisfying
$v_{\menu}(\type)\geq\fee$ is consequently a lower interval
$[\underline{\type},\type_E]$.  An optimal policy can be selected so that
\begin{equation}
    \fee=v_{\menu}(\type_E).
    \label{R4-PC}\tag{PC-F}
\end{equation}
For an interior cutoff, \eqref{R4-PC} is marginal-type indifference.  If all
types enter, the fee can be raised to $v_{\menu}(\overline{\type})$
without changing entry. Conditional on entry, the fee subtracts the same constant from every
capacity choice:
\[
[\pi(\capa,\type)-\fee]
-[\pi(\widehat{\capa},\type)-\fee]
=\pi(\capa,\type)-\pi(\widehat{\capa},\type).
\]
It therefore does not affect capacity or output conditional on entry.  The
regulator's payoff from an entering type is
\[
S(q^c(\capa_\menu(\type),\type))
+\alpha[v_{\menu}(\type)-\fee]+\fee
=S(q^c(\capa_\menu(\type),\type))+\alpha v_{\menu}(\type)
+(1-\alpha)\fee.
\]
Using \eqref{R4-PC}, expected welfare is
\begin{equation}
\int_{\underline{\type}}^{\type_E}
[S(q^c(\capa_\menu(\type),\type))+\alpha v_{\menu}(\type)]\dint F(\type)
+(1-\alpha)F(\type_E)v_{\menu}(\type_E).
\label{R4-menu-objective}
\end{equation}

Fix the entry cutoff $\type_E$ and the marginal entrant's gross profit
$r=v_{\menu}(\type_E)=\fee$.  The last term in
\eqref{R4-menu-objective} is constant in the allocation problem.  On the
truncated type interval $[\underline{\type},\type_E]$, the remaining
problem is exactly the fixed-terminal problem used in the proof of
\autoref{thm_capa_floor}, with terminal gross profit $r$.  Condition
\eqref{FL} on the full support implies its truncated-support counterpart.
Indeed, under \eqref{FL}, if $f'\geq0$ the
relevant density expression is nonnegative directly; if $f'<0$, then
$q^{LF}(\type_E)\geq q^{LF}(\overline{\type})$ and $S''\leq0$
make the truncated-support inequality weakly easier to satisfy.
The fixed-terminal argument therefore replaces any feasible active-type
allocation by its upper-root capacity floor while preserving $r$ and weakly
raising \eqref{R4-menu-objective}.

The replacement floor has gross value decreasing in type.  Its value at
$\type_E$ is $r=\fee$, so types above $\type_E$ continue to reject.
For the upper-root floor, $K\geq q^{LF}(\type_E)$, and the cutoff type
chooses $K$.  Hence
\[
    \fee=v_K(\type_E)=\pi(K,\type_E)\geq0,
\]
and substitution gives \eqref{F}.

For completeness, the feasible set
\[
\mathcal D:=\left\{(\type_E,K):
\underline{\type}\leq\type_E\leq\overline{\type},\ 
K\geq q^{LF}(\type_E),\ 
\pi(K,\type_E)\geq0\right\}
\]
is closed and bounded.  Boundedness follows from unbounded capacity cost and
the bounded maximum operating revenue.  The reduced objective is continuous
on $\mathcal D$, so the extreme value theorem implies a maximizer exists.  A zero-fee laissez-faire policy
gives strictly positive welfare, whereas $\type_E=\underline{\type}$
gives zero expected welfare; thus a maximizer can be selected with
$\type_E^*>\underline{\type}$.

Finally, holding the floor fixed and differentiating the cutoff objective,
the envelope theorem gives $v_K'(\type_E)=-q^c(K, \type_E)$.  Therefore
\[
\frac{\partial\mathcal J}{\partial\type_E}
=f(\type_E)[S(q^c(K, \type_E))+v_K(\type_E)]
-(1-\alpha)F(\type_E)q^c(K, \type_E),
\]
which yields \eqref{R4-E} at an interior optimum with locally slack
constraints.
\end{proof}

\begin{proof}[Proof of \autoref{cor_entry_fee_underuse}]
Fix the optimal cutoff $\type_E^*$, suppress the superscript, and write
\[
    a(K):=[KP(K)]'-\capacost'(K).
\]
For
\[
    q^{LF}(\type_E)<K\leq q_E,
\]
all participating types affected by the floor fully use it.  Moreover,
$a(K)\geq a(q_E)=a_E\geq\underline{\type}$.  Differentiating the objective in 
\eqref{F} while holding $\type_E$ fixed gives
\begin{align}
\mathcal J_K(\type_E,K)
={}&
\int_{a(K)}^{\type_E}
 [S'(K)+\alpha(a(K)-\type)]f(\type)\dint\type \notag +(1-\alpha)F(\type_E)[a(K)-\type_E] \notag\\
={}&
\int_{a(K)}^{\type_E}
 \{S'(K)f(\type)+\alpha F(\type)-F(\type_E)\}
 \dint\type .
\label{R4-D}
\end{align}
The second equality follows by integration by parts.  Relative to the proof
of \autoref{prop_underuse}, the term involving $F(\type_E)$ is the
capitalization effect: increasing the floor lowers the marginal entrant's
gross profit and therefore lowers the common fee paid by every entrant.

Define
\[
    \Phi_K(\type):=S'(K)f(\type)+\alpha F(\type).
\]
Condition \eqref{FL} implies that $\Phi_K$ is nondecreasing in $\type$
for every relevant $K$.  It also implies that
$K\mapsto\Phi_K(a(K))$ is nonincreasing because, wherever derivatives
exist,
\[
\frac{\dint}{\dint K}\Phi_K(a(K))
=S''(K)f(a(K))
+a'(K)[S'(K)f'(a(K))+\alpha f(a(K))]
\leq0.
\]
Here $S''\leq0$, $a'<0$, and the expression in brackets is nonnegative
under \eqref{FL}.  As in the preceding proof, this last implication is
immediate for $f'\geq0$ and follows from the full-support
inequality when $f'<0$.

Because $S'(q)=-qP'(q)$, the second inequality contained in
\eqref{U-E} is equivalent to
\[
    \Phi_{q_E}(a_E)>F(\type_E).
\]
Consequently, for every
$K\in(q^{LF}(\type_E),q_E]$ and every
$\type\in[a(K),\type_E]$,
\[
\Phi_K(\type)\geq\Phi_K(a(K))
\geq\Phi_{q_E}(a_E)>F(\type_E).
\]
Equation \eqref{R4-D} is therefore strictly positive throughout this
interval.

The first inequality contained in \eqref{U-E} gives
\[
\pi(q_E,\type_E)
=q_ES'(q_E)-\capacost(q_E)>0.
\]
Thus a small increase in the floor above $q_E$ preserves participation of
the cutoff type.  The right derivative at $K=q_E$ equals the limit in
\eqref{R4-D} and is strictly positive.  It follows that an optimal floor
conditional on $\type_E$, and hence $K^*$, must satisfy $K^*>q_E$.
An optimal floor satisfies \(K^*\leq q^c(\underline\type)<q_{\max}\). Indeed, when \(K>q^c(\underline\type)\), every entrant produces its continuation output, and holding the cutoff fixed gives \(\mathcal J_K=-F(\type_E)\capacost'(K)<0\). 
Strict concavity of revenue then gives
\[
    [K^*P(K^*)]'<[q_EP(q_E)]'=\type_E.
\]
The positive-length interval
\[
\left(\max\{\underline{\type},[K^*P(K^*)]'\},\type_E\right]
\]
therefore consists of participating types that install $K^*$ and produce
$q^c(\type)<K^*$.

For the final claim, the viability part of \eqref{U-E}, strict monotonicity
of capacity cost, and unbounded capacity cost imply that the cutoff type has
a unique upper zero-profit capacity $\overline K(\type_E)>q_E$.  At that
capacity its output is $q_E$, and
\[
    \capacost(\overline K(\type_E))=q_ES'(q_E).
\]
Condition \eqref{LF-E} implies
\[
\capacost(\overline K(\type_E))
<\capacost(q^{LF}(\underline{\type})).
\]
Hence
\[
    K^*\leq\overline K(\type_E)<q^{LF}(\underline{\type}).
\]
Define
\[
\type_{FL}:=[K^*P(K^*)]'-\capacost'(K^*),\;
\type_U:=[K^*P(K^*)]'.
\]
Because $K^*<q^{LF}(\underline{\type})$,
$\type_{FL}>\underline{\type}$.  Moreover,
\[
\type_U-\type_{FL}=\capacost'(K^*)>0,\;
\type_E-\type_U>0,
\]
where the second inequality follows from $K^*>q_E$.  These inequalities
establish the three positive-length regions among entrants.  If
$\type_E<\overline{\type}$, the non-entry interval supplies the stated
fourth region.
\end{proof}

\section{Omitted Proofs of Other Results}
\subsection{Details about $q^c$, $q^{LF}$, and $q^{FB}$}\label{apx:q-m-q-lf}
In this section, we formally show that $q^c$ and $q^{LF}$ are unique for each $\type$, characterized by their first-order conditions, and are strictly decreasing in $\type$, respectively.

First, consider the unrestricted continuation output problem: \[
q^c(\type) = \arg\max_{q \in [0, \qmax]}\{qP(q) - \type q\}.
\] By \autoref{asmp:maintained}, the objective is strictly concave in $q$. Then the first-order condition $(qP(q))' = \type$ is sufficient for the optimal output $q^c$ if $q^c \in [0, \qmax]$. By \autoref{asmp:maintained}, $(qP(q))'$ is strictly decreasing. Moreover, $\lim_{q \uparrow \qmax} (qP(q))' < \underline{\type}$ and $\lim_{q \downarrow 0} (qP(q))' > \overline{\type} + \capacost'(0) \geq \overline{\type}$.
Therefore, the solution $q^c \in (0, \qmax)$ for each $\type \in \typespace$. The standard implicit function theorem shows that $q^c(\type)$ strictly decreases in $\type$.

Next, consider the laissez-faire problem:\[
\max_{q \in [0, \capa], \capa \in [0, \qmax]}\{qP(q) - \type q - \capacost(\capa)\}.
\]
We claim that $\capa = q$ in the optimum. To see this, suppose that $\capa>q$ in the optimum. Then, the firm can make strictly more profit by lowering $\capa$ by some small $\epsilon > 0$ so that the capacity constraint is still slack, contradicting that $\capa > q$ is true in an optimal solution. 

With $\capa = q$, the problem becomes 
\[
\max_{q \in [0, \qmax]}\{qP(q) - \type q - \capacost(q)\}.
\]
The problem is the same as the one of unrestricted continuation output $q^c$, except that $\capacost(q)$ is not sunk. By \autoref{asmp:maintained}, $\capacost(\cdot)$ is convex, implying the objective is strictly concave. The sufficient first-order condition for an interior solution is $(qP(q))' - \capacost'(q) = \type$. 
By \autoref{asmp:maintained}, we have $\lim_{q \uparrow \qmax} (qP(q))' - \capacost'(q) < \underline{\type}$ and $\lim_{q \downarrow 0} (qP(q))' - \capacost'(q)  > \overline{\type}$. Therefore, $q^{LF} \in (0, \qmax)$ for each $\type \in \typespace$. The standard implicit function theorem shows that $q^{LF}(\type)$ strictly decreases in $\type$.

Finally, we observe that $q^c(\type) > q^{LF}(\type)$ for all $\type \in \typespace$. To see this, note that 
\[
[(qP(q))' - \capacost'(q)]_{q = q^{LF}(\type)} = \type= (qP(q))'|_{q = q^c(\type)}
\]
implies $(qP(q))'|_{q = q^{LF}(\type)} > (qP(q))'|_{q = q^c(\type)}$
for all $\type \in \typespace$.
Strict concavity of $qP(q)$ implies $q^c(\type)> q^{LF}(\type)$.

\medskip

In addition to the continuation output
$q^c(\type)$ and the laissez-faire output $q^{LF}(\type)$, define the
\emph{complete-information, tight-capacity benchmark}
\[
q^{FB}(\type)
:=
\arg\max_{q\in[0,\qmax]}W(q,q;\type)
=
\arg\max_{q\in[0,\qmax]}
\left\{
S(q)+\alpha
\left[qP(q)-\type q-\capacost(q)\right]
\right\},
\]
where $S(q):=V(q)-qP(q)$ is the consumer surplus.
This is the regulator's preferred output for a known type when capacity is
installed tightly, $\capa=q$.  It is a complete-information benchmark rather
than an allocation that is necessarily feasible under private information and
voluntary participation.

Under the demand-curvature restriction
$P'(q)+qP''(q)\geq0$ used in \eqref{FL}, consumer surplus is concave because
\[
S''(q)=-[P'(q)+qP''(q)]\leq0.
\]
Hence the objective defining $q^{FB}(\type)$ is concave and is strictly
concave whenever $\alpha>0$.  When $\alpha=0$, consumer surplus is strictly
increasing and $q^{FB}(\type)=\qmax$.  If the solution is interior, it
satisfies
\[
S'(q^{FB}(\type))
+
\alpha
\left[
\frac{d[qP(q)]}{dq}\bigg|_{q = q^{FB}(\type)}
-\type-\capacost'(q^{FB}(\type))
\right]
=0.
\]
At the laissez-faire output, the expression inside brackets is zero, whereas
\[
S'(q^{LF}(\type))
=
-q^{LF}(\type)P'(q^{LF}(\type))
>0.
\]
Moreover, concavity implies that the regulator's objective is increasing
throughout the interval up to $q^{LF}(\type)$.  It follows that
\[
q^{LF}(\type)<q^{FB}(\type).
\]
As established above, $q^{LF}(\type)<q^c(\type)$ as well.  Thus both the
regulator and the firm, after capacity cost has become sunk, prefer more output
than the firm installs under laissez-faire.

There is, however, no unconditional ordering between $q^{FB}(\type)$ and
$q^c(\type)$.  Evaluating the regulator's marginal objective at the
continuation output and using
$\frac{d[qP(q)]}{dq}|_{q = q^c(\type)}=\type$ gives
\[
\left.
\frac{\partial W(q,q;\type)}{\partial q}
\right|_{q=q^c(\type)}
=
S'(q^c(\type))
-\alpha\capacost'(q^c(\type)).
\]
Consequently,
\[
q^{FB}(\type)\gtreqless q^c(\type)
\quad\Longleftrightarrow\quad
-q^c(\type)P'(q^c(\type))
\gtreqless
\alpha\capacost'(q^c(\type)).
\]
The complete-information benchmark exceeds continuation monopoly output when
the regulator's marginal consumer-surplus gain dominates the weighted marginal
capacity cost.  The reverse ordering obtains when capacity is sufficiently
costly. 

This distinction clarifies the limitation of capacity control relative to
direct output regulation.  If output were contractible and the type were known,
the regulator could prescribe $q^{FB}(\type)$ and install exactly that much
capacity, whenever participation permits.  A capacity requirement instead
operates only through the firm's continuation problem.  Given installed
capacity $\capa$, output is
\[
q^c(\capa,\type)
=
\min\{\capa,q^c(\type)\}.
\]
A capacity requirement below $q^c(\type)$ therefore binds, and increasing it
raises output.  Once capacity exceeds $q^c(\type)$, however, the firm's
continuation choice is unconstrained: additional capacity raises cost but not
output.  Hence capacity control can implement the complete-information tight
benchmark for a known type when
$q^{FB}(\type)\leq q^c(\type)$, subject to participation, but it cannot induce
output above $q^c(\type)$.

Under private information, the regulator faces the additional problem that one
menu must govern all types.  The same capacity requirement may bind the output
of some types while becoming slack for others.  The optimal policy must
therefore balance the output gain generated for types whose capacity remains
binding against the capacity cost imposed on types for whom the same
requirement becomes slack.

\subsection{Proof of \autoref{lem:monotone-welfare}}

By \autoref{lem:opt_ctrl}, capacity is weakly decreasing in type in every
feasible no-transfer mechanism. Since $q^c(\type)$ also decreases,
$q^c(\capa(\type),\type)=\min\{\capa(\type),q^c(\type)\}$ is weakly
decreasing. The envelope identity gives
$\Pi(\type)=\Pi(\overline\type)+\int_\type^{\overline\type}
q^c(\capa(t),t)\dint t$, so truthful profit is weakly decreasing.
Finally, $S'(q)=-qP'(q)>0$ for positive output. Consumer surplus therefore
weakly decreases with type, as does $w=S(q)+\alpha\Pi$ because $\alpha\geq0$.
\hfill \qedsymbol
\subsection{Proof of \autoref{lem:full-delegation}}
Under no regulation ($\menu^{LF} = [0, \infty)$), the firm of type $\type$ chooses tight-capacity output $q^{LF}(\type) >0$. According to the argument in Appendix \ref{apx:q-m-q-lf}, $q^{LF}$ solves $a(q):=(qP(q))' - \capacost'(q) = \type$ and satisfies $q_0 := q^{LF}(\overline{\type}) >0$.

Denote the capacity-floor policy by $\menu^{\text{floor}}(K) := \{0\}\cup [K, \infty)$ with $K > 0$. With fixed parameters $P(\cdot)$ and $\capacost(\cdot)$, observe that $\menu^{LF}$ and $\menu^{\text{floor}}(q_0)$ are outcome equivalent because $q^{LF}(\type) \geq q_0$ for all $\type \in \typespace$.
 We show that there exists some $K^* > q_0$ such that the expected welfare under $\menu^{\text{floor}}(K^*)$ is strictly larger than that under $\menu^{\text{floor}}(q_0)$.  By \autoref{asmp:maintained}, $a(q)$ is strictly decreasing in $q$. Then, for $K^* > q_0$ and sufficiently close to $q_0$, $\underline{\type} < a(K^*) < \overline{\type}$. We first verify that this local perturbation induces neither exit nor underutilization ($q < \capa$). According to Appendix \ref{apx:q-m-q-lf}, $q_0 < q^c(\overline{\type})$. In addition, strict concavity of the profit function $\pi(\cdot, \type)$ and $\pi(0, \overline{\type}) = 0$ imply $\pi(q_0, \overline{\type}) > 0$. Consequently, if $K^*$ is in a sufficiently small right neighborhood of $q_0$,\[
K^* < q^c(\overline{\type}),\; \pi(K^*, \overline{\type})>0.
\]
Under this floor, types below $a(K^*)$ continue to choose their preferred output (and capacity) $q^{LF}(\type)$, whereas types in $[a(K^*), \overline{\type}]$ choose and fully utilize $K^*$. The top type participates strictly, and hence every lower type participates as well. 

Next, we show that welfare is weakly higher for every affected type and
strictly higher for every strictly affected type. Since
\(S'(q_0)=-q_0P'(q_0)>0\), continuity allows us to choose
\(\epsilon>0\), sufficiently small that the participation and full-utilization
conditions above continue to hold, and such that
\[
m_\epsilon
:=
\min_{z\in[q_0,q_0+\epsilon]} S'(z)
-\alpha\bigl[a(q_0)-a(q_0+\epsilon)\bigr]
>0.
\]
Fix \(K^*\in(q_0,q_0+\epsilon)\). For any affected type
\(\type\in[a(K^*),\overline{\type}]\), let
\(x:=q^{LF}(\type)\). Then
\[
a(x)=\type,\;
q_0\leq x\leq K^*.
\]
For every \(y\in[x,K^*]\), output and capacity are both equal to \(y\).
Therefore,
\[
\frac{d}{dy}W(y,y;\type)
=
S'(y)+\alpha[a(y)-\type]
=
S'(y)+\alpha[a(y)-a(x)].
\]
Because \(a\) is decreasing and
\(q_0\leq x\leq y\leq K^*<q_0+\epsilon\), we have
\[
a(y)-a(x)
\geq
a(q_0+\epsilon)-a(q_0).
\]
Consequently,
\[
\frac{d}{dy}W(y,y;\type)
\geq
\min_{z\in[q_0,q_0+\epsilon]}S'(z)
-\alpha\bigl[a(q_0)-a(q_0+\epsilon)\bigr]
=
m_\epsilon
>0.
\]
It follows that
\[
\begin{split}
W(K^*,K^*;\type)-W(x,x;\type)
&=
\int_x^{K^*}\frac{d}{dy}W(y,y;\type)\dint y  \\
&\geq
m_\epsilon(K^*-x).
\end{split}
\]
This difference is strictly positive for every
\(\type\in(a(K^*),\overline{\type}]\), because then \(x<K^*\);
it equals zero only for the boundary type \(\type=a(K^*)\).
All lower types are unaffected. Since
\(a(K^*)<\overline{\type}\) and \(f>0\), the strictly affected interval
has positive probability. Hence the capacity floor at \(K^*\) yields
strictly greater expected welfare than full delegation. 
\hfill \qedsymbol

\subsection{Comparative statics of the capacity floor}
\label{ext:comparative-statics}
 Let $K^*$ be the
smallest optimal full-participation floor. We show that $K^*$ weakly falls
with $\alpha$, with $\lambda$ in $\capacost_\lambda=\lambda\capacost$,
and with higher marginal costs in hazard-rate order. It weakly rises under
parallel upward inverse-demand shifts and proportional market-size
expansions.

\begin{proof}
\autoref{thm_capa_floor} allows us to restrict attention to fully participating floors with
\[
q^{LF}(\bar\theta)\leq K\leq q^c(\underline\theta),
\quad \pi(K,\bar\theta)\geq0.
\]
Write \(R(q):=qP(q)\), and extend \(F\) and \(f\) by zero below \(\underline\theta\).
 For $K>q^{LF}(\bar\theta)$,
differentiating expected floor welfare $J(K)$ and integrating by parts gives
\[
J'(K)=\int_{R'(K)-\capacost'(K)}^{\min\{\bar\theta,R'(K)\}}
\bigl\{S'(K)f(\theta)-\alpha[1-F(\theta)]\bigr\}\,d\theta.
\]
Regime-boundary terms cancel because output and profit are continuous.
Condition \eqref{FL} makes the integrand nondecreasing in $\theta$: its derivative
is $S'(K)f'+\alpha f\geq0$ on the support, and it jumps upward at
$\underline\theta$. It is nonincreasing in $K$ because $S''\leq0$.
Both integration limits also decrease weakly with $K$. Thus the integrand's
average over this interval is nonincreasing in $K$. This average is $J'(K)$
divided by the positive interval length, so it has the same sign as $J'(K)$.
Use right limits at $q^{LF}(\bar\theta)$, where both limits equal
$\bar\theta$. Welfare therefore increases and then decreases, allowing an
optimal plateau. Ordering this sign crossing and the feasible endpoints
orders its smallest maximizer.

Increasing $\alpha$ lowers the integrand pointwise while leaving feasibility
and both integration limits unchanged, so $K^*$ falls weakly. Increasing
$\lambda$ lowers the lower limit $R'(K)-\lambda\capacost'(K)$, leaving
the upper limit and integrand unchanged. Extending the averaging interval
leftward lowers the average. Both feasible endpoints also fall weakly:
laissez-faire capacity decreases, and profit at every positive capacity falls,
reducing the participation bound. This proves the cost comparison.
For these two changes, slack equals
$[K^*-q^c(\theta)]_+$, so it also weakly decreases for every type.

For the prior comparison, on the common feasible interval above
$q^{LF}(\bar\theta)$,
\[
\begin{aligned}
\frac{J_i'(K)}{1-F_i(R'(K)-\capacost'(K))}
={}&S'(K)\left[1-
\frac{1-F_i(\min\{\bar\theta,R'(K)\})}
     {1-F_i(R'(K)-\capacost'(K))}\right]\\
&-\alpha\int_{R'(K)-\capacost'(K)}^{\min\{\bar\theta,R'(K)\}}
\frac{1-F_i(\theta)}{1-F_i(R'(K)-\capacost'(K))}\,d\theta.
\end{aligned}
\]
If $f_2/(1-F_2)\leq f_1/(1-F_1)$ on $[\underline\theta,\bar\theta)$,
every conditional-survival ratio on the right is weakly larger under $F_2$.
The normalized derivative is therefore lower, proving the claim.
If $f_2/f_1$ is nondecreasing, its upper-tail average
$(1-F_2)/(1-F_1)$ is at least its current value, which implies the
hazard-rate inequality.

Finally, suppose $R_2'\geq R_1'$ and $S_2'\geq S_1'$ from zero throughout
the old quantity domain, with a common or expanded new domain. Both integration
limits rise, and the integrand rises pointwise, so its average weakly rises
at every common feasible floor. Laissez-faire and continuation capacities
rise. Integrating marginal-revenue dominance from $R_i(0)=0$ gives revenue
dominance, raising the participation bound as well. Thus $K^*$ rises weakly.
For $P_A=P_0+A$, marginal revenue rises by $A$ and $S'$ is unchanged.
For $P_z(q)=P_0(q/z)$, we have $R_z'(q)=R_0'(q/z)$ and $ S_z'(q)=S_0'(q/z)$.
Both increase weakly with $z$, since $R_0''<0$ and $S_0''\leq0$.
\end{proof}

\subsection{Common entry fee charged before the firm learns its type}\label{apx:pretype_fee}

Before
\(\type\) is realized, the regulator announces a pair \((\menu,\fee)\), and
the firm accepts or rejects.  If it accepts, the fee \(\fee\geq 0\) is
nonrefundable and remains sunk even if the firm exits after observing its
type.  The firm is risk neutral, retains its zero outside option, and accepts
when indifferent.  Fee revenue is valued one-for-one by the regulator.

Denote the firm's gross post-entry profit as \(\Pi(\type)\).  For
a feasible post-entry mechanism \((\capa,\Pi)\), ex ante participation
requires
\begin{equation}
    \fee\leq \intfullsupport \Pi(\type)\dint F(\type).
    \label{fee-PC}\tag{PC}
\end{equation}
Denote the consumer surplus as $S(q) := V(q) - qP(q)$.  The regulator's ex ante payoff is
\begin{align}
\mathcal W^{\fee}(\capa,\Pi,\fee)
&:=\intfullsupport
\left[
S(q^c(\capa(\type),\type))
+\alpha\bigl(\Pi(\type)-\fee\bigr)
\right]\dint F(\type)+\fee \notag\\
&=\intfullsupport
\left[
S(q^c(\capa(\type),\type))
+\alpha\Pi(\type)
\right]\dint F(\type)
+(1-\alpha)\fee .
\label{fee-objective}
\end{align}

\begin{proposition}
\label{prop_entry_fee_unknown}
An upfront fee leaves the set of feasible post-award mechanisms
unchanged.  If \(\alpha<1\), the ex ante participation constraint
\eqref{fee-PC} binds at every optimum:
\begin{equation}
    \fee^*=\intfullsupport\Pi^*(\type)\dint F(\type).
    \label{optimal-fee}
\end{equation}

\end{proposition}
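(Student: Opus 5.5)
The plan has two parts: the fee is separable from the post-type mechanism, and then a direct perturbation argument shows the participation constraint must bind.

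\emph{Feasibility is unchanged.} The fee $\fee$ is paid before $\type$ is realized and is sunk afterward. So for every type and every pair of reports, the post-type payoff comparison is $\pi(\capa(\hat\type),\type)-\fee$ against $\pi(\capa(\type),\type)-\fee$, and the constant cancels. Exit after observing the type yields $-\fee$, which is compared with $\Pi(\type)-\fee$, so the post-entry participation constraint is still $\Pi(\type)\geq0$. The no-transfer identity \eqref{N} is unaffected, because the fee is not contingent on any post-type choice. Hence \eqref{IC}, \eqref{IR}, and \eqref{N}, and therefore the characterization in \autoref{lem:opt_ctrl}, describe exactly the same set of post-award mechanisms $(\capa,\Pi)$ whatever the value of $\fee$. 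The only new restriction is the ex ante constraint \eqref{fee-PC}, which couples $\fee$ to the mechanism only through $\intfullsupport\Pi\dint F$.

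\emph{The constraint binds.} Using the representation \eqref{fee-objective}, I will argue by contradiction. Suppose $(\capa^*,\Pi^*,\fee^*)$ is optimal with $\fee^*<\intfullsupport\Pi^*\dint F$. Keep $(\capa^*,\Pi^*)$ fixed and raise the fee to $\tilde\fee:=\intfullsupport\Pi^*\dint F$. This is feasible for three reasons: the post-award mechanism is unchanged, $\tilde\fee\geq0$ because $\Pi^*\geq0$, and \eqref{fee-PC} holds with equality, with the firm accepting when indifferent. The integral term in \eqref{fee-objective} does not change, while the last term rises by $(1-\alpha)(\tilde\fee-\fee^*)>0$ because $\alpha<1$. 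This contradicts optimality, so \eqref{optimal-fee} holds at every optimum.

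There is no substantive obstacle. The only point that needs care is that the first claim is \emph{exact} equality of the feasible sets, rather than containment. In particular, exit after a sunk fee must not change the post-type outside option beyond a constant shift, and the same constant appears in every branch of the firm's comparison.
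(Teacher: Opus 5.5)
Your proposal is correct and follows essentially the same route as the paper: the fee cancels from every post-type comparison, including exit, which yields $-\fee$. Hence \eqref{IC}, \eqref{IR}, and \eqref{N} are unchanged, and a slack \eqref{fee-PC} would let the regulator raise the fee at a net gain of $(1-\alpha)$ per unit. The differences are minor. The paper states the projection equality explicitly by pairing any baseline-feasible mechanism with $\fee=0$, which you use only implicitly through $\Pi^*\geq0$. The paper also raises the fee only slightly, so it does not need the acceptance-at-indifference convention that your jump to the binding level relies on.
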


\begin{corollary}[Capacity floor and underutilization]
\label{cor:upfront-fee-floor}

If \(\alpha=1\), a binding fee can be selected without changing the
regulator's payoff.  Consequently, for every \(\alpha\in[0,1]\), the joint
allocation problem with an upfront award fee is exactly the baseline problem
\eqref{direct} evaluated at welfare weight one.

Suppose condition \eqref{FL} holds.  Then there exists an optimal award
consisting of a capacity floor \(\{0\}\cup[K^*, \infty)\) and an upfront fee
\begin{equation}
    \fee^*=\intfullsupport\Pi_{K^*}(\type)\dint F(\type),
    \qquad
    K^*\in
    \left(q^{LF}(\overline{\type}),q^c(\underline{\type})\right],
    \label{optimal-floor-fee}
\end{equation}
where \(\Pi_K(\type)\) denotes gross operating profit induced by the floor
\(\{0\}\cup[K, \infty)\) under the maintained response convention.

Let \(\overline q:=q^c(\overline{\type})\).  If, in addition,
\begin{equation}
    -\overline qP'(\overline q)>\capacost'(\overline q),
    \label{U-one}\tag{U'}
\end{equation}
then an optimal floor can be selected with \(K^*>\overline q\), and a
positive measure of types installs \(K^*\) but produces strictly less than
\(K^*\).  If condition \eqref{LF} also holds, the laissez-faire,
fully used-floor, and underutilized-floor regions all have positive length.

\end{corollary}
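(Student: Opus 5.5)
The plan is to reduce the upfront-fee problem to the baseline problem \eqref{direct} with weight $\alpha=1$. The first claim of Proposition~\ref{prop_entry_fee_unknown} gives this reduction for $\alpha<1$: the fee binds, so $\fee^*=\int\Pi^*\dint F$. Substituting into \eqref{fee-objective} gives
\[
\int [S(q^c(\capa(\type),\type))+\alpha\Pi(\type)]\dint F+(1-\alpha)\int\Pi\dint F=\int [S(q^c(\capa(\type),\type))+\Pi(\type)]\dint F .
\]
This is the baseline objective at weight one. For $\alpha=1$ the fee enters with coefficient zero, so choosing the binding fee changes nothing and the same identity holds. The feasible post-award mechanisms are unchanged, since the fee is sunk before the type is known. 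Hence for every $\alpha\in[0,1]$ the joint problem is \eqref{direct} with $\alpha$ replaced by $1$.

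The second step applies Theorem~\ref{thm_capa_floor} at weight one. I must check that \eqref{FL} at the true $\alpha$ implies \eqref{FL} at $\alpha=1$. This holds because the bound $\alpha/(-\hat qP'(\hat q))$ is increasing in $\alpha$, and $\hat q=q^{LF}(\overline{\type})$ does not depend on $\alpha$. The theorem then yields a full-participation floor $K^*\in(q^{LF}(\overline{\type}),q^c(\underline{\type})]$. The fee in \eqref{optimal-floor-fee} is the binding fee evaluated at the induced profits $\Pi_{K^*}$, which are nonnegative because $\pi(K^*,\overline{\type})\ge 0$.

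The third step applies Proposition~\ref{prop_underuse} at weight one. At $\alpha=1$, condition \eqref{U} reads $-\overline qP'(\overline q)>\max\{\capacost(\overline q)/\overline q,\capacost'(\overline q)\}$. Strict convexity with $\capacost(0)=0$ gives $\capacost(\overline q)/\overline q<\capacost'(\overline q)$, so the maximum equals $\capacost'(\overline q)$ and \eqref{U} reduces exactly to \eqref{U-one}. Proposition~\ref{prop_underuse} then delivers an optimal floor with $K^*>\overline q$ and a positive measure of strictly underutilizing types. Condition \eqref{LF} does not involve $\alpha$, so adding it yields the three positive-length regions directly.

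The main obstacle is the $\alpha=1$ case and its interaction with Proposition~\ref{prop_underuse}. That proposition's proof needs a strict positive right derivative $S'(\overline q)-\alpha\capacost'(\overline q)>0$, which at $\alpha=1$ is exactly \eqref{U-one}. It also needs strict top-type participation, which I obtain from the convexity inequality above. I expect no further difficulty, since the earlier results were stated for all $\alpha\in[0,1]$.
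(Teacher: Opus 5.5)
Your proposal is correct and follows essentially the same route as the paper. The paper likewise uses \autoref{prop_entry_fee_unknown} to reduce the problem to the baseline at weight one, and notes that the density bound in \eqref{FL} only loosens when $\alpha$ is raised to one before applying \autoref{thm_capa_floor}. It then uses $\capacost(\overline q)/\overline q<\capacost'(\overline q)$ to turn \eqref{U-one} into \eqref{U} at weight one before invoking \autoref{prop_underuse}, together with the $\alpha$-independence of \eqref{LF}.
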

\begin{proof}[Proof of \autoref{prop_entry_fee_unknown}]
Fix a feasible post-award mechanism \((\capa,\Pi)\). After paying
\(\fee\), a type \(\type\) that reports \(\widehat\type\) obtains
\[
\pi(\capa(\widehat\type),\type)-\fee,
\]
whereas post-award exit yields \(-\fee\). Hence the fee cancels from
every reporting comparison, every capacity and output comparison, and
every comparison between operation and exit. In particular,
\[
\Pi(\type)-\fee
\geq
\pi(\capa(\widehat\type),\type)-\fee
\]
is equivalent to \eqref{IC}, and
\[
\Pi(\type)-\fee\geq-\fee
\]
is equivalent to \eqref{IR}. Because \(\Pi\) denotes gross post-award
profit, the no-transfer identity \eqref{N} is also unchanged.

Conversely, every mechanism satisfying \eqref{IC}, \eqref{IR}, and
\eqref{N} can be paired with \(\fee=0\). Thus the projection of the
joint feasible set onto post-award mechanisms is exactly the baseline
feasible set.

The firm's expected net payoff from accepting the award is
\[
\intfullsupport\Pi(\type)\dint F(\type)-\fee.
\]
Therefore ex ante participation is equivalent to \eqref{fee-PC}. If
\(\alpha<1\) and \eqref{fee-PC} were slack, the regulator could raise
\(\fee\) slightly without changing any continuation choice or violating
participation. By \eqref{fee-objective}, this would increase the
regulator's payoff by \(1-\alpha>0\) times the increase in the fee.
Hence \eqref{fee-PC} binds at every optimum:
\[
\fee^*
=
\intfullsupport\Pi^*(\type)\dint F(\type).
\]
When \(\alpha=1\), the coefficient on \(\fee\) is zero. Under the
maintained acceptance-at-indifference convention, a binding fee can
therefore be selected without changing either the allocation or the
regulator's payoff.

Finally, substituting the binding fee into \eqref{fee-objective} gives
\[
\intfullsupport
\left[
S(q^c(\capa(\type),\type))+\Pi(\type)
\right]\dint F(\type).
\]
This is precisely the baseline problem \eqref{direct} evaluated at
welfare weight one.
\end{proof}

\begin{proof}[Proof of \autoref{cor:upfront-fee-floor}]
By \autoref{prop_entry_fee_unknown}, the allocation problem is the
baseline problem evaluated at welfare weight one. Let
\(\hat q=q^{LF}(\overline{\type})\). Since $S'(\hat q)=-\hat qP'(\hat q)>0$
and \(\alpha\leq1\), the density inequality in \eqref{FL} implies,
almost everywhere,
\[
-\frac{f'(\type)}{f(\type)}
\leq
\frac{\alpha}{S'(\hat q)}
\leq
\frac{1}{S'(\hat q)}.
\]
The curvature component of \eqref{FL} is independent of the welfare
weight. Thus \eqref{FL} continues to hold when the welfare weight is
raised from \(\alpha\) to one. Applying
\autoref{thm_capa_floor} at welfare weight one yields an optimal floor $\{0\}\cup[K^*, \infty)$ such that 
$K^*\in
(q^{LF}(\overline{\type}),
q^c(\underline{\type})]$.
By \autoref{prop_entry_fee_unknown}, its associated optimal fee is
\[
\fee^*
=
\intfullsupport\Pi_{K^*}(\type)\dint F(\type).
\]

Now let \(\overline q=q^c(\overline{\type})\). Strict convexity of
\(\capacost\), together with \(\capacost(0)=0\), gives
\[
\frac{\capacost(\overline q)}{\overline q}
=
\frac{1}{\overline q}
\int_0^{\overline q}\capacost'(x)\dint x
<
\capacost'(\overline q).
\]
Consequently, \eqref{U-one} implies
\[
-\overline qP'(\overline q)
>
\max\left\{
\frac{\capacost(\overline q)}{\overline q},
\capacost'(\overline q)
\right\}.
\]
This is condition \eqref{U} evaluated at welfare weight one.
Applying \autoref{prop_underuse} at welfare weight one therefore yields
an optimal floor with \(K^*>\overline q\) and a positive measure of
underutilizing types.

Finally, \eqref{LF} does not depend on the welfare weight. When it also
holds, the three-region conclusion follows directly from
\autoref{prop_underuse}.
\end{proof}

\end{document}